\pdfoutput=1
\documentclass[12pt]{article}
\usepackage{graphicx}
\usepackage[sort,numbers]{natbib}  %
\usepackage{xurl} %
\usepackage{comment}
\newcommand{\blind}{0}

\let\oldbibliography\thebibliography
\renewcommand{\thebibliography}[1]{\oldbibliography{#1}
\setlength{\itemsep}{0pt}} 

\usepackage{authblk}
\usepackage{amsmath}
\usepackage{amssymb}
\usepackage{amsthm}
\usepackage{tikz}
\usetikzlibrary{arrows}
\usepackage{float} %
\usepackage{subcaption}

\usepackage{booktabs}
\usepackage{algorithm}
\usepackage{algorithmic}
\usepackage{bbm}
\usepackage{dsfont}
\usepackage{adjustbox}
\usepackage[shortlabels]{enumitem}
\usepackage{hyperref}
\usepackage[hang,flushmargin]{footmisc}
\usepackage{titlesec}

\usepackage{caption}
\makeatletter
\g@addto@macro\normalsize{%
  \setlength\abovedisplayskip{3pt}
  \setlength\belowdisplayskip{3pt}
  \setlength\abovedisplayshortskip{3pt}
  \setlength\belowdisplayshortskip{3pt}
}
\makeatother

\newcommand{\cF}{\mathcal{F}}
\newcommand{\cE}{\mathcal{E}}
\newcommand{\cG}{\mathcal{G}}

\newcommand{\cI}{\mathcal{I}}
\newcommand{\cS}{\mathcal{S}}
\newcommand{\cB}{\mathcal{B}}
\newcommand{\cC}{\mathcal{C}}
\newcommand{\cN}{\mathcal{N}}

\newcommand{\bdelta}{\boldsymbol{\delta}}
\newcommand{\bTheta}{\boldsymbol{\Theta}}

\newcommand{\cO}{\mathcal{O}}

\newcommand{\bbE}{{\mathbb{E}}}

\newcommand{\Data}{{\mathcal{D}_n}}

\newtheorem{theorem}{Theorem}

\newtheorem{lemma}[theorem]{Lemma}

\newcommand{\bbR}{{\mathbb{R}}}
\newcommand{\bbN}{{\mathbb{N}}}

\begin{document}

\def\spacingset#1{\renewcommand{\baselinestretch}%
{#1}\small\normalsize} \spacingset{1}

\titlespacing{\section}{0pt}{0pt}{0pt}
\titlespacing{\subsection}{0pt}{0pt}{0pt}
\titlespacing{\subsubsection}{0pt}{0pt}{0pt}

\if0\blind
{
  \title{\bf Isotonic surrogate modeling for computer experiments with many input variables}
  \small
   \author[1]{Jaehoan Kim}
   \author[1]{Simon Mak\footnote{This work is supported by funding from NSF DMS 2316012 and DE-AC02-05CH11231.}}
    \affil[1]{Department of Statistical Science, Duke University}
  \maketitle
} \fi

\if1\blind
{
  \bigskip
  \bigskip
  \bigskip
  \begin{center}
    {\LARGE\bf Isotonic surrogate modeling for computer experiments with many input variables}
\end{center}

  \medskip
} \fi

\bigskip
\begin{abstract}
Virtual simulators are widely used for studying complex physical phenomena, from particle collisions to rocket propulsion. Such ``computer experiments'' can be highly time-intensive, and a Bayesian surrogate model can be used for efficient emulation with reliable uncertainty quantification. To train accurate surrogates with a limited sample size $n$, recent work has explored the incorporation of monotonicity (or isotonicity) information, which can often be elicited from physical systems. In practical applications with many input variables, however, existing Bayesian isotonic models can face statistical and computational limitations, which may result in worse performance compared to models that do not incorporate isotonicity. We propose a new transformed additive isotonic model (TAIM), which aims to tame this ``curse-of-dimensionality''. TAIM makes use of a flexible transformed additive isotonic modeling framework, which leverages a data-estimated link transformation and a monotone basis model with spike-and-slab priors on basis weights. Prediction-wise, TAIM achieves (up to log factors) a posterior contraction rate of $\mathcal{O}(n^{-1/3})$ when the true black-box function is in a transformed additive isotonic form with mild smoothness conditions. Such a rate does not depend on the input dimension $d$ for terms involving $n$, which softens the effect of dimensionality on posterior predictions. Computation-wise, TAIM allows for efficient posterior inference via a carefully designed Gibbs sampler, where each sampling iteration requires only linear work in $d$. We further present an extension of TAIM that can model potential deviations from transformed additivity. Numerical experiments and two applications show the effectiveness of TAIM for isotonic surrogate modeling with many input variables.

\end{abstract}

\noindent%
{\it Keywords:} Bayesian nonparametrics, computer experiments, isotonic modeling, physics-informed modeling, uncertainty quantification.
\vfill

\newpage
\spacingset{1.7} %

\section{Introduction}

Scientific experimentation is undergoing rapid developments. With breakthroughs in scientific computing, complex phenomena, e.g., particle collisions \cite{ehlers2024bayesian}, aircraft propulsion \cite{miller2024expected} and nuclear detectors \cite{kim2025efficient}, can now be reliably simulated in a virtual environment. These ``computer experiments'' \cite{gramacy2020surrogates} are quickly replacing physical experiments in broad scientific disciplines. To faithfully capture reality, however, each computer experiment run can be highly complex and time-intensive, requiring thousands of CPU hours to perform on supercomputing systems \cite{ji2023graphical}. This is exacerbated by the high-dimensional nature of modern scientific problems, where many input variables are investigated over a high-dimensional domain $\mathcal{X} \subseteq \mathbb{R}^d$. Given limited computing resources, the exploration of the simulated response surface $f(\cdot)$ over this high-dimensional domain $\mathcal{X}$ is a challenging task.

A promising solution is surrogate modeling \cite{gramacy2020surrogates}. The idea is to first run the simulator at a set of designed input points $\mathbf{x}_1, \cdots, \mathbf{x}_n \in \mathcal{X}$, then use the simulated data $\{f(\mathbf{x}_i)\}_{i=1}^n$ to train a probabilistic predictive model that ``emulates'' $f$ with uncertainty at untested input points. A popular surrogate model is the Gaussian process (GP; \cite{sacks1989design,mitchell1992bayesian}). GP surrogates provide a flexible model with reliable uncertainty quantification (UQ) and closed-form predictive equations that facilitate efficient downstream analysis, e.g., inverse problems \cite{everett2021multisystem} and optimization \cite{jones1998efficient}. Recent work has explored deep GP \cite{sauer2023active} and multi-physics GP \cite{li2023additive} surrogates. Despite this body of work, however, GP surrogates can suffer from a ``curse-of-dimensionality'' \cite{van2008rates}, in that their predictions can deteriorate quickly as the number of inputs $d$ increases. Given a time-intensive computer simulator and thus a limited sample size $n$, GP surrogates can perform poorly for high-dimensional problems.

A saving grace is that, in many physical applications, one can elicit known ``physics'' for enhancing the surrogate model. This field of ``physics-informed machine learning'' \cite{willard2020integrating} is a promising emerging area. For GPs, recent work has explored the incorporation of mechanistic equations \cite{Wheeler14,chen2022apik}, boundary conditions \cite{ma2025constrained,ding2019bdrygp}, and isotonicity. We focus on the latter in this work. Here, isotonicity refers to the monotone (increasing\footnote{In this paper, ``monotone'' refers to ``monotone increasing'' for brevity. One can easily model for monotone decreasing behavior by applying the transformation $x_l \leftarrow 1 - x_l$ to the appropriate inputs.}) nature of $f$ for a ``monotone set'' of inputs $\mathcal{I} \subseteq \{1, \cdots, d\}$, i.e., for any choice of inputs $\mathbf{x} = (\mathbf{x}_{\mathcal{I}},\mathbf{x}_{\mathcal{I}^c})$:
\begin{equation}
f(\mathbf{x}_{\mathcal{I}},\mathbf{x}_{\mathcal{I}^c}) \leq f(\mathbf{x}_{\mathcal{I}}',\mathbf{x}_{\mathcal{I}^c}) \quad \text{whenever} \quad \mathbf{x}_{\mathcal{I}} \preceq \mathbf{x}_{\mathcal{I}}',
\label{eq:monotone}
\end{equation}
where $\mathbf{x} \preceq \mathbf{x}'$ implies elementwise $\leq$. Such information can often be deduced from physical intuition; e.g., in simulating a dropping ball, its height from the ground $f(x)$ will be monotone decreasing as time $x$ increases. Similar isotonicity properties can be elicited in complex computer models (see, e.g., \cite{bayarri2009predicting}). There is a body of work in the computer experiments literature on incorporating isotonicity for GP surrogates: \cite{golchi2015monotone,wang2016estimating} make use of conditioning on the derivative process being non-negative at discrete points, whereas \cite{maatouk2017gaussian,maatouk2025efficient} employ a basis expansion approach with appropriately constrained basis weights. 

There is also a rich literature on the broad topic of isotonic regression. Frequentist approaches are explored, e.g., in \cite{han2019isotonic,deng2020isotonic,fokianos2020integrated}, although these may not be suitable for our surrogate modeling problems where Bayesian UQ is often needed \cite{everett2021multisystem}. Bayesian approaches largely fall into two classes. The first is \textit{projection-based} approaches, where one places a prior stochastic process on $f$, then projects its posterior onto the space of isotonic functions. \cite{lin2014bayesian} makes use of this projection approach with a GP, \cite{saarela2011method,chakraborty2021convergence} employ a piecewise constant prior, and \cite{chakraborty2021convergence,wang2023posterior} prove posterior contraction rates in the univariate ($d=1$) and multivariate settings, respectively. The second is \textit{basis expansion} approaches, where one makes use of monotone basis functions with non-negative basis weights. \cite{scott2015nonparametric} uses such an approach with smoothing and regression splines, \cite{maatouk2017gaussian} utilizes correlated basis weight priors derived from a GP, \cite{ray2020efficient,maatouk2025efficient} explore Markov chain Monte Carlo (MCMC) algorithms for efficient posterior sampling, and \cite{zhou2024mass} investigates their posterior contraction rates in the univariate setting.

Existing Bayesian isotonic models, however, may face key limitations for \textit{high-dimensional} surrogate modeling. Computation-wise, these models typically require some form of discretization over the domain $\mathcal{X}$: projection-based approaches require sampling posterior paths over a grid on $\mathcal{X}$, and basis expansion approaches employ basis functions parametrized by knots over a grid on $\mathcal{X}$. Thus, as dimension $d$ increases, posterior sampling for both approaches can become \textit{prohibitively costly}. This is limiting for surrogate models, which should serve as an \textit{efficient} emulator for the expensive computer simulator. Prediction-wise, existing Bayesian isotonic models may suffer from a \textit{curse-of-dimensionality}, in that their prediction quality (in terms of posterior contraction rates) deteriorates rapidly in dimension $d$ \cite{wang2023posterior}. This is again limiting for surrogate models, where only a small sample size can be afforded. There is some work on leveraging an additive structure for isotonic modeling \cite{bacchetti1989additive,mammen2007additive}, but such structure can be too restrictive in applications \cite{goldstein2009common,shao2008genetic}. In realistic applications with many input variables, such limitations can result in worse performance compared to models that do not incorporate isotonicity, as we shall see later.

We propose a new Bayesian isotonic surrogate model, called the transformed additive isotonic model (TAIM), which addresses these limitations. TAIM adopts the model $f(\mathbf{x}) = g\{\sum_{l=1}^d h_l(x_l)\}$, where $g$ is a \textit{data-estimated} monotone link function that transforms the latent additive form $\sum_{l=1}^d h_l(x_l)$. Here, monotonicity of $f$ over inputs $l \in \mathcal{I}$ is ensured via monotone functions on $\{h_l\}_{l \in \mathcal{I}}$ and $g$. The use of a nonlinear transformation $g$ on an additive form is inspired by the widely-used generalized additive models \cite{hastie2017generalized} and Kolmogorov-Arnold networks \cite{liu2025kan}, which leverage similar transforms for flexible non-additive modeling in high dimensions. In contrast with the generalized additive shape-constrained model in \cite{chen2016generalized}, the monotone link $g$ here is neither fixed nor specified from a noise model, but rather \textit{estimated} from data. TAIM models $g$ and $\{h_l\}_{l=1}^d$ via basis expansions, with monotone basis functions and spike-and-slab basis weight priors \cite{ishwaran2005spike} for $g$ and $\{h_l\}_{l \in \mathcal{I}}$. We prove that TAIM achieves a posterior contraction rate of $\mathcal{O}(n^{-1/3})$ (up to log factors) when the true function $f_0$ is in a transformed additive form with mild smoothness conditions. Such a rate ``tames'' the curse-of-dimensionality for existing Bayesian isotonic models, whose rates can deteriorate rapidly in dimension $d$ \cite{wang2023posterior}. We then develop an efficient Gibbs sampler, where posterior samples can be obtained in work linear in $d$, enabling efficient Bayesian inference in high dimensions. We further extend this framework to model for potential deviations from transformed additivity. Finally, we investigate the performance of TAIM compared to existing models in numerical experiments and in two surrogate modeling applications.

The paper is organized as follows. Section \ref{sec: background} provides background and motivation. Section \ref{sec: taim} presents our model and investigates its posterior contraction rates. Section \ref{sec: computation} presents an efficient Gibbs sampler, and outlines an extension for modeling potential deviations from transformed additivity. Sections \ref{sec: simulation} and \ref{sec: application} describe our numerical experiments and applications. Section \ref{sec: conclusion} concludes the paper.

\section{Background and Motivation}
\label{sec: background}

For this paper, we presume that data are collected from the model:
\begin{equation}
y_i = f(\mathbf{x}_i) + \epsilon_i, \quad \epsilon_i \overset{i.i.d.}{\sim} \mathcal{N}(0,\sigma^2_\epsilon), \quad i = 1, \cdots, n.
\label{eqn: data}
\end{equation}
Here, $f$ is the unknown black-box function, $\{\mathbf{x}_i\}_{i=1}^n \subseteq \mathcal{X}$ are the designed input points, $\{\epsilon_i\}_{i=1}^n$ are the noise terms, and the domain $\mathcal{X} = [0,1]^d$ is taken as the unit hypercube. For computer experiments, this corresponds to the stochastic setting where simulated outputs are corrupted by noise \cite{gramacy2020surrogates}. We adopt such a setting as motivated by our applications in Section \ref{sec: application} and to align with existing literature on isotonic models; the deterministic setting is recovered with $\sigma^2_\epsilon = 0$. As mentioned earlier, our focus is on \textit{Bayesian} isotonic models that provide UQ for surrogate modeling. We first review such existing models, then inspect their potential limitations in high dimensions via a motivating experiment. 

\subsection{Projection-Based Approaches}

One class of Bayesian isotonic models is projection-based approaches. Here, one adopts a prior stochastic process on $f$, then projects its posterior distribution onto the space of isotonic functions. \cite{lin2014bayesian} employs such an approach with a GP prior on $f$. One first obtains the GP posterior by conditioning on training data $\mathcal{D}_n = \{(\mathbf{x}_i,y_i)\}_{i=1}^n$, then performs the following $L^2$-projection on sample paths $\tilde{f}$ from this posterior:
\begin{equation}
        \arg\min_{g \in \mathcal M}
    \int_{\mathcal X} \{g(\mathbf{x})-\tilde f(\mathbf{x})\}^2 d\mathbf{x},
    \label{eq:projection_operator}
\end{equation}
where $\mathcal{M}$ is the space of isotonic functions on $\mathcal{X}$. The projected sample paths then serve as posterior samples on the monotone-constrained process. \cite{saarela2011method,chakraborty2021convergence} explore related projection approaches using a piecewise constant prior model on $f$. 

Computation-wise, one challenge is that the projection step \eqref{eq:projection_operator} needs to be performed for many posterior sample draws over $\mathcal{X} = [0,1]^d$, which can be very costly in high dimensions. Specifically, posterior paths need to be sampled over a grid on $\mathcal{X}$ (with discretization level $N$ over each input), then projected by iteratively performing univariate projection steps over each monotone direction, using the pooled adjacent violators algorithm \cite{barlow1972statistical}. One can show that sampling each projected path requires $\mathcal{O}(N^{3d})$ work, which makes posterior sampling prohibitively costly in high dimensions (i.e., large $d$) with fine discretization (i.e., large $N$). In practice, a coarse discretization (i.e., small $N$) is thus needed for computational tractability in high dimensions. This can, however, considerably degrade model performance, resulting in \textit{worse} predictions compared to models that do not incorporate isotonicity, as we shall see later. 

Prediction-wise, such approaches may also suffer from a curse-of-dimensionality. \cite{wang2023posterior} showed that, with a piecewise constant prior on $f$, this projection-based model yields a posterior contraction rate of $\mathcal{O}(n^{-1/(2+d)})$ when the true function $f_0$ is in the space of monotone functions on $\mathcal{X}$. This can be limiting for high-dimensional surrogate modeling: to achieve a prediction error of $\epsilon > 0$, one needs a sample size of $n = \mathcal{O}(\epsilon^{-(2+d)})$, which is prohibitive for large $d$ given the cost of each experiment run. An alternate approach is thus needed for flexible isotonic modeling that softens this curse-of-dimensionality. 

Another related approach is the monotone-constrained GPs in \cite{wang2016estimating, golchi2015monotone}, which condition on $f$ having non-negative derivatives at finite ``derivative points'' on $\mathcal{X}$. This can be seen as a ``projection'' of the posterior distribution onto these finite points. Such models have shown promise for low-dimensional surrogate modeling applications \cite{wang2016estimating, golchi2015monotone}. However, they are similarly computationally expensive to fit for large $d$, since the number of derivative points needed to ensure monotonicity over the full domain $\mathcal{X}$ grows exponentially in $d$.

\subsection{Basis Expansion Approaches}

Another class of Bayesian isotonic models is basis expansion approaches. In the univariate setting of $d=1$, such models typically take the form:
\begin{equation}
    f(x) = \beta_0 + \sum_{j=1}^{N} \beta_j \psi_{j}(x),
    \label{eq:basis}
\end{equation}
where $\{\psi_j(x)\}_{j=1}^N$ are the monotone basis functions, and $\{\beta_j\}_{j=1}^N$ are their basis weights. Here, $N$ controls the number of knots, where more knots permit greater model flexibility. Monotonicity of $f$ is ensured with non-negative priors on basis weights. The modeling form \eqref{eq:basis} naturally extends to the multivariate setting of $d>1$ (see \cite{maatouk2017gaussian}), using tensor-product basis functions ${\{\phi_{j_1}(x_1) \cdots \phi_{j_d}(x_d)\}_{j_1=1}^{N_1} \cdots }_{j_d=1}^{N_d}$ (where $N_l$ controls the number of knots in dimension $l$) and appropriate constraints on basis weights to ensure isotonicity. \cite{maatouk2017gaussian} makes use of integrated hat basis functions with correlated basis weight priors derived from a GP. \cite{ray2020efficient,maatouk2025efficient} investigate MCMC algorithms for posterior sampling of such models. 

These models, however, suffer from similar limitations on computation and prediction in high dimensions. Computation-wise, such models require $\mathcal{O}(M^3)$ work per MCMC iteration, where the number of basis functions $M$ grows as $M = \prod_{l=1}^d N_l$. In $d=10$ dimensions, using $N_l=10$ in each dimension results in $M=10^{10}$ basis functions, which makes posterior sampling intractable. Using a smaller $N_l$ allows for computational tractability, but may result in worse predictions than if isotonicity were not incorporated, as we see later. Prediction-wise, there is little work to our knowledge on posterior contraction rates of such basis isotonic models for $d > 1$. However, related work in the frequentist setting (e.g., \cite{han2019isotonic}) suggests that a similar curse-of-dimensionality may be present for prediction.

\subsection{Motivating Experiment}

To inspect these limitations in high dimensions, we consider an application on the surrogate modeling of an aircraft wing simulator. Figure \ref{fig:Metric comparison 11D} (left) visualizes this set-up: the goal is to analyze the deflection on the main wing spar, which is subject to loads at ten stations along the spar. The virtual simulator has $d=11$ input variables (dynamic pressure $x_1$, lift multipliers at each station $x_2, \cdots, x_{11}$), and outputs the maximum vertical deflection over the stations. Here, physical knowledge suggests that the output $f$ is monotone over all $d=11$ inputs, i.e., $\mathcal{I} = \{1, \cdots, 11\}$. Further details are provided later in Section \ref{sec: application}.

We consider three Bayesian surrogate model benchmarks. The first is a standard GP with an anisotropic squared-exponential kernel, with parameters estimated via maximum likelihood. The second is the projection-based isotonic GP (Proj-GP) in \cite{lin2014bayesian}. The third is the basis expansion isotonic GP (Basis-GP) in \cite{maatouk2017gaussian}. For fair comparison, the grid resolutions for the latter two are set such that MCMC sampling does not exceed a time constraint of 30 minutes, which is already time-intensive for surrogate modeling. We compare these benchmarks with our TAIM and TAAIM models (introduced next), subject to the same time constraint. All models are trained on the same $n=10d=110$-point Latin hypercube design (LHD; \cite{gramacy2020surrogates}). Performance is evaluated on test MSE (for point predictions) and CRPS (a scoring rule for probabilistic predictions; \cite{gneiting2007strictly}). Section~\ref{ssec: wing spar} gives further details.

Figure \ref{fig:Metric comparison 11D} (right) shows the test log-MSE and CRPS boxplots over 20 replications. For both metrics, the existing isotonic models perform worse (in median error) than the standard GP, which does not incorporate isotonicity. A likely reason is that, given the time constraint on MCMC sampling, both models required a coarse discretization for high-dimensional model fitting, which can considerably hurt predictive performance. Our proposed models address this via a transformed additive framework that facilitates flexible and efficient isotonic modeling in high dimensions. As such, our models can better integrate isotonicity information here, resulting in better predictions than the standard GP.

\begin{figure}
    \centering
    \includegraphics[width=0.9\linewidth]{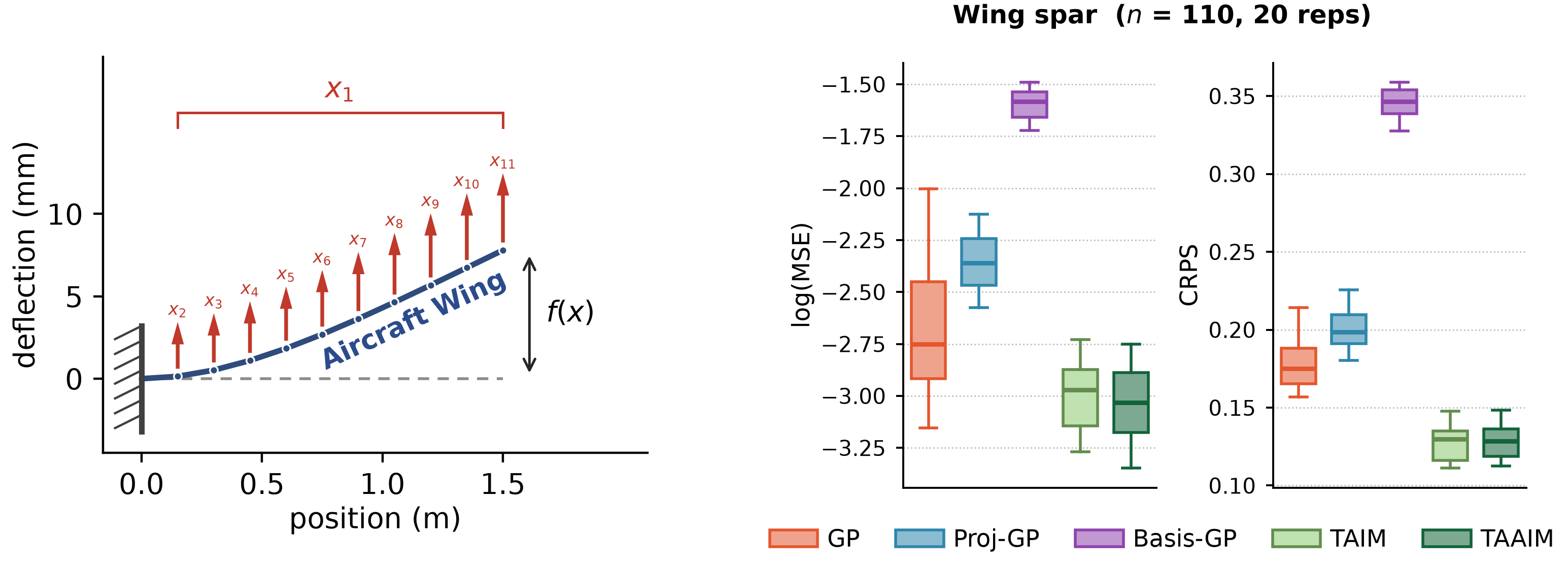}
    \caption{(Left) Visualizing the aircraft wing spar simulator set-up. (Right) Boxplots of the test log-MSE and CRPS for the aircraft wing application over 20 replications.}
    \label{fig:Metric comparison 11D}
\end{figure}

\section{Transformed Additive Isotonic Model}
\label{sec: taim}

\noindent We first present our TAIM modeling framework, then investigate its theoretical properties.

\subsection{Modeling Framework}\label{ssec: model}

We presume the data are generated from \eqref{eqn: data}, where $f$ is modeled as:
\begin{equation}\label{eqn: generalized additive model}
f(\mathbf{x}) = g\left\{\sum_{l=1}^d h_l(x_l)\right\}.
\end{equation}
Here, the additive functions $h_l$ are constrained to be monotone (increasing) if input $l$ is in the monotone set $\mathcal{I}$. The link transformation $g$ (to be estimated from data) is also constrained to be monotone. These constraints ensure that the desired isotonicity property \eqref{eq:monotone} is satisfied. One can easily account for monotone \textit{decreasing} behavior of an input $l$ by performing the transform $x_l \leftarrow 1 - x_l$ prior to fitting the above model.

This modeling form, i.e., the use of a nonlinear transformation $g$ on an additive structure, is motivated by a rich literature in statistics, dating back to the Box-Cox transformation \cite{box1964analysis}. Such a form is employed, e.g., in generalized additive models (GAMs; \cite{hastie2017generalized}), which are widely used for flexible and interpretable high-dimensional modeling. A key difference between a GAM and our model \eqref{eqn: generalized additive model} (besides the incorporation of isotonicity) is that the link $g$ in a GAM is typically presumed to be \textit{known}, e.g., from an underlying noise model. In our model, $g$ is an \textit{unknown} transformation to be estimated from data. Similar transformed additive models have been used in neural networks \cite{horowitz2007rate}, in Kolmogorov-Arnold networks \cite{liu2025kan} and for computer experiments \cite{lin2020transformation}, but to our knowledge, they have not been leveraged for high-dimensional isotonic modeling. Our model also contrasts with the generalized additive shape-constrained model in \cite{chen2016generalized}, for which $g$ is again presumed to be known.

For an input $l \in \mathcal{I}$, we adopt the following basis model for the monotone function $h_l$:
\begin{equation}
h_l(x)=\sum_{j=1}^{N_l}{\beta_j^{(l)}}{\psi_j^{(l)}}(x), \quad \quad l \in \cI, \quad \quad x \in [0,1].
\label{eqn: uniadd}
\end{equation}
Here, $\{{\psi_j^{(l)}}(x)\}_{j=1}^{N_l}$ are the basis functions for the $l$-th input, and $\{{\beta_j^{(l)}}\}_{j=1}^{N_l}$ their corresponding basis weights. We use piecewise linear monotone basis functions of the form:
\begin{equation}\label{eqn: psi definition}
    \psi_j^{(l)}(x) = \varrho(N_l x - j + 1), \quad \quad \varrho(z) = \min\{\max(z, 0), 1\}, \quad \quad j = 1, \cdots, N_l,
\end{equation}
where $\varrho(\cdot)$ ``clamps'' the function to be within $[0,1]$. The number of basis functions $N_l > 0$ controls the modeling resolution. Note that the basis function $\psi_j^{(l)}$ is centered at the knot $(2j - 1)/(2N_l)$, for $j = 1, \cdots, N_l$. Figure \ref{fig:monotone basis} (left) visualizes these basis functions for $N_l = 5$.

\begin{figure}[t]
    \centering
    \includegraphics[width=0.72\linewidth]{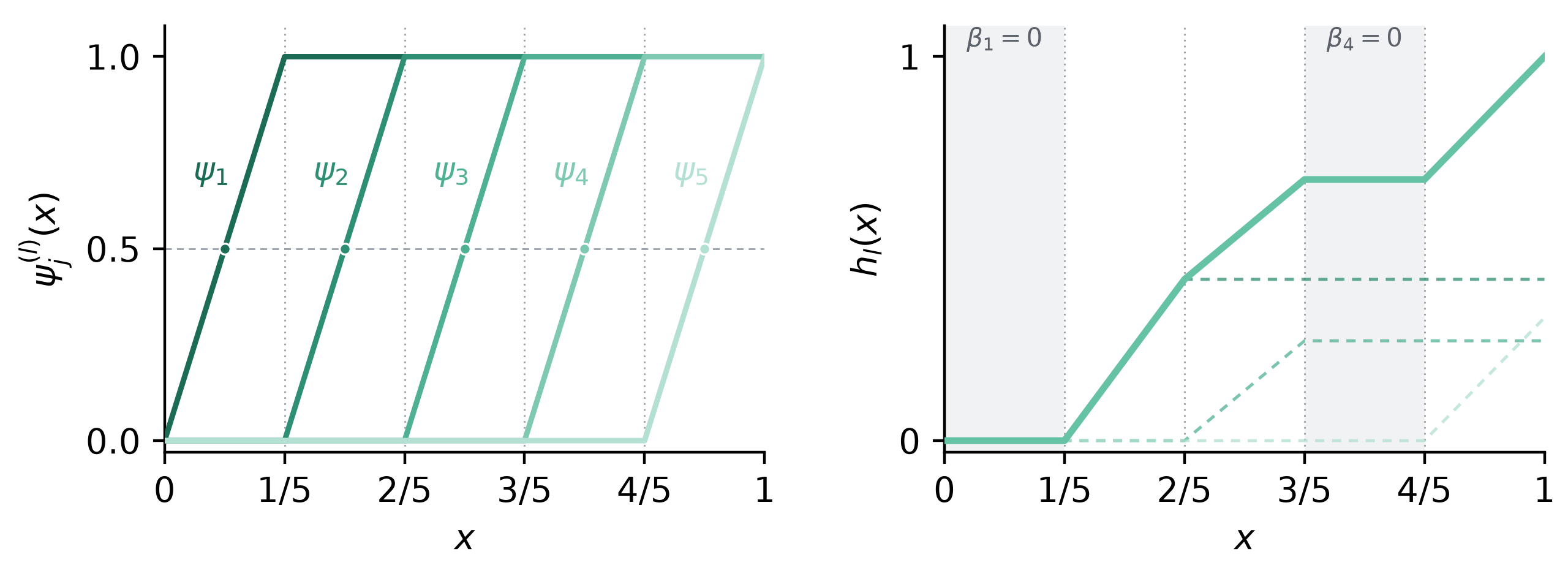}
    \caption{(Left) Visualizing the monotone basis functions \eqref{eqn: psi definition} with
    $N_l = 5$. (Right) Visualizing a monotone function modeled by these basis functions with non-negative basis weights. Dashed lines show the individual contributions of each basis term.}
    \label{fig:monotone basis}
\end{figure}

To enforce monotonicity on $h_l(x)$ for $l \in \cI$, its basis weights need to be non-negative (see Figure \ref{fig:monotone basis} right). We thus assign independent spike-and-slab priors \cite{ishwaran2005spike} of the form:
\begin{equation}
     \beta^{(l)}_j = I_{j}^{(l)} Z_j^{(l)}, \quad I_{j}^{(l)} \overset{\text{indep.}}{\sim} \text{Bern}\Big(\frac{\zeta_l}{N_l} \Big), \quad Z_j^{(l)} \overset{\text{indep.}}{\sim} \mathcal{N}^+(0, \tau_l^2), \quad j = 1, \cdots, N_l, \quad l \in \cI.
     \label{eqn: weightprior1}
\end{equation}
Here, each weight $\beta^{(l)}_j$ is the product of an indicator variable $I_{j}^{(l)}$ and a magnitude variable $Z_j^{(l)}$. The indicator $I_{j}^{(l)}$ follows a Bernoulli distribution with probability $\zeta_l/N_l$, where $\zeta_l$ is a prior hyperparameter. The magnitude $Z_j^{(l)}$ follows a half-normal distribution, i.e., a truncated normal distribution on $(0,\infty)$, with variance hyperparameter $\tau_l^2 > 0$. In other words, each weight is zero with probability $1-\zeta_l/N_l$, and half-normally-distributed (i.e., positive) otherwise. These spike-and-slab priors on basis weights ensure that $h_l$ has sufficient prior probability (specifically, ``small ball probability'' \cite{van2008rates}) over the space of monotone functions, which we need for proving the contraction rates later in Section \ref{sec: theory}.

For an input $l \notin \cI$, $h_l(x)$ need not be monotone. We adopt the following basis model:
\begin{equation}\label{eqn: uniadd non monotone}
    h_l(x) = \sum_{j=1}^{N_l} \beta_j^{(l)} \phi_j^{(l)}(x), \quad \quad l \notin \cI, \quad \quad x \in [0,1],
\end{equation}
where $\{\phi_j^{(l)}(x)\}_{j=1}^{N_l}$ 
are the ``hat'' basis functions $\phi^{(l)}_j(x) = \max\{1 - |(N_l - 1)x - (j-1)|,0\}$.
The basis weights follow the half-normal priors $\beta^{(l)}_j \overset{\text{indep.}}{\sim} \mathcal{N}^+(0, \tau_l^2)$, where $\tau^2_l>0$ is a variance hyperparameter. Such hat basis functions and priors have been broadly used in unconstrained basis modeling \cite{de2012adaptive,maatouk2017gaussian, sanz2022finite, kim2025adaptive}; alternate choices of basis and priors can also be used in \eqref{eqn: uniadd non monotone} depending on modeling preference. Here, no intercept terms are needed for $h_l$ in \eqref{eqn: uniadd} and \eqref{eqn: uniadd non monotone}, since a global intercept is included for the link function model below.

Finally, we adopt the following basis model on the monotone link function $g$:
\begin{equation}\label{eqn: monotone link function}
    g(z)= {\alpha_0}+\sum_{j=1}^{N_g}{\alpha_j}{\psi_j^{(g)}}(z),
\end{equation}
where $\{{\psi_j^{(g)}}\}_{j=1}^{N_g}$ are the same basis functions used in \eqref{eqn: psi definition} for monotone modeling. Similar spike-and-slab priors are assigned on basis weights, i.e.:
\begin{equation}
    \alpha_j = I_{j}^{(g)} Z_j^{(g)}, \quad I_{j}^{(g)} \overset{\text{indep.}}{\sim} \text{Bern}\Big(\frac{\zeta_g}{N_g} \Big), \quad Z_j^{(g)} \overset{\text{indep.}}{\sim} \mathcal{N}^+(0, \tau_g^2), \quad j = 1, \cdots, N_g.
    \label{eqn: weightprior2}
\end{equation}
The intercept $\alpha_0$ is assigned the prior $\alpha_0 \sim \mathcal{N}(\mu_0, \tau_0^2)$. Note that the domain of $g$ in \eqref{eqn: monotone link function} can technically extend beyond $[0,1]$, although $g$ will be flat outside of this interval. For effective non-additive modeling, $\sum_{l=1}^d h_l(x_l)$ should thus be concentrated on $[0,1]$; we ensure this with a careful specification of the variance hyperparameters $\{\tau_l^2\}_{l=1}^d$ (see Section \ref{ssec: alg}).

Given the above model specification for TAIM, one can sample from the posterior distribution of $f$ via MCMC sampling on its model parameters. We present later in Section \ref{ssec: alg} an efficient MCMC algorithm that exploits closed-form updates within a Gibbs sampler. Before this, we first prove and investigate a posterior contraction rate for TAIM in the specific case where all inputs are in the monotone set, i.e., $\cI= \{1, \cdots, d\}$.

\subsection{Posterior Contraction Rates}

\label{sec: theory}

We first define some notation. Let $f_0$ be the true underlying function, which we presume lies in a function space $\mathcal{F}$. Following Section 1.5.1 of \cite{van2011information}, for a given choice of $\alpha = \lceil \alpha \rceil  - 1 + \eta > 0$, define $C^\alpha[0,1]$ as the $\alpha$-H\"{o}lder function space, consisting of functions on $[0,1]$ that are $(\lceil \alpha \rceil - 1)$-times continuously differentiable and whose highest-order derivatives are H\"{o}lder continuous with exponent $\eta \in (0,1]$. Let $\mathcal{D}_n = \{(\mathbf{x}_i,y_i)\}_{i=1}^n$ denote the training data, and let $\bbE_{f_0}(\cdot)$ denote the expectation under the sampling model \eqref{eqn: data} given the true function $f_0$, with variance $\sigma^2_{\epsilon}$ known and design points fixed. Finally, let $\Pi( \cdot | \mathcal{D}_n)$ denote the posterior distribution of $f$ under a considered Bayesian model, e.g., the TAIM model.

Next, define the empirical $L_2$-norm between two functions $f$ and $f_0$ as:
\begin{equation}\label{eqn: empirical L2 norm}
    \|f - f_0\|_n^2 = \frac{1}{n} \sum_{i=1}^n \big(f(\mathbf{x}_i) - f_0(\mathbf{x}_i) \big)^2.
\end{equation}
This norm is widely used for posterior contraction rates in the Bayesian nonparametrics literature \citep{ghosal2007convergence, chakraborty2021convergence, zhou2024mass}, thus we adopt this in our analysis. Under i.i.d. design points, Section 5 of \cite{geer2000empirical} shows this norm is directly related to the functional $L_2$-norm between $f$ and $f_0$. 

In what follows, we wish to prove \textit{posterior contraction rates} of the following form. For a given function $f_0 \in \mathcal{F}$, we wish to find a $K > 0$ such that: 
\begin{equation}\label{eqn: definition of posterior contraction rate - emp L2 norm}
    \lim_{n \rightarrow \infty} \bbE_{f_0}\Pi \big(\|f - f_0\|_n > K\epsilon_n \mid \mathcal{D}_n \big) = 0.
\end{equation}
If the above holds, $\epsilon_n$ is called a \textit{posterior contraction rate} in the Bayesian nonparametrics literature \cite{van2008rates}. Equation \eqref{eqn: definition of posterior contraction rate - emp L2 norm} states that, as $n$ grows, the posterior distribution of $f$ concentrates on a shrinking ball of radius $K\epsilon_n$ around the true function $f_0$. The rate $\epsilon_n$ thus quantifies how quickly the posterior concentrates around $f_0$. We first prove a posterior contraction rate in the one-dimensional (1-d) setting, which shows our 1-d spike-and-slab basis model \eqref{eqn: monotone link function} and \eqref{eqn: weightprior2} achieves the 1-d minimax rate for isotonic regression. We then prove a posterior contraction rate for TAIM in the multi-dimensional setting, which shows that our model softens the curse-of-dimensionality faced by existing contraction rates.

\subsubsection{1-d Setting}
Consider first the prediction of a 1-d monotone function $g_0(x)$. Suppose we adopt on $g(x)$ the basis model \eqref{eqn: monotone link function} (with domain set as $[0,1]$), with discretization level $N$ and the spike-and-slab basis weight priors \eqref{eqn: weightprior2} with fixed hyperparameters. We can show the following 1-d posterior contraction rate:
\begin{theorem}
Suppose the true function $g_0$ is monotone, with $g_0 \in C^\alpha[0,1]$ for some $\alpha > 0$. Let $g$ follow the basis model \eqref{eqn: monotone link function} with discretization level $N$ and priors \eqref{eqn: weightprior2}, where $\log N = \cO(\log n)$. If $N \ge C_1 \epsilon_n^{-1/\min(\alpha, 1)}$ for some constant $C_1> 0$, there exists a $K>0$ such that:
  \begin{equation}
        \lim_{n \rightarrow \infty} \bbE_{g_0}\Pi \big(\|g - g_0\|_n > K\epsilon_n \mid \Data \big) = 0,
    \label{eqn: 1drate}
    \end{equation}
\label{thm: 1d}
for $\epsilon_n = n^{-1/3} (\log n)^{1/3}$.
\end{theorem}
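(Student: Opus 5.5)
The proof will use the general posterior contraction theorem for independent, non-identically distributed Gaussian regression with fixed design, in the form of Ghosal and van der Vaart (2007). With $\sigma_\epsilon^2$ known, the average Kullback--Leibler divergence and its second moment between $P_g$ and $P_{g_0}$ are both bounded by multiples of $\|g-g_0\|_n^2$. It therefore suffices to find sieves $\mathcal{G}_n$ satisfying three conditions.

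The conditions are:
\begin{itemize}
\item[(i)] Prior mass: $\Pi(\|g-g_0\|_\infty\le \epsilon_n)\ge e^{-c n\epsilon_n^2}$.
\item[(ii)] Entropy: $\log N(\epsilon_n,\mathcal{G}_n,\|\cdot\|_n)\lesssim n\epsilon_n^2$.
\item[(iii)] Remaining mass: $\Pi(\mathcal{G}_n^c)\le e^{-(c+4)n\epsilon_n^2}$.
\end{itemize}
Gaussian likelihood-ratio tests for $\|\cdot\|_n$-balls then supply the exponentially consistent tests. Note that $n\epsilon_n^2=n^{1/3}(\log n)^{2/3}$.

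\textbf{Prior mass.} First I approximate $g_0$ by the piecewise-linear interpolant $g^*$ on the grid $\{j/N\}$. The basis $\psi_j$ increases linearly exactly on $[(j-1)/N,j/N]$, so $g^*=g_0(0)+\sum_j \beta_j^*\psi_j$ with $\beta_j^*=g_0(j/N)-g_0((j-1)/N)\ge 0$. Hölder smoothness gives $\|g^*-g_0\|_\infty\lesssim N^{-\min(\alpha,1)}\le \epsilon_n/C_1$, which is where the condition on $N$ enters.

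Next I sparsify $g^*$. Since $g_0$ is monotone and bounded, $\sum_j\beta_j^* = g_0(1)-g_0(0)=:V<\infty$. Merging consecutive increments into blocks of total mass about $\epsilon_n$ yields a monotone piecewise-linear $\tilde g$ that uses only $s\lesssim V/\epsilon_n+1\asymp \epsilon_n^{-1}$ nonzero coefficients, with $\|\tilde g-g^*\|_\infty\lesssim\epsilon_n$. One way to do this is to keep, in each block, a single knot that carries the block's total increment; the cumulative sums then agree at block boundaries and differ by at most $\epsilon_n$ inside a block.

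The prior probability of selecting exactly this support is $(\zeta/N)^s(1-\zeta/N)^{N-s}\ge e^{-C s\log N}$. Given the support, the half-normal draws satisfy $\sum_{j\in S}|Z_j-\tilde\beta_j|\le\epsilon_n$ with probability at least $(c\,\epsilon_n/s)^s e^{-CV^2}$, using boundedness of the $\tilde\beta_j$ and a density bounded below on compacts. The intercept prior contributes probability about $\epsilon_n$. Since $\|\sum_j b_j\psi_j\|_\infty\le\sum_j|b_j|$, the total log prior mass is $\gtrsim -s\log N - s\log(s/\epsilon_n)\asymp -\epsilon_n^{-1}\log n$. Matching $\epsilon_n^{-1}\log n\asymp n\epsilon_n^2$ gives $\epsilon_n^3\asymp \log n/n$, which is exactly $\epsilon_n=n^{-1/3}(\log n)^{1/3}$.

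\textbf{Sieve and entropy.} Let $\mathcal{G}_n$ consist of functions $\alpha_0+\sum_j\alpha_j\psi_j$ with at most $\bar s=L\epsilon_n^{-1}$ nonzero coefficients, $|\alpha_0|\le M_n$ and $\alpha_j\le M_n$, where $M_n=n$. The entropy is bounded by choosing the support, at most $\bar s\log N\lesssim\epsilon_n^{-1}\log n$, plus an $\epsilon_n/\bar s$-grid of the coefficients in $\ell_1$, at most $\bar s\log(M_n\bar s/\epsilon_n)\lesssim \epsilon_n^{-1}\log n$. Both terms are $\lesssim n\epsilon_n^2$.

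\textbf{Remaining mass.} The complement has small prior mass for two reasons. A binomial tail bound gives $P(\mathrm{Bin}(N,\zeta/N)>\bar s)\le e^{-c\bar s\log(\bar s/\zeta)}\le e^{-c' L\epsilon_n^{-1}\log n}$, which beats $e^{-(c+4)n\epsilon_n^2}$ once $L$ is large. Gaussian tails on the coefficients give at most $N e^{-M_n^2/2\tau^2}$, which is negligible.

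\textbf{Main obstacle.} I expect the delicate step to be the sparse approximation in the prior-mass bound. Using all $N$ coefficients would cost $N\log n$, and $N$ may be as large as a power of $n$. The argument therefore relies on monotonicity (bounded total variation) rather than smoothness to get $s\asymp\epsilon_n^{-1}$ active knots at error $\epsilon_n$. Smoothness is used only so that the grid interpolant at resolution $N$ is already $\epsilon_n$-close to $g_0$. The approximation must also keep all coefficients nonnegative, so that it stays in the support of the half-normal slab.
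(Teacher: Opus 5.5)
Your proposal is correct and follows essentially the same route as the paper: a sparse block approximant with $\mathcal{O}(\epsilon_n^{-1})$ active knots, each carrying a block increment. You then bound the prior mass via an exact-support spike event together with a box event on the slab magnitudes, and use a sieve limiting the sparsity to $\asymp \epsilon_n^{-1}$ and the coefficients to a power of $n$, all fed into a fixed-design Gaussian contraction theorem. The differences are cosmetic. Your $\ell_1$ sub-box event for the magnitudes costs an extra $s\log s \asymp \epsilon_n^{-1}\log n$, which is harmless, where the paper uses the unit-triangular partial-sum event. Also, your approximation error is only $\lesssim \epsilon_n$ (the constant depends on $C_1$ and the H\"older constant), so you need the constant rescaling $\tilde\epsilon_n = \lambda\epsilon_n$ that the paper makes explicit.
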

\noindent In other words, given that the true function $g_0 \in C^\alpha[0,1]$ is monotone, then if one uses a discretization level of $N = \lceil C_1 \epsilon_n^{-1/\min(\alpha, 1)} \rceil$ with $\epsilon_n = n^{-1/3} (\log n)^{1/3}$, the posterior distribution of $g$ under our 1-d basis model contracts to $g_0$ at a rate of $\mathcal{O}\{n^{-1/3} (\log n)^{1/3}\}$. Note that the constant $C_1$ can be user-specified; varying this does not affect the contraction rate. Here, $\log N = \cO(\log n)$ is a mild regularity condition that can always be satisfied in practice via an upper bound on the discretization level $N$. The proof is in Appendix \ref{ssec: 1D rate proof}.

In the frequentist literature, it is known \cite{mammen2007additive} that the minimax rate for 1-d isotonic regression (in empirical $L_2$-norm) is $\mathcal{O}(n^{-1/3})$; the above posterior contraction rate thus achieves this up to log terms. In the Bayesian literature, \cite{chakraborty2021convergence} and \cite{zhou2024mass} proved similar posterior contraction rates for a 1-d projection-based and basis monotone model, respectively. The latter rate (which aligns with our basis approach) requires that $g_0$ be monotone with Lipschitz continuous derivatives, followed by a flat region. Using our spike-and-slab basis model, the function class on $g_0$ can be generalized to the $\alpha$-H\"{o}lder space of monotone functions, which includes non-differentiable functions when $\alpha \le 1$. The modeling intuition is as follows. A common strategy for proving contraction rates \cite{van2008rates} is to show that the ``small ball probability'' (i.e., of a sup-norm ball of radius $\epsilon > 0$) around $g_0$ is sufficiently large under the considered prior on $g$. For a monotone $g_0 \in C^\alpha[0,1]$, we can show (see Appendix \ref{ssec: small ball probability}) that this small ball probability requirement is satisfied using only $\mathcal{O}(\epsilon^{-1})$ active basis functions of the form \eqref{eqn: monotone link function}. Our basis model exploits this inherent sparsity via spike-and-slab priors on basis weights with the number of basis growing as $N \geq C_1 \epsilon^{-1/\min(\alpha, 1)}$, which allows us to prove minimax contraction rates for monotone $g_0 \in C^\alpha[0,1]$.

\subsubsection{Multi-dimensional Setting}

Consider next the multi-dimensional setting of $d > 1$. To simplify theory, we consider the special case where all inputs are in the monotone set, i.e., $\mathcal{I} = \{1, \cdots, d\}$. Here, we presume the true function $f_0$ takes the transformed additive isotonic form \eqref{eqn: generalized additive model}, i.e.:
\begin{equation}
f_0(\mathbf{x}) = g_0\left\{\sum_{l=1}^d h_{0,l}(x_l)\right\},
\label{eqn:gam_true}
\end{equation}
where $g_0$ and $\{h_{0,l}\}_{l = 1}^d$ are monotone functions. We adopt on $f$ the TAIM model in Section \ref{sec: taim} with fixed prior hyperparameters. We can then show the posterior contraction rate below:

\begin{theorem}\label{thm: TrAM posterior contraction rate}
    Suppose the true function $f_0$ takes the form \eqref{eqn:gam_true} with $g_0$ and $\{h_{0,l}\}_{l=1}^d$ monotone. Further suppose that $g_0$ is Lipschitz continuous and $h_{0, l} \in C^{\alpha_l}[0, 1]$ for some $\alpha_l > 0$, $l = 1, \cdots, d$. Let $f$ follow the TAIM model \eqref{eqn: generalized additive model}--\eqref{eqn: weightprior2}, where $\log N_g = \cO(\log n)$ and $\log N_l = \cO(\log n)$ for $l = 1, \cdots, d$. If the discretization levels satisfy $N_g \ge C_g (n/\log n)^{1/3}$ and $N_l \geq C_l (n/\log n)^{1/\{3 \min(\alpha_l, 1)\}}$ for some constants $C_g > 0$, $C_l > 0$, $l = 1, \cdots, d$, there exists a $K > 0$ such that:
    \begin{equation}
        \lim_{n \rightarrow \infty} \bbE_{f_0}\Pi \big(\|f - f_0\|_n > K\epsilon_n \mid \Data \big) = 0,
    \end{equation}
    for $\epsilon_n = n^{-1/3} (\log n)^{1/3}$.
\end{theorem}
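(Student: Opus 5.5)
The proof of Theorem~\ref{thm: TrAM posterior contraction rate} will follow the general contraction theorem for independent non-identically distributed Gaussian regression with fixed design (Ghosal and van der Vaart, as used in \cite{ghosal2007convergence}). With $\sigma_\epsilon^2$ known, the Kullback--Leibler neighbourhoods reduce to $\|\cdot\|_n$-balls. It therefore suffices to verify three conditions for $\epsilon_n = n^{-1/3}(\log n)^{1/3}$, following the same template as the proof of Theorem~\ref{thm: 1d}.
\begin{enumerate}[(i)]
\item Prior mass: $\Pi(\|f-f_0\|_\infty \le \epsilon_n) \ge e^{-c n\epsilon_n^2}$.
\item Sieve with small complement: there are $\mathcal{F}_n$ with $\Pi(\mathcal{F}_n^c) \le e^{-(c+4)n\epsilon_n^2}$.
\item Entropy: $\log N(\epsilon_n, \mathcal{F}_n, \|\cdot\|_n) \lesssim n\epsilon_n^2 = n^{1/3}(\log n)^{2/3}$.
\end{enumerate}
Since $d$ is fixed, every constant may depend on $d$. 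The exponents in $n$ are not affected, which is the point of the theorem.

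\emph{Prior mass.} I will use a composition bound. Write $u = \sum_l h_l(x_l)$ and $u_0 = \sum_l h_{0,l}(x_l)$. Then
\[
|g(u)-g_0(u_0)| \le |g(u)-g_0(u)| + L_{g_0}\,|u-u_0|,
\]
where $L_{g_0}$ is the Lipschitz constant of $g_0$. So it suffices that $\|g-g_0\|_\infty \le \epsilon_n/2$ on the relevant range, and that $\|h_l - h_{0,l}\|_\infty \le \epsilon_n/(2dL_{g_0})$ for each $l$.

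Two normalisation issues arise. First, the additive decomposition is only identified up to shifts and scalings, and the model has no intercepts in the $h_l$. I will WLOG set $h_{0,l}(0)=0$, moving constants into $g_0$, and rescale so that $u_0 \in [0,1]$. The rescaled $g_0$ remains monotone and Lipschitz. Second, $g$ is flat outside $[0,1]$, so I will need $g_0$ near-flat extended there, or $u$ within $\epsilon_n$ of $[0,1]$; the latter follows from the sup-norm closeness of the $h_l$ and Lipschitz continuity.

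For each factor I will invoke the small ball bound from Appendix~\ref{ssec: small ball probability}, as in Theorem~\ref{thm: 1d}. For $g$, this uses $C^1$-type approximation by $\mathcal{O}(\epsilon^{-1})$ active basis functions with $N_g \gtrsim \epsilon_n^{-1}$. For each $h_l$, it uses $N_l \gtrsim \epsilon_n^{-1/\min(\alpha_l,1)}$. These are exactly the stated lower bounds on $N_g$ and $N_l$. Each bound is of order $\exp\{-C\epsilon_n^{-1}\log n\}$, and $\epsilon_n^{-1}\log n \asymp n\epsilon_n^2$ for this choice of $\epsilon_n$. The priors on $g$ and the $h_l$ are independent, so the $d+1$ bounds multiply and still give $e^{-c n\epsilon_n^2}$. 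The intercept $\alpha_0$ contributes only a polynomial factor.

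\emph{Sieve and entropy.} Take $\mathcal{F}_n$ to be the set of $f$ such that:
\begin{itemize}
\item $g$ and each $h_l$ have at most $s_n \asymp n\epsilon_n^2/\log n$ nonzero weights;
\item all weights and $|\alpha_0|$ are bounded by $B_n = n^{c'}$.
\end{itemize}
The binomial tail of $\mathrm{Bern}(\zeta/N)$ indicators gives a probability of more than $s_n$ actives of at most $e^{-c s_n\log(s_n)}$. This can be made smaller than $e^{-(c+4)n\epsilon_n^2}$ by taking the constant in $s_n$ large, using $\log N = \mathcal{O}(\log n)$. Gaussian tails handle $B_n$.

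For entropy, I will choose supports, which contributes $\binom{N}{s_n}\le e^{s_n\log N}$ per function. I will then discretise the weights. The map from the weights to $f$ is Lipschitz in sup-norm, with constant polynomial in $n$: $g$ has slope at most $N_gB_n$, and each $h_l$ is bounded by $s_nB_n$. So a grid of mesh $\epsilon_n/n^{C}$ suffices. This gives log-entropy $\lesssim (d+1)s_n\log n \lesssim n\epsilon_n^2$.

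\emph{Main obstacle.} I expect the main difficulty to be the prior mass step for the composition. Three points need care:
\begin{itemize}
\item the identifiability and normalisation bookkeeping;
\item keeping the range of $u$ inside the region where $g$ approximates $g_0$, given that $g$ is clamped;
\item checking that the sparse approximations of $g_0$ and $h_{0,l}$ have nonnegative weights, so that they lie in the prior support, with sparsity $\mathcal{O}(\epsilon_n^{-1})$ as in the 1-d appendix.
\end{itemize}
To handle these I will first prove an auxiliary lemma: if $\|g - g_0\|_{\infty,[-1,2]} \le \epsilon$ and $\|h_l-h_{0,l}\|_\infty\le\epsilon$ for all $l$, then $\|f-f_0\|_\infty \le (1+dL_{g_0})\epsilon$. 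I will then reduce to the 1-d small ball estimates already established.
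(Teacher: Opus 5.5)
Your proposal is correct and follows essentially the same route as the paper's proof. The shared steps are:
\begin{enumerate}[(a)]
\item the Gaussian fixed-design contraction lemma;
\item the WLOG normalization $h_{0,l}(0)=0$ with $\sum_l h_{0,l}\in[0,1]$, together with a flat extension of $g_0$ outside $[0,1]$ to match the clamped $g$;
\item the composition bound $\|f-f_0\|_\infty\le\|g-g_0\|_\infty+L\sum_l\|h_l-h_{0,l}\|_\infty$, combined with prior independence and the 1-d small-ball lemma at radii $\epsilon/2$ and $\epsilon/(2Ld)$;
\item a sparsity/magnitude sieve with $k_n\asymp n^{1/3}(\log n)^{-1/3}$ and a polynomial magnitude bound, whose entropy is controlled by a parameter-Lipschitz argument.
\end{enumerate}

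One piece of bookkeeping you gloss over: the paper treats $C_g, C_l$ as arbitrary user-specified constants, not the ones produced by the small-ball lemma. The paper handles this by running the argument at $\tilde\epsilon_n=\lambda\epsilon_n$ for a fixed $\lambda\ge 1$, which changes only constants.
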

\noindent In other words, given the true function $f_0$ takes the transformed additive isotonic form \eqref{eqn:gam_true} with mild continuity conditions on its component 1-d functions, then if one uses a discretization level of $N_l = \lceil C_l (n/\log n)^{1/\{3 \min(\alpha_l, 1)\}} \rceil$ in dimension $l$, the posterior distribution of $f$ under our TAIM model contracts to $f_0$ at a rate of $\mathcal{O}\{n^{-1/3} (\log n)^{1/3}\}$. As before, the constants $C_g, C_1, \cdots, C_d$ can be user-specified; varying these does not affect the contraction rate. Here, $\log N_g = \cO(\log n)$ and $\log N_l = \cO(\log n)$ are again mild regularity conditions that can always be satisfied via upper bounds on the discretization levels $N_g, N_1, \cdots, N_d$. The proof of this theorem (see Appendix \ref{ssec: posterior contraction general d}) makes use of the 1-d ``small ball probabilities'' around each function component $g_0$ and $\{h_{0,l}\}_{l=1}^d$ (Lemma \ref{lem: prior probability lower bound} in Appendix), to bound the small ball probability around the true function $f_0$ via Lipschitz continuity on $g_0$.

This theorem reveals several useful insights on our TAIM model. The first is on its predictive performance as dimension $d$ increases. Recall that existing Bayesian isotonic models can suffer from a curse-of-dimensionality, in that their contraction rates can deteriorate rapidly in dimension $d$, e.g., $\mathcal{O}(n^{-1/(2+d)})$ for projection-based models \cite{wang2023posterior}. In contrast, the above contraction rate $\mathcal{O}\{n^{-1/3} (\log n)^{1/3}\}$ does \textit{not} depend on $d$ for terms involving $n$. This suggests that TAIM can indeed ``tame'' this curse-of-dimensionality for posterior contraction in Bayesian isotonic modeling. One reason is that, instead of the full space of monotone functions on $[0,1]^d$, we consider only transformed additive functions of the form \eqref{eqn:gam_true}, which can be effectively modeled in high dimensions via its embedded additive structure. Given the broad use of transformed additive models for flexible high-dimensional modeling, e.g., generalized additive models \cite{hastie2017generalized} and Kolmogorov-Arnold networks \cite{liu2025kan}, this constraint does not appear too restrictive in our later applications. For modeling flexibility, we present later in Section \ref{sec: computation} an extension of TAIM that models for potential deviations of $f$ from transformed additivity.

Another insight can be gleaned by comparing with existing rates for additive isotonic regression. From the frequentist literature, it is known \cite{mammen2007additive} that the minimax rate (in empirical $L_2$-norm) for additive isotonic functions, i.e., of the form $f_0(\mathbf{x}) = \sum_{l=1}^d h_{0,l}(x_l)$, is $\mathcal{O}(n^{-1/3})$. Our rate from Theorem \ref{thm: TrAM posterior contraction rate} thus matches this ``dimension-resistant'' minimax rate up to log factors. The main advantage of TAIM is that it extends this rate to the setting of \textit{transformed} additive isotonic functions, where the unknown link transformation $g$ is estimated from data; such a transformation is key for flexible \textit{non-additive} modeling in high dimensions, as evidenced by the broad use of GAMs and Kolmogorov-Arnold networks.

Finally, note that the discretization levels in Theorem \ref{thm: TrAM posterior contraction rate}, i.e., $N_l = \lceil C_l (n/\log n)^{1/\{3 \min(\alpha_l, 1)\}} \rceil$, depend on the H\"{o}lder orders $\{\alpha_l\}_{l=1}^d$, which control smoothness of the true function $f_0$. These may be known in some applications, e.g., from known smoothness properties of a differential equation solution in computer experiments. When these are unknown, one can set $N_l = \lceil C_l (n/\log n)^{1/\{3 \min(\underline{\alpha}_l, 1)\}} \rceil$, where $\underline{\alpha}_l \leq \alpha_l$ is a conservative lower bound for $\alpha_l$. Using this conservative approach, one still achieves the same rate in Theorem \ref{thm: TrAM posterior contraction rate} (albeit with greater computation), since $N_l \geq C_l (n/\log n)^{1/\{3 \min(\alpha_l, 1)\}}$ is ensured. That said, in practical surrogate modeling problems, the specification of discretization levels is perhaps moreso guided by an allocated computational budget for MCMC sampling. We will thus specify $N_g$ and $\{N_l\}_{l=1}^d$ to satisfy a reasonable MCMC time constraint in later experiments.

\section{Posterior Computation}\label{sec: computation}

Next, we present an efficient posterior sampler for TAIM that scales well in high dimensions. For modeling flexibility, we first introduce an extension of TAIM that models for potential deviations from transformed additivity. We then outline a Gibbs sampling algorithm that exploits closed-form full conditional updates for efficient posterior computation.

\subsection{Transformed Approximate Additive Isotonic Model}
\label{ssec: taaim}
To model for potential deviations of $f$ from transformed additivity, we consider the following transformed \textit{approximate} additive isotonic model (TAAIM):
\begin{equation}\label{eqn: augmented model}
    f(\mathbf{x}) = g\left\{\sum_{l=1}^d h_l(x_l)\right\} + \delta(\mathbf{x}),
\end{equation}
where $\delta(\mathbf{x})$ models this deviation. In our implementation, $\delta$ is modeled by a zero-mean GP prior $\delta \sim \mathrm{GP}(0,\sigma^2_\delta k)$, where $\sigma^2_\delta$ is a variance parameter and $k$ is the squared-exponential kernel $k(\mathbf{x},\mathbf{x}') = \exp\{-\|\mathbf{x} - \mathbf{x}'\|^2/\xi^2\}$. The same priors from Section \ref{sec: taim} are adopted on $g$ and $\{h_l\}_{l=1}^d$. With $\delta \equiv 0$, this reduces to the earlier TAIM model. For a fully Bayesian specification, we further assign priors on model parameters. The GP variance parameter is assigned an Inverse-Gamma prior $\sigma^2_\delta \sim \text{IG}(a_\delta,b_\delta)$. The GP length-scale $\xi$ is assigned a log-uniform prior over $[0.05, 5.0]$. Finally, the noise variance is assigned an Inverse-Gamma prior $\sigma^2_\epsilon \sim \text{IG}(a_\epsilon,b_\epsilon)$. Figure \ref{fig: plate diagram} shows a plate diagram of the full TAAIM model.

\begin{figure}[!t]
\centering
\definecolor{cadd}{HTML}{2C4B8C}
\definecolor{clink}{HTML}{C0392B}
\definecolor{cnoise}{HTML}{7D3C98}
\definecolor{cgp}{HTML}{1E8449}
\begin{tikzpicture}[scale=0.82, transform shape,
    lat/.style={circle, draw, minimum size=10mm, inner sep=0pt, font=\small},
    add/.style={lat, draw=cadd,   fill=cadd!10,   thick},
    lnk/.style={lat, draw=clink,  fill=clink!10,  thick},
    noi/.style={lat, draw=cnoise, fill=cnoise!10, thick},
    gpd/.style={lat, draw=cgp,    fill=cgp!10,    thick},
    det/.style={lat, draw=cadd, densely dashed, thick, fill=white},
    obs/.style={lat, draw=black!75, fill=black!20, thick},
    ed/.style={->, >=stealth, thick, black!65},
    ttl/.style={font=\scriptsize\bfseries, align=center, anchor=south},
    panel/.style={rounded corners=3pt, draw=#1!40, fill=#1!4, line width=0.6pt}]

\path[panel=cadd]  (-1.50,-3.80) rectangle ( 3.25,1.15);
\path[panel=clink] ( 3.55,-3.20) rectangle ( 9.05,1.15);
\path[panel=cnoise]( 9.35,-1.15) rectangle (10.85,1.15);
\path[panel=cgp]   (11.15,-1.15) rectangle (13.45,1.15);
\node[ttl, text=cadd,  text width=4.4cm] at (0.875,1.24) {Basis model for additive\\ components $h_l$};
\node[ttl, text=clink, text width=5.3cm] at (6.300,1.24) {Basis model for link $g$};
\node[ttl, text=cnoise,text width=1.5cm] at (10.10,1.24) {Noise};
\node[ttl, text=cgp,   text width=2.5cm] at (12.30,1.24) {GP discrepancy};

\node[add] (Il) at (0.00, 0.0)  {$I^{(l)}_j$};
\node[add] (Zl) at (1.70, 0.0)  {$Z^{(l)}_j$};
\node[add] (bl) at (0.85,-1.6)  {$\beta^{(l)}_j$};
\draw[ed] (Il) -- (bl);   \draw[ed] (Zl) -- (bl);
\draw[rounded corners] (-0.80,-2.75) rectangle (2.55,0.55);
\node[font=\scriptsize, anchor=south east] at (2.50,-2.71) {$j = 1, \ldots, N_l$};
\draw[rounded corners] (-1.10,-3.35) rectangle (2.85,0.85);
\node[font=\scriptsize, anchor=south west] at (-1.05,-3.31) {$l = 1, \ldots, d$};

\node[lnk] (Ig) at (4.60, 0.0)  {$I^{(g)}_j$};
\node[lnk] (Zg) at (6.30, 0.0)  {$Z^{(g)}_j$};
\node[lnk] (ag) at (5.45,-1.6)  {$\alpha_j$};
\node[lnk] (a0) at (8.25,-1.6)  {$\alpha_0$};
\draw[ed] (Ig) -- (ag);   \draw[ed] (Zg) -- (ag);
\draw[rounded corners] (3.80,-2.75) rectangle (7.15,0.55);
\node[font=\scriptsize, anchor=south east] at (7.10,-2.71) {$j = 1, \ldots, N_g$};

\node[noi] (s2e) at (10.10, 0.0) {$\sigma^2_\epsilon$};
\node[gpd] (s2d) at (11.75, 0.0) {$\sigma^2_\delta$};
\node[gpd] (xi)  at (12.85, 0.0) {$\xi$};

\node[det] (ui) at (1.70,-4.75) {$u_i$};
\node[gpd, minimum size=13mm, font=\scriptsize] (di) at (11.85,-4.75) {$\delta(\mathbf{x}_i)$};
\node[obs] (yi) at (6.30,-6.35) {$y_i$};
\draw[rounded corners] (0.60,-7.40) rectangle (12.90,-4.00);
\node[font=\scriptsize, anchor=south east] at (12.85,-7.36) {$i = 1, \ldots, n$};
\draw[ed] (bl)  -- (ui);   \draw[ed] (ui)  -- (yi);
\draw[ed] (ag)  -- (yi);   \draw[ed] (a0)  -- (yi);
\draw[ed] (s2e) to[out=-92,in=-18] (yi);   \draw[ed] (di)  -- (yi);
\draw[ed] (s2d) -- (di);   \draw[ed] (xi)  -- (di);
\end{tikzpicture}
\caption{Plate diagram of the TAAIM model in Section \ref{ssec: taaim}. Shaded colored boxes group different parts of the model: additive components in blue, link function in red, noise in purple, and GP discrepancy in green. Solid circles mark the parameters to be sampled via MCMC, the dashed circle marks the relation $u_i = \sum_{l=1}^d h_l(x_{il})$, and the bottom shaded circle marks the observed responses.}
\label{fig: plate diagram}
\end{figure}

One potential issue with the above model is that the isotonicity property \eqref{eq:monotone} is no longer guaranteed almost surely, since the GP discrepancy $\delta$ does not ensure isotonicity. This can be practically addressed in two ways. First, one can use a careful specification of $(a_\delta,b_\delta)$ (see Section \ref{ssec: alg}) to ensure that $\sigma^2_\delta$ is small, which encourages a small discrepancy $\delta$ unless strongly suggested by the data. Second, one can perform a post-hoc check on the fitted model to see if isotonicity is violated, e.g., using the nonparametric hypothesis tests in \cite{scott2015nonparametric}. If isotonicity is violated, one can refit the model with a smaller prior distribution on $\sigma^2_\delta$.

We present next a posterior sampler for this extended TAAIM model. Such a sampler can naturally be used for the earlier TAIM model by simply ignoring sampling steps involving $\delta$ (see Section \ref{ssec: alg} for details). In practice, a hierarchical model selection approach (e.g., DIC \cite{spiegelhalter2002bayesian}) can be used to choose between TAIM and TAAIM if needed.

\subsection{Gibbs Sampling Steps}\label{ssec: gibbs}

Let $\boldsymbol{\Theta}= \{\boldsymbol{\Theta}_g, \boldsymbol{\Theta}_h, \boldsymbol{\Theta}_\delta,\sigma^2_{\epsilon}\}$ denote the model parameters in TAAIM to infer. Here, $\boldsymbol{\Theta}_g = \{ \alpha_0, I_1^{(g)}, \cdots, I_{N_g}^{(g)}, Z_1^{(g)}, \cdots, Z_{N_g}^{(g)}\}$ are the parameters for the link function $g$, $\boldsymbol{\Theta}_{h} = \{ \{I_1^{(l)}, \cdots, \allowbreak I_{N_l}^{(l)}\}_{l \in \mathcal{I}}, \{Z_1^{(l)}, \cdots, Z_{N_l}^{(l)}\}_{l=1}^d\}$ are the parameters for the additive components $\{h_l\}_{l=1}^d$, and $\boldsymbol{\Theta}_\delta = \{\boldsymbol{\delta},\sigma^2_\delta,\xi\}$ are for the GP discrepancy, where $\bdelta = [\delta(\mathbf{x}_i)]_{i=1}^n$ are the latent values of $\delta$ at design points\footnote{For simplicity, we assume no repeated design points; our sampler extends naturally with repeated points.}. Conditional on data $\mathcal{D}_n = \{(\mathbf{x}_i,y_i)\}_{i=1}^n$, we show next that the full conditional distributions of each parameter in $\boldsymbol{\Theta}$ (except the length-scale $\xi$) can be obtained in closed form, enabling efficient Gibbs sampling. Derivation details are in Appendix \ref{ssec: full conditional derivation}.

Consider first $\boldsymbol{\Theta}_g$. Let $\boldsymbol{\Theta}_{-}$ denote the parameter set $\boldsymbol{\Theta}$ with the considered parameter omitted. Adopting the convention $\psi_0^{(g)} \equiv 1$, define the residual term of the $i$-th data point excluding the contribution from the $j$-th basis term of $g$ as:
\begin{equation}\label{eqn: partial residual}
    r_{ij} = y_i - \delta(\mathbf{x}_i) - \sum_{j' \neq j} \alpha_{j'} \psi_{j'}^{(g)}(u_i), \quad i = 1, \cdots, n, \quad j = 0, \cdots, N_g,
\end{equation}
where $u_i = \sum_{l=1}^d h_l(x_{il})$ is the link function argument at $\mathbf{x}_i = (x_{i1}, \cdots, x_{id})$. One can then show that the full conditional distribution of the intercept $\alpha_0$ is normally-distributed:
\begin{equation}
[\alpha_0 | \boldsymbol{\Theta}_{-}, \mathcal{D}_n] \sim \mathcal{N}(\widehat{\mu}_0, \widehat{\tau}_0^2), \quad
\widehat{\tau}_0^2 = \Big( \frac{1}{\tau_0^2} + \frac{n}{\sigma^2_\epsilon} \Big)^{-1}, \quad
\widehat{\mu}_0 = \widehat{\tau}_0^2 \Big( \frac{\mu_0}{\tau_0^2} + \frac{1}{\sigma^2_\epsilon} \sum_{i=1}^n r_{i0} \Big).
\label{eqn: fullcondint}
\end{equation}
The full conditional distributions of the indicators $I_1^{(g)}, \cdots, I_{N_g}^{(g)}$ take the Bernoulli form:
\vspace{-0.4cm}
\small
\begin{equation}
[I_j^{(g)} | \boldsymbol{\Theta}_{-}, \mathcal{D}_n] \sim \text{Bern}(\widehat{p}_j), \;
\log \frac{\widehat{p}_j}{1 - \widehat{p}_j} = \log \frac{\zeta_g}{N_g - \zeta_g} + \frac{Z_j^{(g)}}{\sigma^2_\epsilon} \sum_{i=1}^n \psi_j^{(g)}(u_i) \left\{ r_{ij} - \frac{Z_j^{(g)}}{2} \psi_j^{(g)}(u_i) \right\}.
\label{eqn: fullcondind1}
\vspace{-0.4cm}
\end{equation}
\normalsize
\hspace{-0.68cm} Finally, the full conditionals of the magnitudes $Z_1^{(g)}, \cdots, Z_{N_g}^{(g)}$ are truncated normal:
\begin{equation}
[Z_j^{(g)} | \boldsymbol{\Theta}_{-}, \mathcal{D}_n] \sim
\begin{cases}
\mathcal{N}^+(\widehat{\mu}_j, \widehat{\tau}_j^2), & I_j^{(g)} = 1, \\
\mathcal{N}^+(0, \tau_g^2), & I_j^{(g)} = 0,
\end{cases}
\label{eqn: fullcondmag1}
\end{equation}
where $\widehat{\mu}_j = \widehat{\tau}_j^2 \sigma^{-2}_\epsilon \sum_{i=1}^n \psi_j^{(g)}(u_i) r_{ij}$ and $\widehat{\tau}_j^2 = \Big\{ {\tau_g^{-2}} + {\sigma^{-2}_\epsilon} \sum_{i=1}^n \psi_j^{(g)}(u_i)^2 \Big\}^{-1}$. When $I_j^{(g)} = 0$, i.e., the $j$-th basis function for $g$ is inactive, the magnitude $Z_j^{(g)}$ does not enter the likelihood, thus its full conditional distribution defaults to its prior distribution.

Consider next $\boldsymbol{\Theta}_h$. For a monotone input variable $l \in \mathcal{I}$, we sample its basis indicator $I_j^{(l)}$ and magnitude $Z_j^{(l)}$ as a block via the full conditional distribution $[(I_j^{(l)},Z_j^{(l)})|\tilde{\boldsymbol{\Theta}}_{-},\mathcal{D}_n]$, where $\tilde{\boldsymbol{\Theta}}_{-}$ is the parameter set $\boldsymbol{\Theta}$ without $I_j^{(l)}$ and $Z_j^{(l)}$. This can be done by first sampling $[I_j^{(l)}|\tilde{\boldsymbol{\Theta}}_{-},\mathcal{D}_n]$, then sampling $[Z_j^{(l)}|I_j^{(l)},\tilde{\boldsymbol{\Theta}}_{-},\mathcal{D}_n]$. We can show that:
\begin{equation}
[I_j^{(l)} \mid \tilde{\boldsymbol{\Theta}}_{-}, \mathcal{D}_n] \sim \text{Bern}(\widehat{p}^{(l)}_j), \quad
\log \frac{\widehat{p}^{(l)}_j}{1-\widehat{p}^{(l)}_j}
= \log \frac{\zeta_l}{N_l - \zeta_l}
+ \log \Big\{ \frac{2}{\tau_l} \sum_{v=0}^M \omega_v \Big\} + c_{v_0},
\label{eqn: fullcondind2}
\end{equation}
where $\omega_v$ and $c_v$ are coefficients and $v_0$ is an index, all defined in Appendix \ref{ssec: full conditional derivation} for brevity. The sampling of $[Z_j^{(l)}|I_j^{(l)},\tilde{\boldsymbol{\Theta}}_{-},\mathcal{D}_n]$ is more involved. Note that when $I_j^{(l)} = 1$, i.e., the $j$-th basis function for $h_l$ is active, the link function argument $u_i$ is affine in $Z_j^{(l)}$. Exploiting this along with the piecewise linear nature of $g$, we can show (details in Appendix \ref{ssec: full conditional derivation}) that:
\begin{equation}
[Z_j^{(l)} \mid I_j^{(l)}, \tilde{\boldsymbol{\Theta}}_{-}, \mathcal{D}_n] \sim
\begin{cases}
\displaystyle \sum_{v=0}^{M} \frac{\omega_v}{\sum_{v'=0}^M \omega_{v'}}\, \mathcal{N}_{[L_v,R_v]}(\widehat{\mu}_v, \widehat{\sigma}_v^2), & I_j^{(l)} = 1, \\[2.2ex]
\mathcal{N}^+(0,\tau_l^2), & I_j^{(l)} = 0,
\end{cases} \quad \quad \text{ for } l \in \mathcal{I}.
\label{eqn: fullcondmag2}
\end{equation}
Here, $\mathcal{N}_{[a,b]}(\mu,\sigma^2)$ denotes the normal distribution $\mathcal{N}(\mu,\sigma^2)$ truncated to $[a,b]$, and $M < \infty$, $\widehat{\sigma}_v^2$, $\widehat\mu_v$, $L_v$ and $R_v$ are defined in Appendix \ref{ssec: full conditional derivation} for brevity. In other words, when $I_j^{(l)} = 1$, the full conditional of $Z_j^{(l)}$ is a finite mixture of truncated 1-d normal distributions, which can be efficiently sampled. When $I_j^{(l)} = 0$, its full conditional again defaults to the prior. For a non-monotone input variable $l \notin \mathcal{I}$, one can use a similar rationale to show that:
\begin{equation}
[Z_j^{(l)} \mid \boldsymbol{\Theta}_{-}, \mathcal{D}_n] \sim \sum_{v=0}^{M} \frac{\omega_v}{\sum_{v'=0}^M \omega_{v'}}\, \mathcal{N}_{[L_v,R_v]}(\widehat{\mu}_v, \widehat{\sigma}_v^2), \quad \text{ for } l \notin \mathcal{I}.
\label{eqn: fullcondmag3}
\end{equation}
In practice, we find that an ``on-the-fly'' normalization (used in spatial statistics; see, e.g., \cite{rue2005gaussian,mak2016regional}) of the sampled basis weights $\{\beta_j^{(l)}\}$ can improve parameter identifiability and MCMC mixing. Appendix \ref{ssec: onthefly} provides further details on this step.

Finally, consider $\boldsymbol{\Theta}_\delta$. Define $\mathbf{K} = [k(\mathbf{x}_i,\mathbf{x}_{i'})]_{i,i'=1}^n$, $\mathbf{M} = \sigma^2_\delta \mathbf{K} + \sigma^2_\epsilon \mathbf{I}_n$, and let $\mathbf{r} = (r_1,\cdots,r_n)^T$, $r_i = y_i - g(u_i)$ be the current residual vector. One can show that the full conditional distribution of the latent vector $\boldsymbol{\delta}$ is multivariate normal:
\begin{equation}
[\boldsymbol{\delta} \mid \boldsymbol{\Theta}_{-}, \mathcal{D}_n] \sim
\mathcal{N}\big( \sigma^2_\delta\, \mathbf{K}\mathbf{M}^{-1} \mathbf{r}, \sigma^2_\epsilon \sigma^2_\delta\, \mathbf{K}\mathbf{M}^{-1} \big),
\label{eqn: fullconddelta}
\end{equation}
and the full conditional distribution of the GP variance $\sigma^2_\delta$ is Inverse-Gamma:
\begin{equation}
[\sigma^2_\delta \mid \boldsymbol{\Theta}_{-}, \mathcal{D}_n] \sim
\text{IG}\Big( a_\delta + \frac{n}{2}, \;\; b_\delta + \frac{1}{2}\boldsymbol{\delta}^T \mathbf{K}^{-1}\boldsymbol{\delta}\Big).
\label{eqn: fullcondgpvar}
\end{equation}
Finally, the full conditional distribution of the noise variance $\sigma^2_{\epsilon}$ is also Inverse-Gamma:
\begin{equation}
[\sigma^2_\epsilon \mid \boldsymbol{\Theta}_{-}, \mathcal{D}_n] \sim
\text{IG}\Big( a_\epsilon + \frac{n}{2}, \;\; b_\epsilon + \frac{1}{2}\sum_{i=1}^n \big\{y_i - g(u_i) - \delta(\mathbf{x}_i)\big\}^2 \Big),
\label{eqn: fullcondnoisevar}
\end{equation}
and the GP length-scale $\xi$ is sampled using the Metropolis-Hastings algorithm \cite{hastings1970monte}. 

\begin{algorithm}[t]
\caption{Gibbs sampling algorithm for TAAIM}\label{alg: gibbs}
\begin{algorithmic}[1]
\STATE \textbf{Inputs:} Data $\{(\mathbf{x}_i, y_i)\}_{i=1}^n$, MCMC iterations $T_{\rm iter}$, discretization levels $N_g$ and $\{N_l\}_{l=1}^d$, indicator flag \texttt{taaim.flg}
\STATE $\bullet$ Set initial values for parameters $\boldsymbol{\Theta}= \{\boldsymbol{\Theta}_g, \boldsymbol{\Theta}_h, \boldsymbol{\Theta}_\delta,\sigma^2_{\epsilon}\}$, with $\boldsymbol{\delta} = 0$.
\FOR{$t = 1, \ldots, T_{\rm iter}$}
    \STATE $\bullet$ Sample $\alpha_0$ from the normal distribution \eqref{eqn: fullcondint}.
    \STATE $\bullet$ Sample each $\{I_j^{(g)}\}_{j=1}^{N_g}$ from the Bernoulli distribution \eqref{eqn: fullcondind1}.
    \STATE $\bullet$ Sample each $\{Z_j^{(g)}\}_{j=1}^{N_g}$ from the truncated normal distribution \eqref{eqn: fullcondmag1}.

    \FOR{ $l = 1, \cdots, d$ and $j = 1, \cdots, N_l$ }

    \IF{ $l \in \mathcal{I}$}
    \STATE $\bullet$ Sample $I_j^{(l)}$ from the Bernoulli distribution \eqref{eqn: fullcondind2}.
    \STATE $\bullet$ Sample $Z_j^{(l)}$ from the distribution \eqref{eqn: fullcondmag2}.
    \ELSE
    \STATE $\bullet$ Sample $Z_j^{(l)}$ from the distribution \eqref{eqn: fullcondmag3}.
    \ENDIF
    \ENDFOR
    \STATE $\bullet$ Perform an on-the-fly normalization of weights $\{\beta_j^{(l)}\}$ following Appendix \ref{ssec: onthefly}.

    \STATE $\bullet$ Sample $\sigma^2_\epsilon$ from the Inverse-Gamma distribution \eqref{eqn: fullcondnoisevar}.

        \IF{ \texttt{taaim.flg = true} }
        \STATE $\bullet$ Sample $\boldsymbol{\delta}$ from the multivariate normal distribution \eqref{eqn: fullconddelta}.
        \STATE $\bullet$ Sample $\sigma^2_\delta$ from the Inverse-Gamma distribution \eqref{eqn: fullcondgpvar}.
        \STATE $\bullet$ Sample $\xi$ via Metropolis-Hastings.
    \ENDIF
\ENDFOR
\end{algorithmic}
\end{algorithm}

\subsection{Sampling Algorithm and Computational Complexity}\label{ssec: alg}

Algorithm \ref{alg: gibbs} outlines the full Gibbs sampling algorithm, where \texttt{taaim.flg} indicates whether TAAIM (Section \ref{ssec: taaim}) is used instead of TAIM (Section \ref{sec: taim}). We first initialize the parameters in $\boldsymbol{\Theta}$, where the latent vector $\boldsymbol{\delta}$ is initialized at $\boldsymbol{\delta}=\boldsymbol{0}$. Each sampling step from Section \ref{ssec: gibbs} is then performed iteratively. A burn-in period and MCMC diagnostics \cite{gelman1992inference} should be used to ensure good mixing. We find that an effective strategy is to first sample only $\boldsymbol{\Theta}_g$, $\boldsymbol{\Theta}_h$ and $\sigma^2_{\epsilon}$ in some warm-up iterations at the start of the burn-in, then add in the sampling of $\bTheta_\delta$ after. This allows the sampler to explore the transformed additive structure of $f$ before accounting for the GP discrepancy, which seems to improve MCMC mixing.

The specification of prior hyperparameters is also important. For the spike-and-slab priors, we set $\zeta_g$ and $\{\zeta_l\}_{l \in \mathcal{I}}$ as 5, which seems to work well in experiments. The variance hyperparameters $\tau^2_1, \cdots, \tau^2_d$ are set using the second-moment approach in \cite{chipman2010bart} to ensure that $\sum_{l=1}^d h_l(x_l)$ is concentrated on $[0,1]$, as desired from the discussion in Section \ref{ssec: model}; Appendix \ref{ssec: data adaptive hyperparameters} provides details. We further find that a data-dependent specification of $\mu_0$, $\tau^2_0$ and $\tau^2_g$ (also detailed in Appendix \ref{ssec: data adaptive hyperparameters}) works well in capturing the function scale with limited training data; similar data-dependent priors have been used in Bayesian regression \citep{chipman2010bart, richardson1997bayesian} and GP modeling \cite{chen2023}. For the GP discrepancy, we set $(a_\delta,b_\delta) = (3.0,0.1)$, which encourages a small discrepancy term $\delta$ unless strongly suggested by the data, as desired from the discussion in Section \ref{ssec: taaim}. For the noise model, we set $(a_\epsilon,b_\epsilon) = (2.0,0.1)$.

A key computational advantage of this posterior sampler is its efficiency in high dimensions. A careful complexity analysis of Algorithm \ref{alg: gibbs} shows that each Gibbs sampling iteration requires $\cO \{nN_g(\sum_{l=1}^d N_l)\log n + n^3 \}$ work; Appendix \ref{ssec: complexity} provides a full derivation. When each input uses the same discretization level $N_l = N$, this reduces to 
$\cO\{nN_g N d \log n + n^3\}$ work. Thus, with $n$, $N_g$ and $N$ held fixed, this complexity scales linearly in dimension $d$, which facilitates efficient posterior sampling in high dimensions. This is in contrast to existing Bayesian isotonic models (see Section \ref{sec: background}), where drawing a single posterior sample from the monotone-constrained process may require work growing exponentially in $d$. Given a computational budget, the latter models may thus require a coarse discretization for model fitting, which hurts their predictions. Our model ``tames'' this curse-of-dimensionality by leveraging a transformed additive form with reduced model parameters, coupled with efficient closed-form updates for Gibbs sampling.

Finally, using the posterior samples on $\boldsymbol{\Theta}$, one can draw posterior samples on $f(\mathbf{x}_{\rm new})$ at a new input $\mathbf{x}_{\rm new}$ as follows. First, given the current iterate of the latent vector $\boldsymbol{\delta}$, one can draw a posterior sample on $\delta(\mathbf{x}_{\rm new})$ via the GP predictive distribution $\mathcal{N}(\mu_\delta,s^2_\delta)$, where:
\begin{equation}
    \mu_\delta = \mathbf{k}_*^\top \mathbf{K}^{-1} \boldsymbol{\delta}, \qquad
    s^2_\delta = \sigma^2_\delta \big( 1 - \mathbf{k}_*^\top \mathbf{K}^{-1} \mathbf{k}_* \big),
    \qquad \mathbf{k}_* = [k(\mathbf{x}_{\rm new},\mathbf{x}_i)]_{i=1}^n .
    \label{eq: gppred}
\end{equation}
Next, one can plug in this sample of $\delta(\mathbf{x}_{\rm new})$ within \eqref{eqn: augmented model} to draw a posterior sample on $f(\mathbf{x}_{\rm new})$, where $g$ and $\{h_l\}$ are specified using the current iterate of $\boldsymbol{\Theta}$. One can then repeat these steps over all iterates of $\boldsymbol{\Theta}$ to sample the posterior predictive distribution for $f(\mathbf{x}_{\rm new})$.

\section{Numerical Experiments}\label{sec: simulation}

We now investigate the performance of TAIM and TAAIM in numerical experiments. We consider the following test functions:
\begin{itemize}[leftmargin=*, noitemsep]

\item \textit{2-d synthetic function}: $f_1(\mathbf{x}) = 0.5\exp\{\log m(x_1) + \log m(x_2)\} - 4.5$, where $m(t) = t + \sin(4\pi t)/(4\pi) + 3$. Here, $f_1$ is monotone in both inputs, thus $\mathcal{I} = \{1, 2\}$.
\item \textit{5-d synthetic function}: $f_2(\mathbf{x}) 
    = \max\Big\{3^{-1} \Big[ 2z_1 + 2\sin(4\pi z_1)/(4\pi) + 5\sin(20z_2) + 5\sin(10x_3) 
      -2(0.4 - x_4)_+ + 2(x_4-0.7)_+
      +(x_5^2-0.5)_+ + 10
      \Big], 3\Big\} + 0.5x_2 + x_3$, where $z_1 = \max(x_1,0.3)$, $z_2 = \max(x_2,0.5)$, $(\cdot)_+ = \max(\cdot,0)$. Here, $f_2$ is monotone in $x_1$, $x_4$, $x_5$, thus $\mathcal{I} = \{1, 4, 5\}$.

      \item \textit{9-d synthetic function}: $f_3(\mathbf{x}) = (u/2)^3 + \exp(0.9u)/5 + 0.5\sin(\pi x_2)\cos(\pi x_3)$, where:
      \begin{align*}
        u & = m(x_1) + (x_4 - 0.3)_+/0.7 + x_5^2 + \{1 + e^{-15(x_6 - 0.5)}\}^{-1} + m(x_8) \\
        & 
        \quad \quad      \quad \quad + 0.15\sin(2\pi x_2) + 0.15\cos(2\pi x_3) + 0.1\sin(2\pi x_7) + 0.4\,x_9(1 - x_9).
      \end{align*}
      Here, $f_3$ is monotone in $x_1$, $x_4$, $x_5$, $x_6$, $x_8$, thus $\mathcal{I} = \{1, 4, 5, 6, 8\}$.
      
      \item \textit{8-d borehole function}: This is the borehole test function $f_4$ in \cite{surjanovic2013virtual}, which models the water flow rate through a borehole. From physical knowledge, $f_4$ should be monotone decreasing in inputs $x_2$, $x_6$ and $x_7$ (e.g., borehole length), and monotone increasing in all other inputs (e.g., borehole hydraulic conductivity). 
      \item \textit{10-d wing weight function}: This is the wing weight test function $f_5$ in \cite{surjanovic2013virtual}, which models the weight of a light aircraft wing. From physical knowledge, $f_5$ should be monotone decreasing in input $x_7$ (aerofoil thickness-to-chord ratio), and monotone increasing in all other inputs except $x_4$ (e.g., load factor and wing area).
\end{itemize}

\noindent Note that the first function $f_1$ is in the transformed additive form \eqref{eqn: generalized additive model}, whereas the remaining functions $f_2, \cdots, f_5$ have some deviation from this form. The borehole and wing weight functions are common test functions in the computer experiments literature.

We compare with the same benchmark models in Section \ref{sec: background}. For fair comparison, all models are constrained to 30 minutes for MCMC sampling, which is already quite time-intensive for surrogate modeling. Each model uses at least 2000 MCMC iterations (with the first 1000 discarded as burn-in), and MCMC convergence is checked using the Gelman-Rubin statistic \cite{gelman1992inference}. The compared models include:
\begin{itemize}[leftmargin=*, noitemsep]
\item \textit{Standard GP}: This is the standard GP benchmark using an anisotropic squared-exponential kernel, implemented using the \texttt{DiceKriging} package \cite{R_DiceKriging} with model parameters estimated using maximum likelihood.

\item \textit{Basis-GP}: This is the basis expansion isotonic GP in \cite{maatouk2017gaussian}, implemented using the algorithm in \cite{ray2020efficient}. Here, we used the finest discretization level $N_l = N$ that ensures the MCMC time constraint is satisfied. For $f_1, \cdots, f_5$, this corresponds to $N=10, 4, 2, 2, 2$, respectively, yielding $10^2$, $4^5$, $2^9$, $2^8,2^{10}$ basis functions. While $N=2$ seems quite small for the latter functions, increasing this to $N=3$ results in $3^{8}-3^{10} \approx 6500-60000$ basis functions, which makes MCMC sampling intractable given the time constraint.

\item \textit{Proj-GP}: This is the projection-based isotonic GP in \cite{lin2014bayesian}. A similar discretization scheme is used as Basis-GP to satisfy the MCMC time constraint.

\item \textit{TAIM and TAAIM}: Our models are implemented following Section \ref{ssec: alg}, with a discretization level of $N_g = 10$ and $N_1 = \cdots = N_d = 10$ to satisfy the MCMC time constraint. Here, this finer discretization is possible due to the scalability of our MCMC sampler in higher dimensions. A warm-up period of 200 iterations is used following Section \ref{ssec: alg}.
\end{itemize}
\noindent All models use the same LHD points, with sample size $n=10d$ following the rule-of-thumb in \cite{loeppky2009choosing}. Data are sampled with Gaussian noise with standard deviation 0.5 for $f_1$, 0.1 for $f_2$, 0.3 for $f_3$, and 1.0 for $f_4$ and $f_5$. Each experiment is replicated 20 times.

\begin{figure}[!t]
    \centering
    \includegraphics[width=0.8\linewidth]{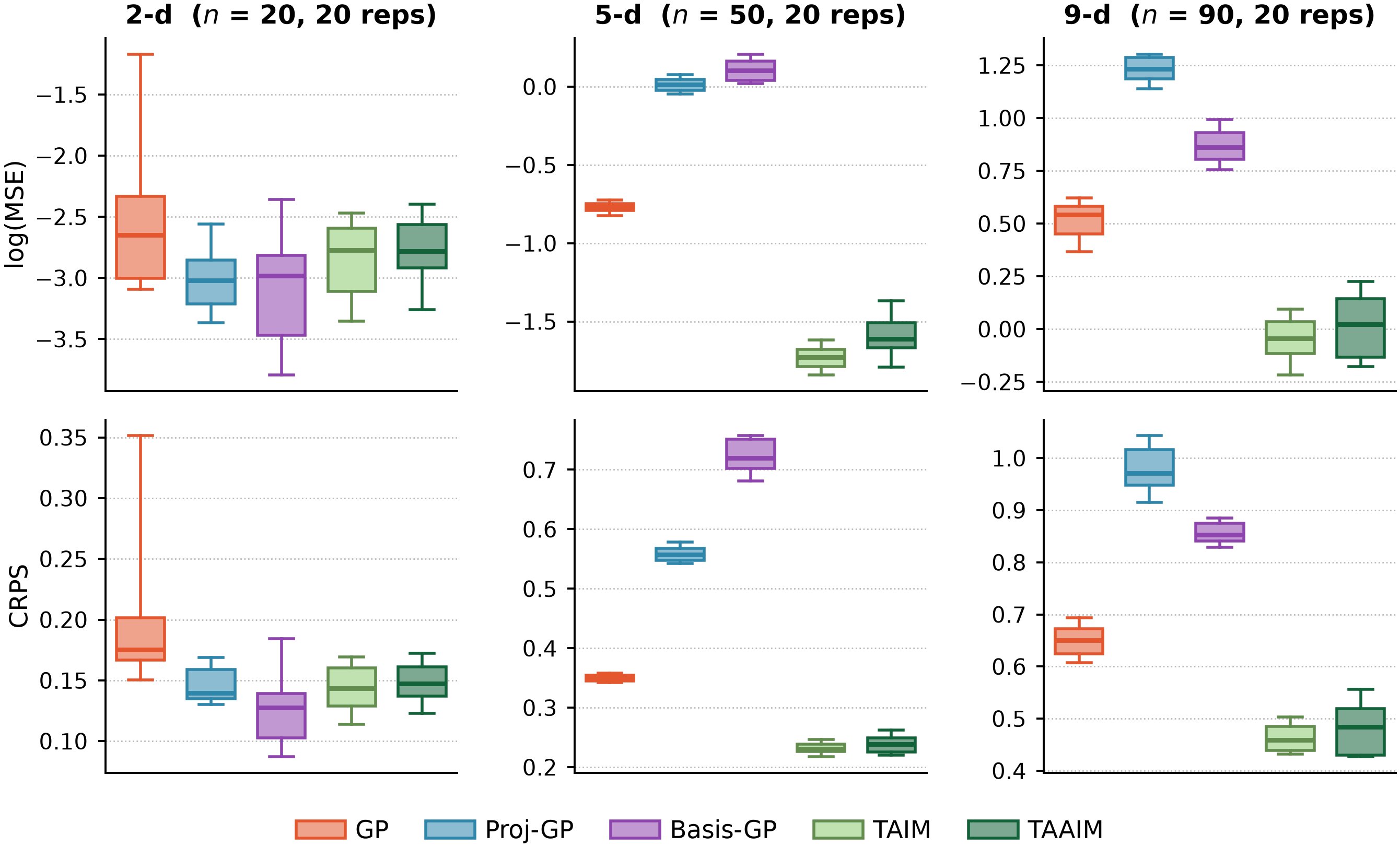}
    \caption{Boxplots of the test log-MSE (top) and CRPS (bottom) for the three synthetic functions $f_1$, $f_2$ and $f_3$, over $20$ replications.}
    \label{fig:Metric synthetic}
\end{figure}

\begin{figure}
    \centering
    \includegraphics[width=0.8\linewidth]{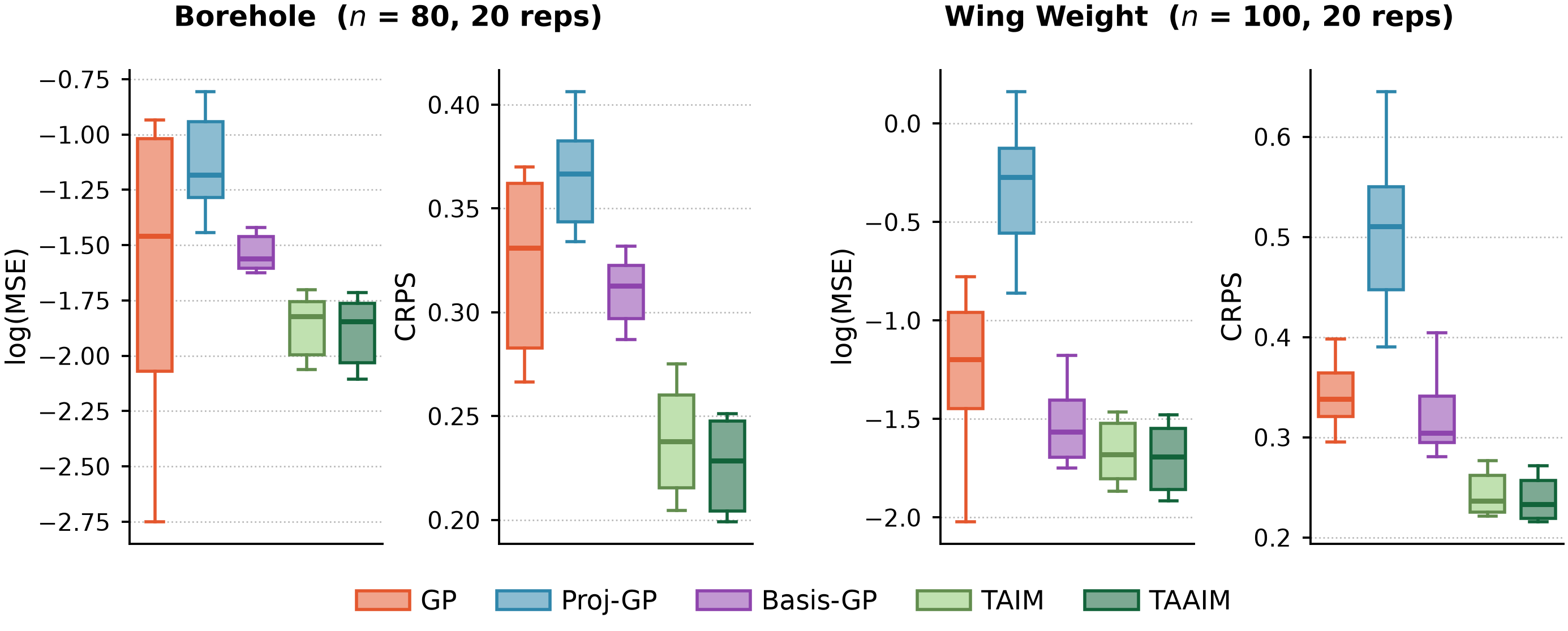}
    \caption{Boxplots of the test log-MSE and CRPS for the borehole function $f_4$ (left two plots) and the wing weight function $f_5$ (right two plots) over $20$ replications. }
    \label{fig:Metric testfun}
\end{figure}

Figure \ref{fig:Metric synthetic} shows the test log-MSE and CRPS boxplots for the three synthetic functions. There are several interesting observations. In low dimensions (i.e., 2-d), the existing isotonic models Proj-GP and Basis-GP perform better than the standard GP, which is intuitive since monotonicity is incorporated and a high-resolution model is computationally feasible. Here, our models perform comparably to these benchmarks. However, in higher dimensions (i.e., 5-d and 9-d), both existing models perform worse than the standard GP despite modeling for monotonicity; this is likely due to the coarse discretization needed for feasible model fitting with many inputs.
In this large-$d$ setting, our models perform considerably better than the benchmarks for both MSE and CRPS,
which shows the value of a transformed additive form for flexible and efficient isotonic modeling in higher dimensions. 

Figure \ref{fig:Metric testfun} shows the same metrics for the borehole and wing weight functions. 
Here, Basis-GP gives only slight improvements over the standard GP for both functions, with Proj-GP performing worse. This again shows that a coarse discretization (which is needed for feasible model fitting) can inhibit the effective incorporation of isotonicity information for modeling. In contrast, our TAIM and TAAIM models perform considerably better than all benchmarks for both functions, which shows that a transformed additive framework (coupled with an efficient Gibbs sampler) can facilitate effective isotonic modeling in higher dimensions. Here, TAAIM performs slightly better than TAIM, which is expected since these two test functions model physical systems that likely have some deviations from transformed additivity. 

\section{Surrogate Modeling Applications}\label{sec: application}

Finally, we consider two applications on the surrogate modeling of expensive simulators.

\subsection{Jet Turbine Deformation}\label{ssec: jet engine}

Consider first the design of jet engines, which are broadly used in commercial aviation. Jet engines operate by sucking in heated gas into a compressor, subjecting it to extreme pressure and temperature, then discharging it through a turbine to generate thrust. Given such extreme conditions, it is thus imperative to design jet turbine blades to reliably withstand deformation in harsh operating conditions.

\begin{figure}[t]
    \centering
    \includegraphics[width=\linewidth]{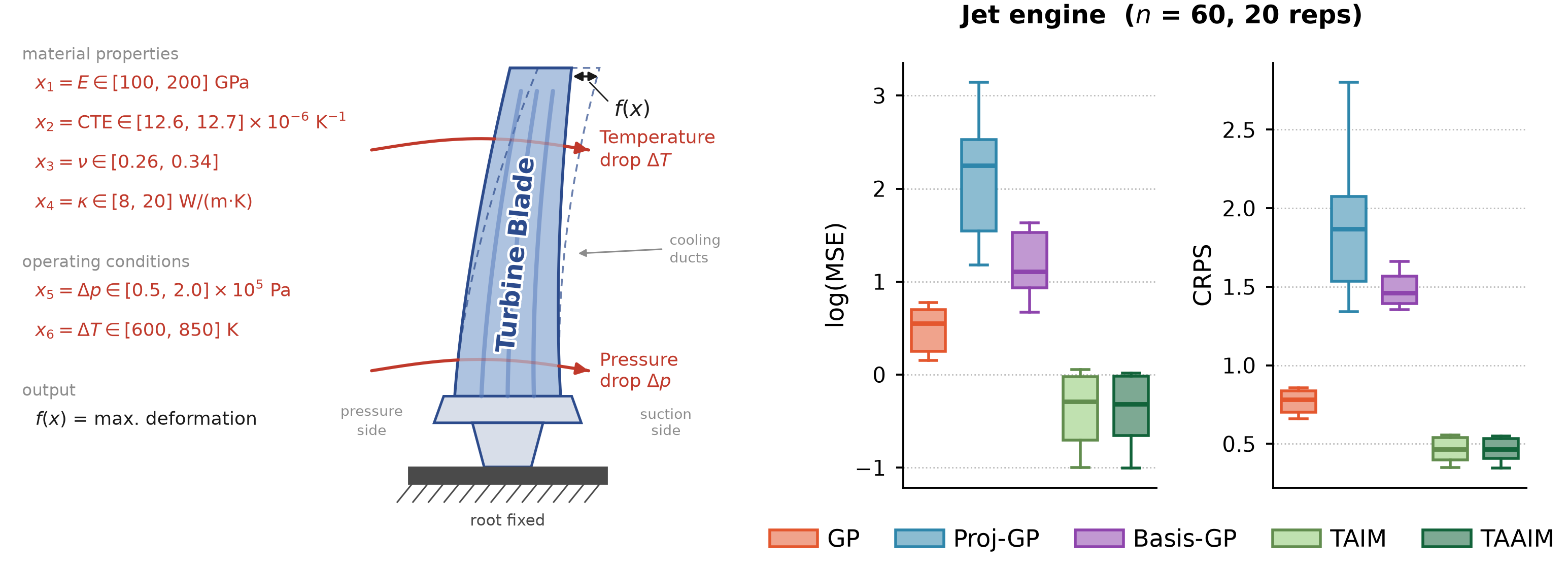}
    \caption{(Left) Visualizing the jet turbine blade simulator set-up, with its $d=6$ inputs and their input ranges. (Right) Boxplots of the test log-MSE and CRPS for the jet turbine blade application over 20 replications.}
    \label{fig:jet-engine}
\end{figure}

Figure \ref{fig:jet-engine} (left) shows the simulation set-up for our jet turbine blade, implemented in the MATLAB module in \cite{matlab_jet}. We investigate $d=6$ inputs: the first four inputs $x_1, \cdots, x_4$ (Young's modulus, coefficient of thermal expansion, Poisson's ratio, thermal conductivity) dictate material properties of the turbine blade, and the last two inputs $x_5, x_6$ (pressure drop, temperature drop) control operating conditions. Given such inputs, the simulator uses a complex static-structural thermal expansion model to simulate the maximum thermal deformation over the blade. Here, the simulator has stochastic noise. The goal is to train a surrogate model that efficiently predicts (with UQ) the mean output from this expensive simulator over an input range (Figure \ref{fig:jet-engine} left), which can then be used to guide the material design of turbine blades.

A careful analysis of this physical system reveals useful isotonicity properties that can be incorporated. First, on operating conditions, as pressure drop ($x_5$) and temperature drop ($x_6$) increase, the amount of blade deformation should naturally increase. Second, on material properties, 
as the coefficient of thermal expansion $x_2$ (which quantifies how much a material expands when heated) increases, or as Young's modulus $x_1$ (which measures a material's stiffness) decreases, one also expects the deformation of the blade to naturally increase. We will investigate how effective the compared models are at incorporating such isotonicity information for surrogate modeling.

The experiment is structured as follows. We use the same benchmark models and set-up from Section \ref{sec: simulation}, subject to the same 30-minute time constraint for MCMC. Proj-GP and Basis-GP are run at the finest discretization level ($N=3$) that ensures this time constraint; this yields $3^6 = 729$ basis functions for the latter model. All models use the same LHD points with a sample size of $n=10d=60$. Each experiment is replicated $20$ times. 

Figure \ref{fig:jet-engine} (right)
shows the test log-MSE and CRPS boxplots for this jet turbine application. Both existing isotonic models (Proj-GP and Basis-GP) perform considerably worse than the standard GP benchmark, which suggests that the high-dimensional challenges faced by such models can outweigh the incorporation of isotonicity information. In contrast, our models perform considerably better than the benchmarks, which again shows the value of a flexible transformed additive modeling form for effective isotonic modeling with many input variables. Here, TAIM and TAAIM yield similar performance.

\subsection{Aircraft Wing Deflection}\label{ssec: wing spar}

Consider next the load analysis of an aircraft wing, which is critical for ensuring aircraft structural integrity. Of particular importance is the main wing spar, which is subject to aerodynamic lift loads at certain stations during flight. We consider the simulation set-up of a wing spar cantilever beam in the MATLAB module in \cite{mathworks_wingspar_rom}, which has 10 spanwise stations (see Figure \ref{fig:Metric comparison 11D} left). We investigate $d=11$ input variables: the first is dynamic pressure ($x_1$, with range [24500, 38281] Pa), and the rest are lift load multipliers at each station ($x_2, \cdots, x_{11}$, with range $[0.05,3.00]$). Given these inputs, the simulator makes use of a sophisticated structural dynamics model to simulate the maximum vertical deflection over the ten stations. As before, this simulator has stochastic noise.

Useful isotonicity information can again be elicited via a careful physical analysis. As dynamic pressure $x_1$ increases, the maximum vertical deflection should naturally increase. Similarly, as the lift load multipliers $x_2, \cdots, x_{11}$ increase at each station, this deflection should increase as well. We will inspect how well each model incorporates such isotonicity information in this higher-dimensional surrogate modeling problem. The same 30-minute time constraint on MCMC sampling is imposed here; to satisfy this, Proj-GP and Basis-GP are implemented with $N=2$, yielding $2^{11}=2048$ basis functions for the latter model. All models use the same $n=10d=110$-point LHD. The experiment is replicated 20 times.

Figure \ref{fig:Metric comparison 11D} (right) shows the test log-MSE and CRPS boxplots for this aircraft wing application. Similar to before, Proj-GP and Basis-GP perform worse than the standard GP, which again shows that such existing models may face computational challenges in high dimensions that outweigh the incorporation of isotonicity information. In contrast, our TAIM and TAAIM models perform considerably better than all benchmarks on both metrics, which shows the effectiveness of a flexible transformed additive framework for isotonic modeling with many inputs. Here, TAIM and TAAIM perform similarly.

\section{Conclusion}
\label{sec: conclusion}

This paper introduces a new Bayesian model, called TAIM, which tackles the problem of surrogate modeling with many input variables provided isotonicity information. Such information can often be elicited from physical systems, and its integration can enhance surrogate modeling with limited data. However, existing Bayesian isotonic models can face computational and statistical challenges in high dimensions, which may cause worse performance than standard models without isotonicity in practical applications. To address this, TAIM makes use of a flexible transformed additive isotonic modeling framework. Using a data-estimated link function and a monotone basis model with spike-and-slab basis weight priors, we show that TAIM enjoys statistical and computational advantages that can ``tame'' the curse-of-dimensionality. Prediction-wise, TAIM achieves a posterior contraction rate (up to log factors) of $\mathcal{O}(n^{-1/3})$ when the true function is in a transformed additive isotonic form under mild smoothness conditions. Computation-wise, posterior sampling can be performed via an efficient Gibbs sampler, with each sampling iteration requiring work linear in dimension $d$. We further present an extension of TAIM for modeling potential deviations from transformed additivity. Numerical experiments and two applications show the effectiveness of TAIM for isotonic surrogate modeling with many input variables.

\vspace{0.2cm}
\small
\spacingset{1.03}
\bibliographystyle{apalike}
\bibliography{reference}


\clearpage
\setcounter{section}{0}
\setcounter{equation}{0}
\setcounter{theorem}{0}
\renewcommand{\thesection}{S\arabic{section}}
\renewcommand{\theequation}{S\arabic{equation}}
\renewcommand{\thetheorem}{S\arabic{theorem}}
\renewcommand{\theHsection}{S\arabic{section}}
\renewcommand{\theHequation}{S\arabic{equation}}
\renewcommand{\theHtheorem}{S\arabic{theorem}}
\allowdisplaybreaks
\normalsize
\spacingset{1.55}

\begin{center}
{\LARGE\bf Supplementary Materials}
\end{center}

\section{Proofs of Theorems 1 and 2}\label{sec: proofs}

In what follows, we present the proofs for Theorems 1 and 2 of the main paper, which concern the posterior contraction rates in the 1-d and multi-dimensional settings, respectively. For simplicity, we prove this for the specific case of $\mu_0 = 0$ and $\tau_0^2 = \tau_g^2 = \tau_1^2 = \cdots = \tau_d^2 = 1$, since the case with general $\mu_0, \tau_g^2, \tau_1^2, \cdots, \tau_d^2$ follows immediately via rescaling and affects only constants in the contraction rate. For a function $f_0$ and radius $\epsilon > 0$, let
$B(f_0, \epsilon) = \{f: \|f - f_0\|_\infty < \epsilon\}$ denote the sup-norm ball around $f_0$, and let
$\cC(\eta, \cS, \|\cdot\|)$ denote the $\eta$-covering number of a set $\cS$ with respect to a norm $\|\cdot\|$, i.e., the minimal number of points $s_1, \cdots, s_m$ such that every $s \in \cS$ satisfies $\|s - s_j\| \le \eta$ for some $1 \le j \le m$ (see Appendix C of \cite{ghosal2017fundamentals} for details). We call such a set of points $\{s_1, \cdots, s_m\}$ the $\eta$-net of $\cS$. Further let $\Pi(\cdot)$ denote the prior distribution on $f$ (or $g$ in the 1-d setting). 

We first prove an intermediate lemma in Section \ref{ssec: small ball probability} that lower bounds the small ball probability, then use this to show the 1-d posterior contraction rate in Theorem 1 of the main paper. We then use these results to prove the posterior contraction rate for TAIM in Theorem 2 of the main paper.

\subsection{Small ball probability bound}\label{ssec: small ball probability}

Following Theorem 1 of the main paper, let $g$ follow the basis model in Equation \eqref{eqn: monotone link function} of the main paper (with domain constrained to $[0,1]$) with discretization level $N$ and priors following Equation \eqref{eqn: weightprior2} of the main paper. 
The following lemma shows that the basis model on $g$, equipped with spike-and-slab priors on basis weights, provides a lower bound on the ``small ball probability'' \cite{van2011information}, i.e., the prior probability of the sup-norm ball $B(g_0, \epsilon)$ around the true function $g_0$.
\begin{lemma}\label{lem: prior probability lower bound}
    For a monotone increasing $g_0 \in C^{\alpha}[0, 1]$, select $G_2>0$ to satisfy $G_2 \ge g_0(1) - g_0(0)$. Then there exists $C_1 >0$ such that, for all $\epsilon \in (0,1]$ and $N \ge C_1\epsilon^{-1/(\alpha \land 1)}$,
    \begin{equation}
        \Pi\big\{B(g_0, \epsilon)\big\}  \ge \exp\left\{-\frac{4G_2}{\epsilon} \log \frac{4N\sqrt{2\pi}}{\zeta\epsilon} - \frac{3G_2\epsilon}{4} - 2\zeta - g_0(0)^2 - \epsilon^2 + \log \frac{\epsilon}{2\sqrt{2\pi}}\right\}.
    \end{equation}
\end{lemma}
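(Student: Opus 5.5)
We would prove the bound by explicit construction: build a sparse monotone step-like approximant to $g_0$ out of the basis functions $\psi_j$ with only $\cO(G_2/\epsilon)$ active weights, and then lower bound the prior mass of a neighbourhood of that approximant's coefficients.

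\emph{Approximation step.} Since $g_0\in C^\alpha[0,1]$ and monotone, it is H\"older with exponent $\alpha\land 1$ and some constant $L$. Hence $|g_0(x)-g_0(x')|\le L N^{-(\alpha\land1)}$ whenever $|x-x'|\le 1/N$. The condition $N\ge C_1\epsilon^{-1/(\alpha\land1)}$ with $C_1$ large makes this at most $\epsilon/4$. Consider the piecewise linear interpolant of $g_0$ at the knots $k/N$, namely $g_0(0)+\sum_{j}\{g_0(j/N)-g_0((j-1)/N)\}\psi_j$. It lies within $\epsilon/4$ of $g_0$, but it uses all $N$ basis terms. To sparsify it, we would use level crossings of $g_0$ instead: let $t_k$ be the first knot $j$ at which $g_0(j/N)-g_0(0)$ exceeds $k\epsilon/4$, for $k=1,\dots,m$ with $m\le \lceil 4G_2/\epsilon\rceil$. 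Put weight $\epsilon/4$ (or the appropriate increment) on $\psi_{t_k}$ only. Each $\psi_j$ ramps over a single cell of width $1/N$, and within a cell $g_0$ moves by at most $\epsilon/4$. The resulting monotone function $g^*=\alpha_0^*+\sum_{j\in S}\alpha_j^*\psi_j$, with $|S|=s\le 4G_2/\epsilon$, therefore satisfies $\|g^*-g_0\|_\infty\le \epsilon/2$. We set $\alpha_0^*=g_0(0)$, and every active weight lies in $(0,\epsilon/4]$ or is bounded by $\epsilon$.

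\emph{Perturbation step.} The basis functions have disjoint ramps, so $\sum_j \psi_j(x)\le \#\{$active$\}$ in general. Each $\psi_j\in[0,1]$, so perturbations of the active weights accumulate. Suppose $|\alpha_0-\alpha_0^*|\le\epsilon/4$, $I_j=1$ with $|Z_j-\alpha_j^*|\le \epsilon/(4s)$ for $j\in S$, and $I_j=0$ for $j\notin S$. Then $\|g-g^*\|_\infty\le \epsilon/2$, and so $g\in B(g_0,\epsilon)$. By independence the prior probability of this event factorizes into the following pieces.
\begin{itemize}
\item The normal intercept contributes at least $\frac{\epsilon}{2\sqrt{2\pi}}e^{-(|g_0(0)|+\epsilon/4)^2/2}$, which gives the $-g_0(0)^2-\epsilon^2+\log\frac{\epsilon}{2\sqrt{2\pi}}$ terms.
\item The spike indicators contribute $(\zeta/N)^s(1-\zeta/N)^{N-s}\ge (\zeta/N)^s e^{-2\zeta}$, valid for $N\ge 2\zeta$, which gives the $-2\zeta$ term.
\item The half-normal slabs contribute, for each active weight, at least $2\cdot\frac{\epsilon}{4s}\cdot\frac{1}{\sqrt{2\pi}}e^{-(\alpha_j^*+\epsilon)^2/2}$.
\end{itemize}
Taking logs and using $s\le 4G_2/\epsilon$ and $s\le N$ should give the bound $s\log\frac{4N\sqrt{2\pi}}{\zeta\epsilon}$. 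The Gaussian exponents, summed over $s$ terms with $\alpha_j^*$ of order $\epsilon$, produce $-\frac{3G_2\epsilon}{4}$-type terms.

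\emph{Main obstacle.} The delicate part is making the sparse approximant's constants match. We need exactly $s\le 4G_2/\epsilon$ terms with small weights, and the per-weight window $\epsilon/(4s)$ must combine so that the factor $s$ inside the log is absorbed. Writing $\epsilon/(4s)\ge \epsilon^2/(16G_2)$ gives a slightly different form, so we expect to trade $s\le N$ or to allocate the window as $\epsilon/(4N)$ to arrive at $\log(4N\sqrt{2\pi}/(\zeta\epsilon))$. We would also handle the case where $g_0$ jumps by more than $\epsilon/4$ within a cell; this is excluded by the H\"older condition, and that is exactly where $C_1$ enters.
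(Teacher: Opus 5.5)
Your approximation step works: level crossings at multiples of $\epsilon/4$ give fewer than $4G_2/\epsilon$ active ramps with $\|g^*-g_0\|_\infty\le\epsilon/2$. The perturbation step, however, cannot deliver the stated inequality, and it fails exactly at the obstacle you flag.

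You control each active weight separately through $|Z_j-\alpha_j^*|\le\epsilon/(4s)$. This costs a factor of order $1/s$ in every coordinate, so the indicator-times-slab factor per active weight is about $\zeta\epsilon/(2sN\sqrt{2\pi})$. The exponent then becomes $-s\log\{2sN\sqrt{2\pi}/(\zeta\epsilon)\}$. Relative to the claim, this carries an extra $s\log(s/2)$, which is of order $\frac{G_2}{\epsilon}\log\frac{G_2}{\epsilon}$ and cannot be absorbed into the fixed lower-order terms.

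Neither of your repairs removes the loss. Invoking $s\le N$, or using windows of width $\epsilon/(4N)$, turns $\log N$ into $\log N^2$ and doubles the leading term. Such a weaker bound would still suffice downstream, since Theorems 1 and 2 only need $\log\Pi\{B(g_0,\epsilon)\}\gtrsim-\epsilon^{-1}\log(N/\epsilon)$, but it does not prove the lemma as stated.

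Separately, your slab exponent $(\alpha_j^*+\epsilon)^2/2$, summed over up to $4G_2/\epsilon$ terms, gives $25G_2\epsilon/8$ rather than $3G_2\epsilon/4$. That term also needs a tighter estimate.

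The missing idea is to constrain the \emph{partial sums} of the coefficient deviations rather than each deviation. The active ramps $\psi_{k_1},\dots,\psi_{k_J}$ occupy disjoint, ordered cells, and each equals $1$ to the right of its cell. Hence $\sum_j v_j\psi_{k_j}$ is piecewise linear with knot values $\sum_{i\le l}v_i$. This gives $\|\sum_j v_j\psi_{k_j}\|_\infty=\max_l|\sum_{i\le l}v_i|=\|B^{\top}v\|_\infty$, where $B$ is unit upper triangular.

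Now take the event $\{\|B^{\top}(Z_A-\alpha_A^*)\|_\infty<\epsilon/8\}$. It is the preimage of a cube of side $\epsilon/4$ under a determinant-one linear map, so its volume is $(\epsilon/4)^J$, with no $J$-dependence in the per-coordinate scale. It still lies in $(0,\infty)^J$: each coordinate deviation is a difference of two consecutive partial sums, so it is smaller than $\epsilon/4$ in absolute value, while $\alpha^*_{k_l}\ge\epsilon/4$. On this set the half-normal density equals $2^J$ times the standard normal density, which can be bounded below using $\|z\|_2^2<2\|\alpha_A^*\|_2^2+J\epsilon^2/8$ and $\|\alpha_A^*\|_2^2\le\max_j\alpha^*_{k_j}\sum_j\alpha^*_{k_j}\le G_2\epsilon/2$. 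This yields $p_1\ge e^{-G_2\epsilon/2}\{\epsilon/(4\sqrt{2\pi})\}^Je^{-J\epsilon^2/16}$. Multiplied by $(\zeta/N)^J$, it gives exactly $-J\log\{4N\sqrt{2\pi}/(\zeta\epsilon)\}$, and together with $J<4G_2/\epsilon$ it gives the $-3G_2\epsilon/4$ term. This is the paper's argument.
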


\begin{proof}
Since $g_0 \in C^{\alpha}[0, 1]$, there exists $C_1 > 0$ such that for all
$N \ge C_1 \epsilon^{-1/(\alpha \land 1)}$,
\begin{equation}\label{eqn: grid oscillation}
    |g_0(x) - g_0(y)| < \epsilon/4
\end{equation}
holds whenever $|x - y| \le 1/N$. We fix such an $N$ and construct an increasing integer
sequence $\{k_j\}_{0 \le j \le J}$ as:
\begin{equation}
    k_{j+1} = \min \Big\{k: g_0\Big( \frac{k}{N}\Big) - g_0\Big( \frac{k_j}{N}\Big) > \frac{\epsilon}{4}, \, 1 \le k \le N \Big\},
\end{equation}
with $k_0 = 0$ and the last index $J$ satisfying
$g_0(1) - g_0(k_J/N) \le \epsilon/4$. For $G_2 \ge g_0(1) - g_0(0)$, $J < 4G_2/\epsilon$ holds
since:
\begin{equation}\label{eqn: J bound}
    G_2 \ge \sum_{j=1}^J \Big\{g_0 \Big(\frac{k_j}{N}\Big) - g_0\Big(\frac{k_{j-1}}{N} \Big) \Big\} > \frac{J\epsilon}{4}.
\end{equation}
We define $A = \{k_1, \cdots, k_J\}$, $\alpha_0^* = g_0(0)$, and:
\begin{equation}\label{eqn: target coefficients}
    \alpha^*_{k_j} = g_0\Big(\frac{k_j}{N}\Big) - g_0\Big(\frac{k_{j-1}}{N}\Big)
\end{equation}
for $1 \le j \le J$. By the minimality of $k_j$ together with \eqref{eqn: grid oscillation},
these coefficients satisfy $\epsilon/4 < \alpha^*_{k_j} \le \epsilon/2$. Letting:
\begin{equation}
    g^*_N(x) = \alpha^*_0 + \sum_{j=1}^J \alpha^*_{k_j} \psi_{k_j}(x),
\end{equation}
we obtain $g^*_N(k_l/N) = g_0(k_l/N)$ for all $0 \le l \le J$, since
$\psi_{k_j}(k_l/N) = 1_{k_j \le k_l}$ and the resulting sum telescopes. As $g_0$ and $g^*_N$ are
both monotone increasing and agree at these points, their deviation on each block
$[k_j/N, k_{j+1}/N]$ is at most the increment of $g_0$ over that block, which is bounded by
$\epsilon/2$ from \eqref{eqn: target coefficients}. Thus,
$\|g^*_N - g_0\|_\infty \le \epsilon/2$, and applying the triangle inequality, we have:
\begin{equation}\label{eqn: reduction to hNstar}
    \Pi\big(B(g_0, \epsilon)\big) \ge \Pr\big(\|g - g^*_N\|_\infty < \epsilon/2 \big).
\end{equation}

Define the matrix $B \in \bbR^{J \times J}$ satisfying $B_{ij} = 1_{i \le j}$. For $v \in \bbR^J$, the function
$\sum_j v_j \psi_{k_j}$ is piecewise linear and vanishes at the origin, so its supremum norm is
attained at a knot and $\|\sum_j v_j \psi_{k_j}\|_\infty = \|B^{\!\top} v\|_\infty$. We consider
the event:
\begin{equation}\label{eqn: coefficient event}
    \mathcal{E} = \Big\{ I_m = 1_{m \in A} \ \forall m \Big\}
    \cap \Big\{ |\alpha_0 - \alpha_0^*| < \frac{\epsilon}{4} \Big\}
    \cap \Big\{ \big\|B^{\!\top}\big(Z_A - \alpha^*_A\big)\big\|_\infty < \frac{\epsilon}{8} \Big\},
\end{equation}
where $Z_A = (Z_{k_1}, \cdots, Z_{k_J})^{\!\top}$ and
$\alpha^*_A = (\alpha^*_{k_1}, \cdots, \alpha^*_{k_J})^{\!\top}$. On the event $\cE$, we have
$g - g^*_N = (\alpha_0 - \alpha_0^*) + \sum_j (Z_{k_j} - \alpha^*_{k_j})\psi_{k_j}$, and hence
$\|g - g^*_N\|_\infty < \epsilon/4 + \epsilon/8 < \epsilon/2$. From the mutual independence of
$\alpha_0$, $\{I_m\}$ and $\{Z_m\}$, we obtain:
\begin{equation}\label{eqn: three factor split}
    \Pr\big(\|g - g^*_N\|_\infty < \epsilon/2 \big) \ge \Pr(\mathcal{E})
    = p_0 \, p_1 \Big(\frac{\zeta}{N}\Big)^{J}\Big(1 - \frac{\zeta}{N}\Big)^{N-J},
\end{equation}
with $p_0 = \Pr(|\alpha_0 - \alpha_0^*| < \epsilon/4)$ and
$p_1 = \Pr(\|B^{\!\top}(Z_A - \alpha^*_A)\|_\infty < \epsilon/8)$.

Let $\phi(\cdot)$ and $\Phi(\cdot)$ denote the probability density and cumulative distribution functions
of the standard normal distribution, respectively. Then:
\begin{align}
    p_0 &= \Phi\big(\alpha_0^* + \epsilon/4\big) - \Phi\big(\alpha_0^* - \epsilon/4\big) \nonumber \\
    & \ge \frac{\epsilon}{2} \phi\big(|\alpha_0^*| + \epsilon/4\big) \nonumber \\
    &\ge \frac{\epsilon}{2\sqrt{2\pi}} \exp\Big(-\frac{\{|g_0(0)| + \epsilon/4\}^2}{2}\Big)
    \ge \frac{\epsilon}{2\sqrt{2\pi}} \exp\big(- g_0(0)^2 - \epsilon^2 \big).
    \label{eqn: p0 lower bound}
\end{align}
For $p_1$, we set $S = \{z \in \bbR^J : \|B^{\!\top}(z - \alpha_A^*)\|_\infty < \epsilon/8\}$, so
that $p_1 = \Pr(Z_A \in S)$. Since $(B^{\!\top} v)_l = \sum_{i \le l} v_i$, consecutive
coordinates of $B^{\!\top}(z - \alpha^*_A)$ differ by exactly $z_l - \alpha^*_{k_l}$, so any
$z \in S$ satisfies $|z_l - \alpha^*_{k_l}| < \epsilon/8 + \epsilon/8 = \epsilon/4$ for all $l$;
combined with $\alpha^*_{k_l} > \epsilon/4$ from \eqref{eqn: target coefficients}, this gives
$z_l > 0$ for all $l$ and hence $S \subset (0, \infty)^J$. The coordinates of $Z_A$ are
independent $\cN^+(0, 1)$ variables, whose density on $(0, \infty)$ is $2\phi$, so for
$\widetilde Z \sim \cN_J(0, I)$, we have:
\begin{equation}\label{eqn: half normal to normal}
    p_1 = \int_S \prod_{l=1}^J 2\phi(z_l)\, dz = 2^{J}\Pr(\widetilde Z \in S)
    \ge \Pr(\widetilde Z \in S).
\end{equation}
Here, the condition $S \subset (0, \infty)^J$ allows the half-normal density to be
replaced by $2\phi$ on all of $S$.

Since $\widetilde Z$ has density $(2\pi)^{-J/2}\exp(-\|z\|_2^2/2)$, we bound
$\Pr(\widetilde Z \in S)$ below by the volume of $S$ times the infimum of this density over $S$.
Since $B$ is unit upper triangular, $\det B = 1$, so $S$, the preimage of the cube
$(-\epsilon/8, \epsilon/8)^J$ under $z \mapsto B^{\!\top}(z - \alpha^*_A)$, has volume
$(\epsilon/4)^J$. Moreover, every $z \in S$ satisfies $|z_l - \alpha^*_{k_l}| < \epsilon/4$ as
shown above, so $\|z\|_2^2 \le 2\|\alpha^*_A\|_2^2 + 2\|z - \alpha^*_A\|_2^2
< 2\|\alpha^*_A\|_2^2 + J\epsilon^2/8$. Finally, the increments $\alpha^*_{k_j}$ telescope, so
$\sum_{j=1}^J \alpha^*_{k_j} = g_0(k_J/N) - g_0(0) \le G_2$, and together with
$\alpha^*_{k_j} \le \epsilon/2$ this gives $\|\alpha^*_A\|_2^2 \le \max_j \alpha_{k_j}^* \sum_{j=1}^J \alpha_{k_j}^* \le G_2\epsilon/2$. Combining with \eqref{eqn: half normal to normal}, we obtain:
\begin{equation}\label{eqn: p1 lower bound}
    p_1 \ge \Pr(\widetilde Z \in S) \ge \exp\Big(-\frac{G_2\epsilon}{2}\Big)
    \Big(\frac{\epsilon}{4\sqrt{2\pi}}\Big)^{J} \exp\Big(-\frac{J\epsilon^2}{16}\Big).
\end{equation}
Also, $(1 - \zeta/N)^{N-J} \ge (1 - \zeta/N)^{N} \ge \exp(-2\zeta)$ holds for $N > 2\zeta$,
since $\log(1 - x) \ge -2x$ for $0 < x \le 1/2$. Substituting these and
\eqref{eqn: p0 lower bound} into \eqref{eqn: three factor split} and taking logarithms, we have:
\begin{equation}\label{eqn: log assembly}
    \log \Pr(\cE) \ge
    \Big(\log \frac{\epsilon}{2\sqrt{2\pi}} - g_0(0)^2 - \epsilon^2\Big)
    + \Big(-\frac{G_2\epsilon}{2} + J \log \frac{\epsilon}{4\sqrt{2\pi}}
    - \frac{J\epsilon^2}{16}\Big)
    + \Big(J \log \frac{\zeta}{N} - 2\zeta\Big).
\end{equation}
Here, two terms proportional to $J$ combine into
$-J \log\{4N\sqrt{2\pi}/(\zeta\epsilon)\}$, which is negative since $N > 2\zeta$. Therefore, replacing $J$ with a larger quantity $4G_2/\epsilon$ from \eqref{eqn: J bound} gives a looser lower bound, and similarly, we have $-J\epsilon^2/16 \ge -G_2\epsilon/4$. Combining these with
\eqref{eqn: reduction to hNstar}, we obtain:
\begin{equation}\label{eqn: exact lower bound on small ball probability}
    \Pi\big(B(g_0, \epsilon)\big) \ge
    \exp\Big\{ -\frac{4G_2}{\epsilon}\log\frac{4N\sqrt{2\pi}}{\zeta\epsilon}
    - \frac{3G_2\epsilon}{4} - 2\zeta - g_0(0)^2 - \epsilon^2
    + \log\frac{\epsilon}{2\sqrt{2\pi}} \Big\}.
\end{equation}
Taking $C_1$ to be large enough such that both \eqref{eqn: grid oscillation} and $N > 2\zeta$ hold concludes
the proof.
\end{proof}

\subsection{Proof of Theorem \ref{thm: 1d}}\label{ssec: 1D rate proof}

We now present the proof of Theorem 1 in the main paper. This proof follows the general strategy for proving posterior contraction rates outlined in \cite{ghosal2000convergence, ghosal2007convergence}. We first prove several lemmas, which together with the earlier Lemma \ref{lem: prior probability lower bound}, allow us to show the posterior contraction rate claimed in Theorem 1 of the main paper.

For fixed-design regression with Gaussian noise, the following lemma provides a sufficient condition on the function prior for achieving a posterior contraction rate of $\epsilon_n$. This is a variant of Theorem 8.26 of \citet{ghosal2017fundamentals} (see also Section 7.7 of \citet{ghosal2007convergence}), and we give a direct proof similar to Theorem 2.1 of \citet{ghosal2000convergence}.

\begin{lemma}\label{lem: gaussian contraction}
    Let $\mathbf{x}_1, \cdots, \mathbf{x}_n$ be fixed points on the domain $\mathcal{X}$, and let
    $y_i = f_0(\mathbf{x}_i) + \epsilon_i$ with $\epsilon_i \overset{i.i.d.}{\sim} \cN(0, \sigma_\epsilon^2)$
    for a known $\sigma_\epsilon^2$ and a given function $f_0 : \mathcal{X} \to \bbR$. Let $\Pi$ be the considered
    prior distribution on the functions $f : \mathcal{X} \to \bbR$, and let $\epsilon_n \to 0$ with
    $n\epsilon_n^2 \to \infty$. Suppose there exist sets $\cF_n$ of functions $f$ such that, for sufficiently large $n$ and for constants $C_3, A' > 0$ and
    $D > C_3 + 2\sigma_\epsilon^{-2}$, we have:
    \begin{align}
        &\Pi\big(\|f - f_0\|_n < \epsilon_n\big) \ge \exp(-C_3 n\epsilon_n^2),
        \label{eqn: lemma prior mass}\\
        &\Pi(\cF_n^c) \le \exp(-D n\epsilon_n^2),
        \label{eqn: lemma sieve}\\
        &\log \cC\big(\epsilon_n/8, \cF_n, \|\cdot\|_n\big) \le A' n\epsilon_n^2.
        \label{eqn: lemma entropy}
    \end{align}
    These are typically referred to as the Kullback-Leibler (KL) ball, sieve and metric entropy conditions, respectively. Then there
    exists a $K > 0$ such that $\bbE_{f_0}\Pi(\|f - f_0\|_n > K\epsilon_n \mid \Data) \to 0$.
\end{lemma}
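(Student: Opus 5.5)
We follow the standard testing-plus-evidence-lower-bound argument of \citet{ghosal2000convergence}, specialized to fixed-design Gaussian regression. Write the posterior mass of $U_n=\{f:\|f-f_0\|_n>K\epsilon_n\}$ as the ratio $N_n/D_n$, where
\[
D_n=\int \prod_{i=1}^n \frac{p_f(y_i)}{p_{f_0}(y_i)}\,d\Pi(f)
\]
and $N_n$ is the same integral restricted to $U_n$. We then bound
\[
\bbE_{f_0}\Pi(U_n\mid\Data)\le \bbE_{f_0}\phi_n+\Pr_{f_0}(D_n<e^{-c n\epsilon_n^2})+e^{c n\epsilon_n^2}\big\{\Pi(\cF_n^c)+\sup_{f\in U_n\cap\cF_n}\bbE_f(1-\phi_n)\big\},
\]
where $\phi_n$ is a test function to be constructed and $c$ is chosen below. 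This uses Fubini: $\bbE_{f_0}$ of the likelihood ratio integrated over any set equals its prior mass, and for $f\in\cF_n$ the $\bbE_{f_0}[(1-\phi_n)\times\text{likelihood ratio}]$ term equals $\bbE_f(1-\phi_n)$.

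\textbf{Evidence lower bound.} For fixed design with known variance, the KL divergence between the $n$-sample Gaussian laws is $n\|f-f_0\|_n^2/(2\sigma_\epsilon^2)$. The log-likelihood ratio variance is $n\|f-f_0\|_n^2/\sigma_\epsilon^2$. So the ball $\{\|f-f_0\|_n<\epsilon_n\}$ is contained in the usual KL-type neighbourhood. Restricting $D_n$ to this ball, normalizing the prior there, and applying Jensen plus Chebyshev (Lemma 8.1 of \citet{ghosal2000convergence}) gives
\[
\Pr_{f_0}\Big\{D_n<\Pi(\|f-f_0\|_n<\epsilon_n)\,e^{-(1+C)n\epsilon_n^2/\sigma_\epsilon^2}\Big\}\le \frac{1}{C^2 n\epsilon_n^2}\to0.
\]
With \eqref{eqn: lemma prior mass}, $D_n\ge e^{-(C_3+2\sigma_\epsilon^{-2})n\epsilon_n^2}$ with probability tending to one (taking $C=1$). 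Set $c=C_3+2\sigma_\epsilon^{-2}$. The condition $D>c$ then makes the sieve term $e^{cn\epsilon_n^2}\Pi(\cF_n^c)\le e^{-(D-c)n\epsilon_n^2}\to0$.

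\textbf{Tests.} For a single alternative $f_1$ with $\|f_1-f_0\|_n\ge K\epsilon_n$, use the likelihood-ratio (Neyman--Pearson) test
\[
\phi=1\Big\{\textstyle\sum_i (y_i-f_0(\mathbf{x}_i))(f_1-f_0)(\mathbf{x}_i)>\tfrac n2\|f_1-f_0\|_n^2\Big\}.
\]
Gaussian tail bounds give $\bbE_{f_0}\phi\le e^{-n\|f_1-f_0\|_n^2/(8\sigma_\epsilon^2)}$. They also give $\sup\{\bbE_f(1-\phi):\|f-f_1\|_n\le\|f_1-f_0\|_n/4\}\le e^{-n\|f_1-f_0\|_n^2/(32\sigma_\epsilon^2)}$, since shifting the mean by at most a quarter of the separation preserves a margin.

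Next, peel $U_n\cap\cF_n$ into shells $S_j=\{jK\epsilon_n<\|f-f_0\|_n\le(j+1)K\epsilon_n\}$. Cover each shell by an $\epsilon_n/8$-net of $\cF_n$, with cardinality at most $e^{A'n\epsilon_n^2}$ by \eqref{eqn: lemma entropy}. Since $\epsilon_n/8\le jK\epsilon_n/4$ once $K\ge1/2$, each net point in the shell yields a local test. Take $\phi_n$ to be the maximum of all these tests. Summing,
\[
\bbE_{f_0}\phi_n\le\sum_{j\ge1}e^{A'n\epsilon_n^2}e^{-j^2K^2n\epsilon_n^2/(8\sigma_\epsilon^2)}\to0,
\]
and $\sup_{U_n\cap\cF_n}\bbE_f(1-\phi_n)\le e^{-K^2n\epsilon_n^2/(32\sigma_\epsilon^2)}$. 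Choosing $K$ large so that $K^2/(8\sigma_\epsilon^2)>A'$ and $K^2/(32\sigma_\epsilon^2)>c$ makes every term vanish, using $n\epsilon_n^2\to\infty$.

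\textbf{Main obstacle.} The delicate step is the evidence lower bound. We must check that the empirical-norm ball genuinely controls both the KL divergence and the second moment of the log-likelihood ratio in the fixed-design setting; this works because the noise is Gaussian with known variance. The resulting constants must also line up with the stated threshold $D>C_3+2\sigma_\epsilon^{-2}$. The net-based test construction is routine; the only care needed is applying the net to centers of $\cF_n$ within each shell rather than to $\cF_n$ globally, and the monotonicity of covering numbers handles this.
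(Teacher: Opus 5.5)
Your proposal is correct and follows essentially the same route as the paper. Both arguments use the Fubini/change-of-measure decomposition, the evidence lower bound from $\mathrm{KL}=n\|f-f_0\|_n^2/(2\sigma_\epsilon^2)$ and $V=n\|f-f_0\|_n^2/\sigma_\epsilon^2$ via Jensen and Chebyshev (which gives exactly the threshold $C_3+2\sigma_\epsilon^{-2}$), and a maximum of Gaussian likelihood-ratio tests over an $\epsilon_n/8$-net.

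The differences are cosmetic. The paper rescales to $\sigma_\epsilon=1$ and skips your shell peeling, which is unnecessary here: the entropy bound is global, and every selected net point already satisfies $\|f_l-f_0\|_n>K\epsilon_n$. On the net-center issue you flag, monotonicity of covering numbers does not by itself give centers inside the shell. The paper instead picks one point of the alternative set inside each $\epsilon_n/8$-ball, so nearby points lie within $\epsilon_n/4$; with your test this needs $K\ge 1$ rather than $K\ge 1/2$, which your ``$K$ large'' choice covers.
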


\begin{proof}
Write $\sigma = \sigma_\epsilon$. Replacing $y_i$, $f$ and $f_0$ by $y_i/\sigma$, $f/\sigma$ and
$f_0/\sigma$ changes $\epsilon_n$ to $\epsilon_n/\sigma$, and conditions
\eqref{eqn: lemma prior mass}--\eqref{eqn: lemma entropy} hold with $C_3, D, A'$
replaced by $\sigma^2 C_3, \sigma^2 D, \sigma^2 A'$. As $D > C_3 + 2\sigma^{-2}$ is equivalent to
$\sigma^2 D > \sigma^2 C_3 + 2$, we may thus assume that $\sigma = 1$ and $D > C_3 + 2$.

Let $p_f$ denote the density of $y = (y_1, \cdots, y_n)$ when $y_i = f(\mathbf{x}_i) + \epsilon_i$. {Since $\log(p_{f_0}/p_f)(y) = \sum_{i=1}^n \{f_0(\mathbf{x}_i) - f(\mathbf{x}_i)\}(y_i - f_0(\mathbf{x}_i)) + n\|f - f_0\|_n^2/2$, the Kullback--Leibler divergence $\mathrm{KL}(p_{f_0}; p_f) = \bbE_{f_0}\log(p_{f_0}/p_f)(y)$ and the centered second moment $V_{2,0}(p_{f_0}; p_f) = \bbE_{f_0}\{\log(p_{f_0}/p_f)(y) - \mathrm{KL}(p_{f_0}; p_f)\}^2$ satisfy $\mathrm{KL}(p_{f_0}; p_f) = n\|f - f_0\|_n^2/2$ and $V_{2,0}(p_{f_0}; p_f) = n\|f - f_0\|_n^2$. Therefore, the Kullback--Leibler neighborhood $B_{n,2}(f_0, \epsilon) = \{f : \mathrm{KL}(p_{f_0}; p_f) \le n\epsilon^2,\ V_{2,0}(p_{f_0}; p_f) \le n\epsilon^2\}$ contains the set $\{f : \|f - f_0\|_n < \epsilon\}$ for every $\epsilon > 0$.}  

With $R_f = (p_f/p_{f_0})(y)$, we have:
\begin{equation}\label{eqn: posterior and change of measure}
    \Pi(A \mid \Data) = \frac{\int_A R_f\, d\Pi(f)}{\int R_f\, d\Pi(f)}, \qquad
    \bbE_{f_0}\{g(y)R_f\} = \bbE_f\, g(y),
\end{equation}
for any measurable set $A$ and measurable $0 \le g \le 1$. Define the event:
\begin{equation}\label{eqn: evidence event}
    \cE_n = \Big\{\int R_f\, d\Pi(f) \ge e^{-(C_3 + 2)n\epsilon_n^2}\Big\}.
\end{equation}
Since $\Pi(B_{n,2}(f_0, \epsilon_n)) \ge e^{-C_3 n\epsilon_n^2}$ by \eqref{eqn: lemma prior mass},
Lemma 8.21 of \citet{ghosal2017fundamentals} with $k = 2$ gives
$P_{f_0}(\cE_n^c) \le (n\epsilon_n^2)^{-1}$ for sufficiently large $n$.

Let $K \ge 1$ and define the set $S_n = \{f \in \cF_n : \|f - f_0\|_n > K\epsilon_n\}$. We bound the posterior probability of $S_n$ through tests. For $f_1$ with $\|f_1 - f_0\|_n > 0$, Lemma 8.27(i) of \citet{ghosal2017fundamentals} gives
that:
\begin{equation}
    \phi_{f_1} = 1\Big\{\sum_{i=1}^n (f_1 - f_0)(\mathbf{x}_i)\{y_i - f_0(\mathbf{x}_i)\}
    > \frac{n\|f_1 - f_0\|_n^2}{4}\Big\},
\end{equation}
with the following satisfied:
\begin{equation}\label{eqn: test errors}
    \bbE_{f_0}\phi_{f_1} \le e^{-n\|f_1 - f_0\|_n^2/32}, \qquad
    \sup_{f : \|f - f_1\|_n \le \|f_1 - f_0\|_n/2} \bbE_f(1 - \phi_{f_1})
    \le e^{-n\|f_1 - f_0\|_n^2/32}.
\end{equation}
By \eqref{eqn: lemma entropy}, $\cF_n$ is covered by $N_n \le e^{A' n\epsilon_n^2}$
$\|\cdot\|_n$-balls of radius $\epsilon_n/8$. Choosing $f_l \in S_n$ in each ball that
intersects $S_n$, every $f \in S_n$ lies in a ball containing some $f_l$, so that:
\begin{equation}\label{eqn: net in Sn}
    \|f - f_l\|_n \le \frac{\epsilon_n}{8} + \frac{\epsilon_n}{8} = \frac{\epsilon_n}{4}
    < \frac{K\epsilon_n}{2} < \frac{\|f_l - f_0\|_n}{2}.
\end{equation}
Let $\phi_n = \max_{l \le N_n} \phi_{f_l}$. By \eqref{eqn: test errors}, a union bound, and
$1 - \phi_n \le 1 - \phi_{f_l}$ together with \eqref{eqn: net in Sn}, we have:
\begin{equation}
    \bbE_{f_0}\phi_n \le N_n e^{-K^2 n\epsilon_n^2/32} \le e^{(A' - K^2/32)n\epsilon_n^2},
    \qquad
    \sup_{f \in S_n}\bbE_f(1 - \phi_n) \le e^{-K^2 n\epsilon_n^2/32}.
\end{equation}

Finally, as $\{f : \|f - f_0\|_n > K\epsilon_n\} \subset \cF_n^c \cup S_n$,
\eqref{eqn: posterior and change of measure} and \eqref{eqn: evidence event} gives:
\begin{align}
    \Pi\big(\|f - f_0\|_n > K\epsilon_n \mid \Data\big)
    &\le 1_{\cE_n^c} + 1_{\cE_n}\big\{\phi_n + (1 - \phi_n)\,\Pi(\cF_n^c \cup S_n \mid \Data)\big\}
    \nonumber \\
    &\le 1_{\cE_n^c} + \phi_n + e^{(C_3 + 2)n\epsilon_n^2}(1 - \phi_n)
    \int_{\cF_n^c \cup S_n} R_f\, d\Pi(f).\label{eqn: posterior contraction complement decomposition bound}
\end{align}
By \eqref{eqn: posterior and change of measure} with $g = 1 - \phi_n$, we have:
\begin{equation}
    \bbE_{f_0}\Big\{(1 - \phi_n)\int_{\cF_n^c \cup S_n} R_f\, d\Pi(f)\Big\}
    = \int_{\cF_n^c \cup S_n} \bbE_f(1 - \phi_n)\, d\Pi(f)
    \le \Pi(\cF_n^c) + \sup_{f \in S_n}\bbE_f(1 - \phi_n).
\end{equation}
Taking expectations in \eqref{eqn: posterior contraction complement decomposition bound} and combining the bounds on
$P_{f_0}(\cE_n^c)$ and $\bbE_{f_0}\phi_n$, we get:
\begin{align}
    \bbE_{f_0}\Pi\big(\|f - f_0\|_n > K\epsilon_n \mid \Data\big)
    &\le \frac{1}{n\epsilon_n^2} + e^{(A' - K^2/32)n\epsilon_n^2}
    + e^{(C_3 + 2)n\epsilon_n^2}\Big\{\Pi(\cF_n^c) + \sup_{f \in S_n}\bbE_f(1 - \phi_n)\Big\}
    \nonumber \\
    &\le \frac{1}{n\epsilon_n^2} + e^{(A' - K^2/32)n\epsilon_n^2}
    + e^{-(D - C_3 - 2)n\epsilon_n^2} + e^{(C_3 + 2 - K^2/32)n\epsilon_n^2},
\end{align}
by \eqref{eqn: lemma sieve}. Since $D > C_3 + 2$ and $n\epsilon_n^2 \to \infty$, the right-hand side
tends to zero for $K^2 > 32\max\{A', C_3 + 2\}$, which concludes the proof.
\end{proof}

Next, for $k \in \bbN$ and
$M > 0$, let $\cG_{k, M}$ denote the set of functions $g$ in the form of Equation \eqref{eqn: monotone link function} of the main paper, whose coefficients satisfy:
\begin{equation}\label{eqn: 1d sieve}
    |\alpha_0| \le M, \qquad \max_{1 \le m \le N} \alpha_m \le M, \qquad
    \#\{m : \alpha_m \neq 0\} \le k .
\end{equation}
The following lemma provides an upper bound on the metric entropy of $\cG_{k, M}$:
 
\begin{lemma}\label{lem: 1d entropy}
    For $0 < \eta \le (k+1)M$,
    \begin{equation}
        \log \cC\big(\eta, \cG_{k, M}, \|\cdot\|_\infty\big)
        \le k \log N + (k+1) \log \frac{3(k+1)M}{\eta}.
    \end{equation}
\end{lemma}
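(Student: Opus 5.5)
We bound the covering number by first enumerating the possible sparsity patterns and then, for each fixed pattern, covering the finite-dimensional coefficient box with a sup-norm grid. Every $g \in \cG_{k,M}$ has at most $k$ nonzero slope coefficients among $\alpha_1,\dots,\alpha_N$. We therefore enlarge to the union over index sets $A \subseteq \{1,\dots,N\}$ with $|A| = k$ (padding with zero coefficients when fewer are active). The number of such sets is $\binom{N}{k} \le N^k$, which produces the $k\log N$ term.

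Fix $A$. The functions with support in $A$ are $g = \alpha_0 + \sum_{m\in A}\alpha_m\psi_m$, where $|\alpha_0|\le M$ and $\alpha_m \in [0,M]$ (or $|\alpha_m|\le M$; the signed case only enlarges the set). This gives a $(k+1)$-dimensional coefficient vector lying in the cube $[-M,M]^{k+1}$. The key Lipschitz estimate is that, since $0\le \psi_m\le 1$,
\[
\Big\|(\alpha_0-\alpha_0') + \sum_{m\in A}(\alpha_m-\alpha_m')\psi_m\Big\|_\infty \le \sum_{m\in\{0\}\cup A}|\alpha_m-\alpha_m'| \le (k+1)\,\|\alpha-\alpha'\|_\infty .
\]
Hence an $\eta/(k+1)$-net of the cube in $\ell_\infty$ induces an $\eta$-net of the fixed-support class in sup norm. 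A grid on $[-M,M]$ with spacing $2\eta/(k+1)$ needs at most $\lceil (k+1)M/\eta\rceil \le 2(k+1)M/\eta \le 3(k+1)M/\eta$ points per coordinate, where the middle inequality uses the assumption $\eta\le (k+1)M$. So the fixed-support class is covered by at most $\{3(k+1)M/\eta\}^{k+1}$ sup-norm balls.

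Taking the union over supports then gives
\[
\cC \le N^k\{3(k+1)M/\eta\}^{k+1},
\]
and taking logarithms yields the claim. The net points may be chosen as arbitrary functions of basis form rather than members of $\cG_{k,M}$, which is consistent with the definition of covering number used here. The only delicate step is the rounding estimate $\lceil x\rceil\le 3x$ for $x\ge1$, and this is exactly what the hypothesis $\eta\le(k+1)M$ guarantees. Otherwise the argument is routine.
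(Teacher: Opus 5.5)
Your proposal is correct and follows essentially the same route as the paper: enumerate supports (at most $N^k$ of them), use the $(k+1)$-Lipschitz bound of the coefficient-to-function map in sup norm, and grid $[-M,M]^{k+1}$ at resolution $\eta/(k+1)$. Your per-coordinate count $\lceil (k+1)M/\eta\rceil \le 2(k+1)M/\eta$ is a slightly sharper version of the paper's $2(k+1)M/\eta + 1 \le 3(k+1)M/\eta$. The only thing to add is the case $k > N$, which can occur in the application since $k_n$ may exceed $N$: there are then no index sets with $|A| = k$. As the paper does with $|\cS| = k \wedge N$, take $A = \{1,\dots,N\}$. This gives dimension $N+1 < k+1$, and the bound still holds because the base $3(k+1)M/\eta \ge 1$ and $N^k \ge 1$.
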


\begin{proof}
    For every element of $\cG_{k, M}$, the set $\{m \ge 1 : \alpha_m \neq 0\}$ is contained in
    some $\cS \subset \{1, \cdots, N\}$ with $|\cS| = k \land N$, and there are
    $\binom{N}{k \land N} \le N^k$ such sets. Fix one such $\cS$ and let
    $\theta, \theta' \in [-M, M]^{p}$ be two coefficient vectors consisting of $\alpha_0$ and
    $\{\alpha_m\}_{m \in \cS}$, where $p = 1 + |\cS| \le k + 1$, with $p < k + 1$ only when $k > N$.
    We write $g_\theta, g_{\theta'}$ for the associated functions. Since $0 \le \psi_m \le 1$ and
    $|\cS| \le k$,
    \begin{equation}
        \|g_\theta - g_{\theta'}\|_\infty \le (k+1) \|\theta - \theta'\|_\infty .
    \end{equation}
    Covering $[-M, M]^{p}$ in $\|\cdot\|_\infty$ at resolution $\eta/(k+1)$ therefore yields
    an $\eta$-net {in $\|\cdot\|_\infty$ of $\{g_\theta : \theta \in [-M, M]^p\}$ for that $\cS$}. This requires at most
    $\{2M(k+1)/\eta + 1\}^{p} \le \{3(k+1)M/\eta\}^{k+1}$ points, where the last inequality
    uses $p \le k + 1$ and $\eta \le (k+1)M$, so that the base is at least one. Multiplying by the
    number of sets $\cS$ concludes the proof.
\end{proof}

With this, the following lemma then provides the sieve condition on $\cG_{k, M}$:
 
\begin{lemma}\label{lem: 1d sieve prior mass}
    For the prior \eqref{eqn: weightprior2} of the main paper with inclusion probability $\zeta/N$,
    \begin{equation}
        \Pi\big(\cG_{k, M}^c\big)
        \le \Big(\frac{e\zeta}{k}\Big)^{k} + (N + 2)\exp\Big(-\frac{M^2}{2}\Big).
    \end{equation}
\end{lemma}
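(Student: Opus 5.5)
The plan is a union bound over the three ways a draw of $g$ can fall outside $\cG_{k,M}$. Under the prior \eqref{eqn: weightprior2} (with the normalization $\mu_0=0$, $\tau_0^2=\tau_g^2=1$ fixed at the start of the Supplement), these are:
\begin{equation*}
\cG_{k,M}^c \subseteq \Big\{\sum_{m=1}^N I_m > k\Big\} \cup \{|\alpha_0|>M\} \cup \bigcup_{m=1}^N \{\alpha_m > M\}.
\end{equation*}
The three events correspond to the sparsity, intercept and magnitude constraints in \eqref{eqn: 1d sieve}. Since $\alpha_m = I_m Z_m \le Z_m$, the last union is contained in $\bigcup_{m}\{Z_m>M\}$. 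We then bound each probability separately.

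For the intercept, $\alpha_0\sim\cN(0,1)$, so the standard Gaussian tail bound gives $\Pr(|\alpha_0|>M)\le 2\exp(-M^2/2)$. This uses $1-\Phi(M)\le \tfrac12 e^{-M^2/2}$ for $M\ge 0$, a Chernoff-type bound valid for all $M\ge0$. For each magnitude, $Z_m\sim\cN^+(0,1)$ gives $\Pr(Z_m>M)=2\{1-\Phi(M)\}\le \exp(-M^2/2)$. A union bound over $m=1,\dots,N$ then yields $N\exp(-M^2/2)$, and adding the intercept term gives the factor $(N+2)\exp(-M^2/2)$.

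For the sparsity term, $S=\sum_m I_m\sim\mathrm{Bin}(N,\zeta/N)$ has mean $\zeta$. If $k\ge N$ the event is empty. Otherwise $\Pr(S>k)\le\Pr(S\ge k)$, and we use the union bound over $k$-subsets:
\begin{equation*}
\Pr(S\ge k)\le \binom{N}{k}\Big(\frac{\zeta}{N}\Big)^k \le \frac{N^k}{k!}\frac{\zeta^k}{N^k}=\frac{\zeta^k}{k!}\le\Big(\frac{e\zeta}{k}\Big)^k,
\end{equation*}
where the last step uses $k!\ge (k/e)^k$. Summing the three contributions gives the claimed bound.

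No step is genuinely hard. The only care needed is with the constant in the Gaussian tail: the bound $1-\Phi(M)\le \tfrac12 e^{-M^2/2}$ must hold for all $M\ge 0$, which it does (verify via $e^{M^2/2}\{1-\Phi(M)\}$ being decreasing, or via Chernoff with the sharper constant). This constant is what makes the count come out to exactly $N+2$.
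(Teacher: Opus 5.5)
Your proof is correct and matches the paper's argument essentially step for step. Both use the same three-event union bound (via $\#\{m:\alpha_m\neq 0\}\le \sum_m I_m$ and $\alpha_m\le Z_m$), the same $k$-subset bound $\binom{N}{k}(\zeta/N)^k\le (e\zeta/k)^k$, and the same tail bound $1-\Phi(M)\le \tfrac12 e^{-M^2/2}$; the only difference is that the paper derives the tail bound by the substitution $t=M+s$ in $\int_M^\infty\phi(t)\,dt$, where you use monotonicity of $e^{M^2/2}\{1-\Phi(M)\}$.
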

 
\begin{proof}
    Let $T = \sum_{m=1}^{N} I_m \sim \mathrm{Bin}(N, \zeta/N)$. Since $\alpha_m = I_m Z_m$ with
    $Z_m \ge 0$, we have $\#\{m : \alpha_m \neq 0\} \le T$ and $\alpha_m \le Z_m$, so by
    \eqref{eqn: 1d sieve}:
    \begin{equation}
        \cG_{k, M}^c \subset \{T > k\} \cup \bigcup_{m=1}^N \{Z_m > M\} \cup \{|\alpha_0| > M\},
    \end{equation}
    and it suffices to bound the probability of each event. On $\{T > k\}$, all indicators in some
    subset of size $k$ equal one, so a union bound over the $\binom{N}{k}$ such subsets, together
    with $\binom{N}{k} \le N^k/k!$ and $k! \ge (k/e)^k$, gives:
    \begin{equation}
        \Pr(T > k) \le \binom{N}{k}\Big(\frac{\zeta}{N}\Big)^{k}
        \le \Big(\frac{eN}{k}\Big)^{k}\Big(\frac{\zeta}{N}\Big)^{k} = \Big(\frac{e\zeta}{k}\Big)^{k}.
    \end{equation}
    For the remaining events, writing $1 - \Phi(x) = \int_x^\infty \phi(t)\, dt$ for $x \ge 0$ and substituting $t = x + s$, we have:
    \begin{equation}\label{eqn: gaussian tail}
        1 - \Phi(x) = \int_0^\infty \phi(x + s)\, ds
        = e^{-x^2/2} \int_0^\infty \phi(s) e^{-xs}\, ds \le \frac{1}{2} e^{-x^2/2}.
    \end{equation}
    Hence $\Pr(Z_m > M) = 2\{1 - \Phi(M)\} \le \exp(-M^2/2)$ for $Z_m \sim \cN^+(0, 1)$, which
    sums to $N\exp(-M^2/2)$ over $m$, and $\Pr(|\alpha_0| > M) = 2\{1 - \Phi(M)\} \le 2\exp(-M^2/2)$.
    A union bound over the three events concludes the proof.
\end{proof}

We can now prove the posterior contraction rate in Theorem \ref{thm: 1d} of the main paper.

\begin{proof}[Proof of Theorem \ref{thm: 1d}]
Let $\epsilon_n = n^{-1/3}(\log n)^{1/3}$, let $C_1'$ be the constant of Lemma
\ref{lem: prior probability lower bound} for $g_0$, and let:
\begin{equation}\label{eqn: 1d rescaled rate}
    \lambda = \max\big\{1, (C_1'/C_1)^{\alpha \land 1}\big\}, \qquad
    \tilde\epsilon_n = \lambda\epsilon_n, \qquad
    n\tilde\epsilon_n^2 = \lambda^2 n^{1/3}(\log n)^{2/3}.
\end{equation}
Write $B_n(g_0, \epsilon) = \{g : \|g - g_0\|_n < \epsilon\}$. As $\lambda$ does not depend on $n$,
by Lemma \ref{lem: gaussian contraction} with $\tilde\epsilon_n$ in place of $\epsilon_n$,
$\mathcal{X} = [0, 1]$ and $f_0 = g_0$, it suffices to find sets $\cG_n$ and constants $C_3$,
$D > C_3 + 2\sigma_\epsilon^{-2}$ and $A'$ such that, for sufficiently large $n$:
\begin{equation}\label{eqn: 1d three conditions}
    \Pi\big(B_n(g_0, \tilde\epsilon_n)\big) \ge e^{-C_3 n\tilde\epsilon_n^2}, \qquad
    \Pi(\cG_n^c) \le e^{-Dn\tilde\epsilon_n^2}, \qquad
    \log \cC\big(\tilde\epsilon_n/8, \cG_n, \|\cdot\|_n\big) \le A' n\tilde\epsilon_n^2.
\end{equation}
Then \eqref{eqn: 1drate} of the main paper holds with the constant $K\lambda$, for $K$ given by Lemma
\ref{lem: gaussian contraction}.

For the first condition, let $G_2 = \max\{g_0(1) - g_0(0), 1\}$. The resolution condition of
Theorem \ref{thm: 1d} gives:
\begin{equation}
    N \ge C_1\epsilon_n^{-1/(\alpha \land 1)}
    = C_1\lambda^{1/(\alpha \land 1)}\tilde\epsilon_n^{-1/(\alpha \land 1)}
    \ge C_1'\tilde\epsilon_n^{-1/(\alpha \land 1)},
\end{equation}
so Lemma \ref{lem: prior probability lower bound} applies with $\epsilon = \tilde\epsilon_n$.
Hence, for sufficiently large $n$:
\begin{align}
    \log \Pi\big(B_n(g_0, \tilde\epsilon_n)\big)
    &\ge \log \Pi\big(B(g_0, \tilde\epsilon_n)\big)
    \ge -\frac{4G_2}{\tilde\epsilon_n}\log \frac{4N\sqrt{2\pi}}{\zeta\tilde\epsilon_n} - C'
    - \log\frac{1}{\tilde\epsilon_n}
    \nonumber \\
    &\ge -\frac{C_2}{\tilde\epsilon_n}\log \frac{N}{\tilde\epsilon_n}
    \ge -C_2(c_0 + 1)\frac{\log n}{\tilde\epsilon_n}
    \ge -C_3 n\tilde\epsilon_n^2,
    \label{eqn: 1d prior mass condition}
\end{align}
where $C', C_2$ are constants depending only on $G_2$, $\zeta$ and $g_0(0)$, and
$C_3 = C_2(c_0 + 1)$. The first inequality uses $\|\cdot\|_n \le \|\cdot\|_\infty$, the second is
\eqref{eqn: exact lower bound on small ball probability}, and the last two use
$\log(N/\tilde\epsilon_n) \le (c_0 + 1)\log n$, which follows from $\log N \le c_0 \log n$ for some
$c_0 > 0$, and $\log n/\tilde\epsilon_n = \lambda^{-3} n\tilde\epsilon_n^2$.

For the second condition, fix $D > C_3 + 2\sigma_\epsilon^{-2}$, and let:
\begin{equation}
    k_n = \big\lceil A n^{1/3}(\log n)^{-1/3}\big\rceil, \qquad M_n = n^{1/3},
    \qquad \cG_n = \cG_{k_n, M_n},
\end{equation}
with $A > 4\lambda^2 D$. Since $\log(k_n/e\zeta) \ge (\log n)/4$ for sufficiently large $n$, Lemma
\ref{lem: 1d sieve prior mass} gives:
\begin{equation}
    \Pi(\cG_n^c) \le \exp\Big(-k_n \log \frac{k_n}{e\zeta}\Big) + (N + 2)e^{-n^{2/3}/2}
    \le e^{-A n^{1/3}(\log n)^{2/3}/4} + (N + 2)e^{-n^{2/3}/2}
    \le e^{-Dn\tilde\epsilon_n^2}
\end{equation}
for sufficiently large $n$, where the last inequality uses
$A n^{1/3}(\log n)^{2/3}/4 = A n\tilde\epsilon_n^2/(4\lambda^2) > Dn\tilde\epsilon_n^2$,
$\log N \le c_0 \log n$ and $n^{2/3}/(n\tilde\epsilon_n^2) \to \infty$.

For the third condition, Lemma \ref{lem: 1d entropy} with
$\eta = \tilde\epsilon_n/8 \le (k_n + 1)M_n$ gives, for sufficiently large $n$:
\begin{equation}\label{eqn: 1d entropy condition}
    \log \cC\big(\tilde\epsilon_n/8, \cG_n, \|\cdot\|_n\big)
    \le k_n \log N + (k_n + 1)\log\frac{24(k_n + 1)M_n}{\tilde\epsilon_n}
    \le (k_n + 1)(c_0 + 2)\log n
    \le A' n\tilde\epsilon_n^2,
\end{equation}
where $A' = 2A(c_0 + 2)$. The first inequality uses $\|\cdot\|_n \le \|\cdot\|_\infty$, the
second uses $\log N \le c_0 \log n$ and $24(k_n + 1)M_n/\tilde\epsilon_n \le n^2$, and the last
uses $k_n + 1 \le 2A n^{1/3}(\log n)^{-1/3}$ and $\lambda \ge 1$. This verifies
\eqref{eqn: 1d three conditions}, and concludes the proof.
\end{proof}

\subsection{Proof of Theorem \ref{thm: TrAM posterior contraction rate}}\label{ssec: posterior contraction general d}

We now present the proof of Theorem \ref{thm: TrAM posterior contraction rate} in the main paper, on the posterior contraction rate of the TAIM model. The proof strategy is similar to that for Theorem \ref{thm: 1d}. We first construct a sieve adapted to the TAIM prior, similar to the sieve defined in Section \ref{ssec: 1D rate proof}. For
$k \in \bbN$ and $M > 0$, let $\cF_{k, M}$ denote the set of functions
$f = g(\sum_{l=1}^d h_l)$ of the form
\eqref{eqn: generalized additive model}, \eqref{eqn: uniadd} and \eqref{eqn: monotone link function} of the main paper whose coefficients
satisfy:
\begin{equation}\label{eqn: sieve}
    |\alpha_0| \le M, \quad
    \max_{j} \alpha_j \vee \max_{l, j} \beta^{(l)}_j \le M, \quad
    \#\{j : \alpha_j \neq 0\} \le k, \quad
    \max_{1 \le l \le d} \#\{j : \beta^{(l)}_j \neq 0\} \le k,
\end{equation}
and write $\bar N = \max\{N_g, N_1, \cdots, N_d\}$. The following two lemmas control the
entropy of $\cF_{k, M}$ and the prior mass of its complement.

\begin{lemma}\label{lem: entropy}
    For $\bar N M \ge 1$, $k \ge 2$ and $0 < \eta \le (d+1)kM$,
    \begin{equation}
        \log \cC\big(\eta, \cF_{k, M}, \|\cdot\|_\infty\big)
        \le (d+1)k \log \bar N
        + \big\{(d+1)k + 1\big\} \log \frac{6d \bar N k^2 M^2}{\eta}.
    \end{equation}
\end{lemma}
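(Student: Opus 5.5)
The plan mirrors the proof of Lemma \ref{lem: 1d entropy}: reduce to finitely many sparsity patterns, then cover a bounded coefficient box, after checking that $f$ depends Lipschitz-continuously on its coefficients in sup-norm.

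\emph{Sparsity patterns.} Every $f \in \cF_{k,M}$ has nonzero link weights in some support $\cS_g \subset \{1,\dots,N_g\}$ and nonzero additive weights in supports $\cS_l \subset \{1,\dots,N_l\}$, each of size at most $k$. The number of such tuples $(\cS_g,\cS_1,\dots,\cS_d)$ is at most $\bar N^{k(d+1)}$. This accounts for the term $(d+1)k\log\bar N$.

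\emph{Parametrization and Lipschitz bound.} Fix a support tuple. The function is then parametrized by $\theta = (\alpha_0, \{\alpha_j\}_{j\in\cS_g}, \{\beta^{(l)}_j\}_{j\in\cS_l, l\le d})$. Here $\alpha_0 \in [-M,M]$, and the remaining weights lie in $[0,M]$ (they are nonnegative). The total dimension is $p \le (d+1)k+1$.

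Take two parameters $\theta,\theta'$ and write $u=\sum_l h_l$, $u'=\sum_l h_l'$. We split the difference as
\[
|g(u)-g'(u')| \le |g(u)-g(u')| + |g(u')-g'(u')|.
\]
For the second term, since $0\le\psi_j\le 1$, it is at most $(k+1)\|\theta-\theta'\|_\infty$. For the first term we need a Lipschitz constant for $g$. Each $\psi^{(g)}_j$ has slope $N_g$, so $g$ is Lipschitz with constant at most $N_g\sum_j\alpha_j \le \bar N k M$. Also $|u-u'| \le dk\|\theta-\theta'\|_\infty$. Hence the first term is at most $\bar N d k^2 M\|\theta-\theta'\|_\infty$.

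Combining, and using $\bar N M\ge1$, $k\ge 2$ and $d\ge1$ to absorb the $(k+1)$ term, gives
\[
\|f_\theta - f_{\theta'}\|_\infty \le L\|\theta-\theta'\|_\infty, \qquad L = 2d\bar N k^2 M .
\]
Note that the clamp makes $g$ piecewise linear and flat outside $[0,1]$, so the slope bound holds on all of $\bbR$. This matters because $u$ need not lie in $[0,1]$.

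\emph{Covering the coefficient box.} Cover the box $[-M,M]^p$ in $\|\cdot\|_\infty$ at resolution $\eta/L$. This takes at most $(2ML/\eta+1)^p$ points, which is at most $\{3ML/\eta\}^{p}$ provided $ML/\eta\ge1$. That proviso follows from $\eta\le(d+1)kM \le 2dkM \le LM$ (it is here that $\bar N M\ge1$ and $k\ge2$ are used). Then
\[
3ML/\eta = 6d\bar N k^2M^2/\eta ,
\]
which matches the stated logarithm. Using $p\le (d+1)k+1$ and multiplying by the number of support patterns gives the claimed bound.

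The main obstacle is the Lipschitz step: $g$ is composed with perturbed arguments, so one must control the slope of $g$. The clamped basis handles this, with slope $N_g$ per active basis function and at most $k$ active, which is where the factor $\bar N k^2$ comes from. The remaining bookkeeping of constants is routine.
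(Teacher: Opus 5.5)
Your proposal is correct and follows the paper's proof essentially step for step. The shared steps are the count of at most $\bar N^{(d+1)k}$ support patterns, the composition bound $\|f_\theta - f_{\theta'}\|_\infty \le 2d\bar N k^2 M\|\theta-\theta'\|_\infty$ obtained from the slope bound $N_g\sum_j|\alpha_j| \le \bar N kM$, and the grid covering of $[-M,M]^p$ using the check $\eta \le (d+1)kM \le 2d\bar N k^2M^2$. The differences are cosmetic: you use the slope of $g_\theta$ rather than $g_{\theta'}$, and you spell out why the $(k+1)$ term can be absorbed. One small fix in the pattern count: enumerate supersets of size exactly $k\wedge N$, as the paper does, rather than all subsets of size at most $k$, so that each component contributes at most $\binom{N}{k\wedge N}\le N^k$ choices.
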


\begin{proof}
    For $f \in \cF_{k, M}$, the indices of the nonzero coefficients among $\{\alpha_j\}_{j=1}^{N_g}$
    are contained in some $\cS_g \subset \{1, \cdots, N_g\}$ with $|\cS_g| = k \land N_g$, and for
    each $l$, those among $\{\beta^{(l)}_j\}_{j=1}^{N_l}$ are contained in some
    $\cS_l \subset \{1, \cdots, N_l\}$ with $|\cS_l| = k \land N_l$. As in the proof of Lemma
    \ref{lem: 1d entropy}, we construct a net separately for each choice of
    $(\cS_g, \cS_1, \cdots, \cS_d)$, and the number of such choices is at most
    $\binom{N_g}{k \land N_g}\prod_{l=1}^d \binom{N_l}{k \land N_l} \le \bar N^{(d+1)k}$. Fix one
    such choice, and let $\theta, \theta' \in [-M, M]^{p}$ be two coefficient vectors
    consisting of $\alpha_0$, $\{\alpha_j\}_{j \in \cS_g}$ and $\{\beta^{(l)}_j\}_{j \in \cS_l}$,
    $1 \le l \le d$, where $p = 1 + |\cS_g| + \sum_{l=1}^d |\cS_l| \le (d+1)k + 1$. Here $\alpha_0$ is
    the only intercept, as $h_l$ in \eqref{eqn: uniadd} of the main paper has none, and $p < (d+1)k + 1$ only when
    $k$ exceeds $N_g$ or some $N_l$. We write $f_\theta = g_\theta(H_\theta)$ and
    $f_{\theta'} = g_{\theta'}(H_{\theta'})$ for the associated functions, where
    $H_\theta = \sum_{l=1}^d h_{l, \theta}$. Since
    $0 \le \psi^{(l)}_j \le 1$ and each component has at most $k$ coefficients besides $\alpha_0$:
    \begin{equation}
        \|H_\theta - H_{\theta'}\|_\infty \le dk \|\theta - \theta'\|_\infty,
        \qquad
        \|g_\theta - g_{\theta'}\|_\infty \le (k+1) \|\theta - \theta'\|_\infty .
    \end{equation}
    For functions $g_a, g_b : \bbR \to \bbR$ and $H_a, H_b : [0, 1]^d \to \bbR$ such that $g_b$ is
    Lipschitz continuous with constant $L_b$, inserting $g_b(H_a(\mathbf{x}))$ gives, for every
    $\mathbf{x} \in [0, 1]^d$:
    \begin{align}
        \big|g_a(H_a(\mathbf{x})) - g_b(H_b(\mathbf{x}))\big|
        &\le \big|g_a(H_a(\mathbf{x})) - g_b(H_a(\mathbf{x}))\big|
        + \big|g_b(H_a(\mathbf{x})) - g_b(H_b(\mathbf{x}))\big| \nonumber \\
        &\le \|g_a - g_b\|_\infty + L_b \|H_a - H_b\|_\infty .
        \label{eqn: composition bound}
    \end{align}
    Each $\psi^{(g)}_j$ has slope at most $N_g$, so $g_{\theta'}$ is Lipschitz continuous with
    constant $N_g \sum_{j \in \cS_g} |\alpha'_j| \le \bar N k M$, where $\alpha'_j$ are the
    corresponding entries of $\theta'$. Applying \eqref{eqn: composition bound} with
    $(g_a, H_a) = (g_\theta, H_\theta)$ and $(g_b, H_b) = (g_{\theta'}, H_{\theta'})$, we have:
    \begin{equation}
        \|f_\theta - f_{\theta'}\|_\infty
        \le \|g_\theta - g_{\theta'}\|_\infty + \bar N k M \|H_\theta - H_{\theta'}\|_\infty
        \le 2d\bar N k^2 M \|\theta - \theta'\|_\infty .
    \end{equation}
    Covering $[-M, M]^{p}$ in $\|\cdot\|_\infty$ at resolution
    $\eta/(2d\bar N k^2 M)$ thus yields an $\eta$-net for that choice, and requires at most
    $\{4d\bar N k^2 M^2/\eta + 1\}^{p} \le (6d\bar N k^2 M^2/\eta)^{(d+1)k+1}$ points, where
    the last inequality uses $p \le (d+1)k + 1$ and $\eta \le (d+1)kM \le 2d\bar N k^2 M^2$, so that
    the base is at least one. Multiplying by the number of choices of
    $(\cS_g, \cS_1, \cdots, \cS_d)$ concludes the proof.
\end{proof}

Subsequently, the following Lemma provides the sieve condition on $\cF_{k, M}$.
\begin{lemma}\label{lem: sieve prior mass}
    Let $\zeta_{\max} = \max\{\zeta_g, \zeta_1, \cdots, \zeta_d\}$. Then:
    \begin{equation}
        \Pi\big(\cF_{k, M}^c\big)
        \le (d+1)\Big(\frac{e\zeta_{\max}}{k}\Big)^{k}
        + \big\{(d+1)\bar N + 2\big\}\exp\Big(-\frac{M^2}{2}\Big).
    \end{equation}
\end{lemma}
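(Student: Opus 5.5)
The proof will mirror that of Lemma \ref{lem: 1d sieve prior mass}, applied component by component. Under the TAIM prior with $\cI = \{1, \cdots, d\}$ and unit variances, every weight is of the form $\alpha_j = I^{(g)}_j Z^{(g)}_j$ or $\beta^{(l)}_j = I^{(l)}_j Z^{(l)}_j$, with independent Bernoulli indicators and half-normal magnitudes. Let $T_g = \sum_{j=1}^{N_g} I^{(g)}_j$ and $T_l = \sum_{j=1}^{N_l} I^{(l)}_j$. Since the $Z$'s are nonnegative, we have $\#\{j:\alpha_j\neq 0\}\le T_g$, $\#\{j:\beta^{(l)}_j\neq0\}\le T_l$, $\alpha_j\le Z^{(g)}_j$ and $\beta^{(l)}_j \le Z^{(l)}_j$. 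Comparing with the defining conditions \eqref{eqn: sieve} then gives
\begin{equation*}
\cF_{k,M}^c \subset \{T_g>k\}\cup\bigcup_{l=1}^d\{T_l>k\}\cup\bigcup_{j=1}^{N_g}\{Z^{(g)}_j>M\}\cup\bigcup_{l=1}^d\bigcup_{j=1}^{N_l}\{Z^{(l)}_j>M\}\cup\{|\alpha_0|>M\}.
\end{equation*}
A union bound reduces the claim to bounding each of these events.

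For the count events, I would repeat the argument of Lemma \ref{lem: 1d sieve prior mass}. On $\{T_g > k\}$, some size-$k$ subset of indicators all equal one, so
\begin{equation*}
\Pr(T_g>k)\le \binom{N_g}{k}(\zeta_g/N_g)^k\le (e\zeta_g/k)^k\le (e\zeta_{\max}/k)^k.
\end{equation*}
The same bound holds for each $T_l$. If $k > N_g$ (or $k > N_l$), the event is empty and the bound holds trivially. Summing over the $d+1$ components gives the term $(d+1)(e\zeta_{\max}/k)^k$.

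For the magnitude events, the Gaussian tail bound \eqref{eqn: gaussian tail} gives $\Pr(Z>M)\le e^{-M^2/2}$ for each half-normal $Z\sim\cN^+(0,1)$. There are $N_g+\sum_l N_l\le (d+1)\bar N$ such variables. For the intercept, $\Pr(|\alpha_0|>M)\le 2e^{-M^2/2}$, using $\alpha_0\sim\cN(0,1)$ under the normalization $\mu_0 = 0$, $\tau_0^2 = 1$. Adding these yields $\{(d+1)\bar N+2\}e^{-M^2/2}$, which completes the bound.

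There is no real obstacle here. The only point needing care is the bookkeeping that the sieve constraints are implied by the indicator counts and the magnitude bounds, which relies on the weights being nonnegative products.
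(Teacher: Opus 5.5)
Your proposal is correct and matches the paper's proof essentially step for step. It uses the same inclusion of $\cF_{k,M}^c$ into the indicator-count, magnitude and intercept events, the same subset-counting bound $\binom{N}{k}(\zeta/N)^k \le (e\zeta/k)^k$, and the same Gaussian tail bound \eqref{eqn: gaussian tail}, all combined by a union bound.
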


\begin{proof}
    Let $T_g = \sum_{j=1}^{N_g} I^{(g)}_j \sim \mathrm{Bin}(N_g, \zeta_g/N_g)$ and
    $T_l = \sum_{j=1}^{N_l} I^{(l)}_j \sim \mathrm{Bin}(N_l, \zeta_l/N_l)$ for $1 \le l \le d$.
    Since $\alpha_j = I^{(g)}_j Z^{(g)}_j$ and $\beta^{(l)}_j = I^{(l)}_j Z^{(l)}_j$ with
    $Z^{(g)}_j, Z^{(l)}_j \ge 0$, we have $\#\{j : \alpha_j \neq 0\} \le T_g$,
    $\#\{j : \beta^{(l)}_j \neq 0\} \le T_l$, $\alpha_j \le Z^{(g)}_j$ and
    $\beta^{(l)}_j \le Z^{(l)}_j$, so by \eqref{eqn: sieve}:
    \begin{equation}
        \cF_{k, M}^c \subset \{T_g > k\} \cup \bigcup_{l=1}^d \{T_l > k\}
        \cup \bigcup_{j=1}^{N_g} \{Z^{(g)}_j > M\}
        \cup \bigcup_{l=1}^d \bigcup_{j=1}^{N_l} \{Z^{(l)}_j > M\}
        \cup \{|\alpha_0| > M\}.
    \end{equation}
    The binomial bound in the proof of Lemma \ref{lem: 1d sieve prior mass} gives
    $\Pr(T_g > k) \le (e\zeta_g/k)^k$ and $\Pr(T_l > k) \le (e\zeta_l/k)^k$, each at most
    $(e\zeta_{\max}/k)^k$. By \eqref{eqn: gaussian tail}, each of the
    $N_g + \sum_l N_l \le (d+1)\bar N$ magnitude variables exceeds $M$ with probability at most
    $\exp(-M^2/2)$, and $\Pr(|\alpha_0| > M) \le 2\exp(-M^2/2)$. A union bound over these events
    concludes the proof.
\end{proof}

We can now use these results to prove the posterior contraction rate in Theorem \ref{thm: TrAM posterior contraction rate} of the main paper.

\begin{proof}[Proof of Theorem \ref{thm: TrAM posterior contraction rate}]
Without loss of generality, $h_{0,l}(0) = 0$ for each $l$ and $\sum_{l=1}^d h_{0,l}$ takes values
in $[0, 1]$; otherwise we replace $h_{0,l}$ by $\{h_{0,l} - h_{0,l}(0)\}/\sum_{l'} B_{l'}$ with
$B_l = h_{0,l}(1) - h_{0,l}(0)$, or by $h_{0,l} \equiv 0$ if $\sum_{l'} B_{l'} = 0$, and rescale or shift the argument of $g_0$ accordingly, which preserves
the monotonicity and the H\"older continuity of each $h_{0,l}$ and the Lipschitz continuity of
$g_0$. Let $L$ be the Lipschitz constant of $g_0$, and extend $g_0$ to $\bbR$ by
$g_0(z) = g_0(0)$ for $z < 0$ and $g_0(z) = g_0(1)$ for $z > 1$, which keeps it $L$-Lipschitz.
Since each $\psi^{(g)}_j$ in \eqref{eqn: psi definition} of the main paper is constant outside $[0, 1]$, so is $g$,
and hence $\sup_{z \in \bbR}|g(z) - g_0(z)| = \|g - g_0\|_\infty$.

Let $\epsilon_n = n^{-1/3}(\log n)^{1/3}$, let $C_g'$ and $C_l'$ be the constants of Lemma
\ref{lem: prior probability lower bound} for $g_0$ with $\alpha = 1$ and for $h_{0,l}$,
respectively, and let:
\begin{equation}\label{eqn: rescaled rate}
    \lambda = \max\Big\{1, \frac{2C_g'}{C_g},
    2Ld\Big(\frac{C_l'}{C_l}\Big)^{\alpha_l \land 1} \ (1 \le l \le d)\Big\}, \qquad
    \tilde\epsilon_n = \lambda\epsilon_n, \qquad
    n\tilde\epsilon_n^2 = \lambda^2 n^{1/3}(\log n)^{2/3}.
\end{equation}
Since $\epsilon_n^{-1} = (n/\log n)^{1/3}$, the resolution conditions of Theorem
\ref{thm: TrAM posterior contraction rate} give:
\begin{equation}\label{eqn: resolution rescaled}
    N_g \ge C_g\epsilon_n^{-1} \ge C_g'\Big(\frac{\tilde\epsilon_n}{2}\Big)^{-1}, \qquad
    N_l \ge C_l\epsilon_n^{-1/(\alpha_l \land 1)}
    \ge C_l'\Big(\frac{\tilde\epsilon_n}{2Ld}\Big)^{-1/(\alpha_l \land 1)}.
\end{equation}
Write $B_n(f_0, \epsilon) = \{f : \|f - f_0\|_n < \epsilon\}$. As $\lambda$ does not depend on $n$,
by Lemma \ref{lem: gaussian contraction} with $\tilde\epsilon_n$ in place of $\epsilon_n$ and
$\mathcal{X} = [0, 1]^d$, it suffices to find sets $\cF_n$ and constants $C_3$,
$D > C_3 + 2\sigma_\epsilon^{-2}$ and $A'$ such that, for all large $n$,
\begin{equation}\label{eqn: three conditions}
    \Pi\big(B_n(f_0, \tilde\epsilon_n)\big) \ge e^{-C_3 n\tilde\epsilon_n^2}, \qquad
    \Pi(\cF_n^c) \le e^{-Dn\tilde\epsilon_n^2}, \qquad
    \log \cC\big(\tilde\epsilon_n/8, \cF_n, \|\cdot\|_n\big) \le A' n\tilde\epsilon_n^2;
\end{equation}
then the claim holds with the constant $K\lambda$, for $K$ given by Lemma
\ref{lem: gaussian contraction}.

For the first condition, \eqref{eqn: composition bound} with $(g_a, H_a) = (g, \sum_l h_l)$ and
$(g_b, H_b) = (g_0, \sum_l h_{0,l})$ gives:
\begin{equation}\label{eqn: small ball decomposition}
    \|f - f_0\|_\infty \le \|g - g_0\|_\infty + L \sum_{l=1}^d \|h_l - h_{0,l}\|_\infty.
\end{equation}
Since the priors on $\{\alpha_j\}$ and $\{\beta^{(l)}_j\}_{l=1}^d$ are independent, we have:
\begin{equation}\label{eqn: product of small balls}
    \Pi\big(B(f_0, \epsilon)\big) \ge
    \Pi\Big(B\big(g_0, \tfrac{\epsilon}{2}\big)\Big)
    \prod_{l=1}^d \Pi\Big(B\big(h_{0,l}, \tfrac{\epsilon}{2Ld}\big)\Big),
\end{equation}
for every $\epsilon > 0$. By \eqref{eqn: resolution rescaled}, Lemma
\ref{lem: prior probability lower bound} applies to $g_0$ with $\alpha = 1$, $G_2 = L$ and
radius $\tilde\epsilon_n/2$, and to each $h_{0,l}$ with $G_2 = 1$ and radius
$\tilde\epsilon_n/(2Ld)$, where the factor $p_0$ of \eqref{eqn: p0 lower bound} is omitted since
$h_l$ has no intercept by \eqref{eqn: uniadd} of the main paper. Hence, for sufficiently large $n$:
\begin{align}
    \log \Pi\big(B_n(f_0, \tilde\epsilon_n)\big)
    &\ge \log \Pi\Big(B\big(g_0, \tfrac{\tilde\epsilon_n}{2}\big)\Big)
    + \sum_{l=1}^d \log \Pi\Big(B\big(h_{0,l}, \tfrac{\tilde\epsilon_n}{2Ld}\big)\Big)
    \label{eqn: aggregate small ball} \\
    &\ge -\frac{C_2}{\tilde\epsilon_n}\log \frac{\bar N}{\tilde\epsilon_n}
    \ge -C_2(c_0 + 1)\frac{\log n}{\tilde\epsilon_n}
    \ge -C_3 n\tilde\epsilon_n^2,
    \label{eqn: prior mass condition}
\end{align}
where $C_2$ is a constant depending only on $d$, $L$, $g_0(0)$ and
$\zeta_g, \zeta_1, \cdots, \zeta_d$, and $C_3 = C_2(c_0 + 1)$. The first inequality uses
$\|\cdot\|_n \le \|\cdot\|_\infty$ and \eqref{eqn: product of small balls}, the second applies
\eqref{eqn: exact lower bound on small ball probability} to each of the $d+1$ factors with
$N_g, N_l \le \bar N$, and the last two use $\log(\bar N/\tilde\epsilon_n) \le (c_0 + 1)\log n$,
which follows from $\log \bar N \le c_0 \log n$ for some $c_0 > 0$, and
$\log n/\tilde\epsilon_n = \lambda^{-3} n\tilde\epsilon_n^2$.

For the second condition, fix $D > C_3 + 2\sigma_\epsilon^{-2}$ and let:
\begin{equation}
    k_n = \big\lceil A n^{1/3}(\log n)^{-1/3}\big\rceil, \qquad M_n = n^{1/3},
    \qquad \cF_n = \cF_{k_n, M_n},
\end{equation}
with $A > 4\lambda^2 D$. Since $\log(k_n/e\zeta_{\max}) \ge (\log n)/4$ for all large $n$, Lemma
\ref{lem: sieve prior mass} gives:
\begin{align}
    \Pi(\cF_n^c)
    &\le (d + 1)\exp\Big(-k_n \log \frac{k_n}{e\zeta_{\max}}\Big)
    + \big\{(d + 1)\bar N + 2\big\}e^{-n^{2/3}/2} \nonumber \\
    &\le (d + 1)e^{-A n^{1/3}(\log n)^{2/3}/4} + \big\{(d + 1)\bar N + 2\big\}e^{-n^{2/3}/2}
    \le e^{-Dn\tilde\epsilon_n^2},
\end{align}
for sufficiently large $n$, where the last inequality uses:
$A n^{1/3}(\log n)^{2/3}/4 = A n\tilde\epsilon_n^2/(4\lambda^2) > Dn\tilde\epsilon_n^2$,
$\log \bar N \le c_0 \log n$ and $n^{2/3}/(n\tilde\epsilon_n^2) \to \infty$.

For the third condition, Lemma \ref{lem: entropy} with $\eta = \tilde\epsilon_n/8$, whose
hypotheses $\bar N M_n \ge 1$, $k_n \ge 2$ and $\eta \le (d+1)k_n M_n$ hold for sufficiently large $n$,
gives:
\begin{align}
    \log \cC\big(\tilde\epsilon_n/8, \cF_n, \|\cdot\|_n\big)
    &\le (d + 1)k_n \log \bar N
    + \big\{(d + 1)k_n + 1\big\}\log\frac{48d\bar N k_n^2 M_n^2}{\tilde\epsilon_n} \nonumber \\
    &\le \big\{(d + 1)k_n + 1\big\}(2c_0 + 2)\log n
    \le A' n\tilde\epsilon_n^2,
    \label{eqn: entropy condition}
\end{align}
where $A' = 4(d+1)A(c_0 + 1)$. The first inequality uses $\|\cdot\|_n \le \|\cdot\|_\infty$, the
second uses $\log \bar N \le c_0 \log n$ and $48d\bar N k_n^2 M_n^2/\tilde\epsilon_n \le n^{c_0 + 2}$,
and the last uses $(d + 1)k_n + 1 \le 2(d + 1)A n^{1/3}(\log n)^{-1/3}$ and $\lambda \ge 1$. This
verifies \eqref{eqn: three conditions}, and thus completes the proof.
\end{proof}

\section{Details on Posterior Computation}\label{sec: posterior sampler derivation appendix}

In this section, we provide further details of the posterior sampling procedure outlined in Section \ref{sec: computation} of the main paper. Section \ref{ssec: full conditional derivation} derives the full conditional distributions of the Gibbs sampler, Section \ref{ssec: onthefly} describes the on-the-fly normalization step, Section \ref{ssec: complexity} derives the computational complexity of the Gibbs sampler, and Section \ref{ssec: data adaptive hyperparameters} gives details on the specification of prior hyperparameters.

\subsection{Derivation of the full conditional distributions}\label{ssec: full conditional derivation}

In what follows, let $\widetilde y_i = y_i - \delta(\mathbf{x}_i)$ denote the response with the discrepancy term omitted, let $u_i = \sum_{l=1}^d h_l(x_{il})$ denote the link function argument for the $i$-th data point $\mathbf{x}_i$, and let $\bTheta_-$ denote the parameter set $\bTheta$ with the parameter being updated omitted.

We begin with the link function parameters $\bTheta_g$. Conditional on $\bTheta_h$, the link arguments $u_i$ are fixed, so \eqref{eqn: augmented model} of the main paper reduces to a Gaussian linear model in the coefficients, namely $\widetilde y_i = \sum_{j=0}^{N_g}\alpha_j\psi_j^{(g)}(u_i) + \epsilon_i$. Exploiting conjugacy and completing the square in $\alpha_0$ and in $Z_j^{(g)}$ then give \eqref{eqn: fullcondint} and \eqref{eqn: fullcondmag1} of the main paper. For the indicator variables, the two residual sums of squares compared in \eqref{eqn: fullcondind1} of the main paper differ only through the term $Z_j^{(g)}\psi_j^{(g)}(u_i)$, and:
\begin{equation}\label{eqn: indicator odds g}
\sum_{i=1}^n\Big\{\big(r_{ij} - Z_j^{(g)}\psi_j^{(g)}(u_i)\big)^2 - r_{ij}^2\Big\}
= -2Z_j^{(g)}\sum_{i=1}^n \psi_j^{(g)}(u_i)\Big\{r_{ij} - \tfrac{Z_j^{(g)}}{2}\psi_j^{(g)}(u_i)\Big\},
\end{equation}
so that multiplying by $-1/(2\sigma^2_\epsilon)$ recovers the log likelihood ratio appearing there.

We turn next to the additive function parameters $\bTheta_h$. For a fixed choice of $l$ and $j$, let $Z = Z_j^{(l)}$ denote the magnitude, set $s_i = \psi_j^{(l)}(x_{il})$ for $l \in \cI$ (and $s_i = \phi_j^{(l)}(x_{il})$ for $l \notin \cI$), and let $w_i = u_i - \beta_j^{(l)} s_i$ be the contribution of the remaining coefficients, so that $u_i = w_i + Z s_i$ when the coefficient is active. Since $s_i \ge 0$, $u_i$ is non-decreasing in $Z$. We consider the magnitude $Z$ first, since the indicator odds \eqref{eqn: fullcondind2} of the main paper are assembled from the quantities $\{\omega_v\}$ that the magnitude derivation produces.

Note that the fitted value at $\mathbf{x}_i$ can be written as a function of $Z$:
\begin{equation}\label{eqn: fitted in Z}
g(u_i) = \alpha_0 + \sum_{k=1}^{N_g}\alpha_k\,\varrho\big\{N_g(w_i + s_iZ) - k + 1\big\},
\end{equation}
a non-negative combination of clamped ramps in $Z$, so $Z \mapsto g(u_i)$ is piecewise linear. The $k$-th ramp bends where its argument reaches $0$ and where it reaches $1$, so over $k = 1,\cdots,N_g$ the bends of \eqref{eqn: fitted in Z} occur at:
\begin{equation}\label{eqn: breakpoints}
p_{i,m} = \frac{m/N_g - w_i}{s_i}, \qquad m = 0,1,\cdots,N_g, \quad i: s_i > 0,
\end{equation}
i.e., the values of $Z$ at which the $i$-th link argument reaches the $m$-th knot of $g$. Here, design points with $s_i = 0$ do not contribute, since their $u_i$ does not move with $Z$. Collecting the breakpoints \eqref{eqn: breakpoints} over all $i$ and $m$ and sorting them into $t_1 < \cdots < t_M$ divides $\bbR$ into intervals $\cB_0, \cdots, \cB_M$, with $\cB_v = (t_v,t_{v+1}]$, $t_0 = -\infty$ and $t_{M+1} = \infty$; there are at most $n(N_g+1)$ of them, so $M$ is finite.

Within a single $\cB_v$, no link argument crosses a knot, so each $u_i$ stays in one cell of $g$ and every fitted value is linear in $Z$. To identify the two coefficients, note that for $z \ge 0$ with cell index $m = \min\{\lfloor N_g z\rfloor, N_g\}$, clamping gives:
\begin{equation}\label{eqn: psi step}
\psi_k^{(g)}(z) = \mathds{1}\{k \le m\} + (N_gz - m)\,\mathds{1}\{k = m+1\},
\qquad
g(z) = G_m + (N_gz - m)\,\alpha_{m+1},
\end{equation}
where $G_m := \alpha_0 + \sum_{k\le m}\alpha_k$ is the value of $g$ at the knot $m/N_g$ and $\alpha_{N_g+1} = 0$. The same holds for $\psi_k^{(l)}$ and $h_l$ with $(N_g,\alpha,G_m)$ replaced by $(N_l,\beta^{(l)},H^{(l)}_m)$, $H^{(l)}_m = \sum_{k\le m}\beta^{(l)}_k$, so that evaluation requires only the cell index and the cumulative sum, never the individual basis functions. Letting $m_{iv}$ index the cell containing $u_i$ for $Z \in \cB_v$ and applying \eqref{eqn: psi step}, we have:
\begin{equation}\label{eqn: affine pieces}
g(w_i + Zs_i) = A_{iv} + B_{iv}Z,
\qquad
A_{iv} = G_{m_{iv}} + (N_gw_i - m_{iv})\alpha_{m_{iv}+1},
\qquad
B_{iv} = N_g\alpha_{m_{iv}+1}s_i .
\end{equation}

Next, expanding $-\frac{1}{2\sigma^2_\epsilon}\sum_i(\widetilde y_i - A_{iv} - B_{iv}Z)^2$ shows that the conditional log-likelihood of $Z$ yields the quadratic form $-(a_vZ^2 + b_vZ + c_v)$ on $\cB_v$, with:
\begin{equation}\label{eqn: abc coefficients}
a_v = \frac{1}{2\sigma^2_\epsilon}\sum_{i=1}^n B_{iv}^2,
\qquad
b_v = -\frac{1}{\sigma^2_\epsilon}\sum_{i=1}^n (\widetilde y_i - A_{iv})B_{iv},
\qquad
c_v = \frac{1}{2\sigma^2_\epsilon}\sum_{i=1}^n (\widetilde y_i - A_{iv})^2 .
\end{equation}

Consider first the magnitude parameter. Multiplying the likelihood on $\cB_v$ by the prior $\cN^{+}(0, \tau_l^2)$, the posterior is proportional to the exponent of $-(Z-\widehat\mu_v)^2/(2\widehat\sigma_v^2) + \widehat\mu_v^2/(2\widehat\sigma_v^2) - c_v$, where:
\begin{equation}\label{eqn: piecewise gaussian params}
    \widehat{\sigma}_v^2 = \Big(2a_v + \tau_l^{-2}\Big)^{-1}, \quad
    \widehat{\mu}_v = -b_v \widehat{\sigma}_v^2, \quad
    \omega_v = \widehat{\sigma}_v\, e^{\widehat{\mu}_v^2/(2\widehat{\sigma}_v^2) - c_v}
    \Big\{\Phi\Big(\frac{R_v - \widehat{\mu}_v}{\widehat{\sigma}_v}\Big) - \Phi\Big(\frac{L_v - \widehat{\mu}_v}{\widehat{\sigma}_v}\Big)\Big\},
\end{equation}
and $[L_v,R_v] = \cB_v \cap [0,\infty)$. Integrating the resulting kernel over $[L_v,R_v]$ gives:
\begin{equation}\label{eqn: exp quadratic integration over interval}
    e^{\widehat\mu_v^2/(2\widehat\sigma_v^2) - c_v}\int_{L_v}^{R_v}
    e^{-(Z-\widehat\mu_v)^2/(2\widehat\sigma_v^2)}\,dZ = \sqrt{2\pi}\,\omega_v ,
\end{equation}
so the total mass on $\cB_v$ is proportional to $\omega_v$, and normalizing the $\omega_v$ across $v$ yields \eqref{eqn: fullcondmag2} of the main paper, a finite mixture of normals truncated to the intervals $[L_v,R_v]$. For a non-monotone input $l \notin \cI$, the weights have no probability mass on zero, so $\beta^{(l)}_j = Z$ is always active; the derivation above applies verbatim with $I_j^{(l)} \equiv 1$ and gives \eqref{eqn: fullcondmag3} of the main paper.

Consider next the indicator parameter for $l \in \cI$. Write $\mathcal{L}(Z) =\exp\{-(a_vZ^2+b_vZ+c_v)\}$ for $Z \in \cB_v$, which is the likelihood function up to the factor $(2\pi\sigma_\epsilon^2)^{-n/2}$. Under $I_j^{(l)} = 1$, the magnitude is integrated over its $\cN^+(0,\tau_l^2)$ prior, whose density is $\sqrt{2/(\pi\tau_l^2)}\,e^{-Z^2/2\tau_l^2}$. Splitting the integral over the intervals $\cB_v \cap [0,\infty)$ and applying \eqref{eqn: exp quadratic integration over interval} on each interval, we have:
\begin{equation}\label{eqn: marginal likelihood}
\int_0^\infty \mathcal{L}(Z)\sqrt{\tfrac{2}{\pi\tau_l^2}}\,e^{-Z^2/2\tau_l^2}\,dZ
= \sqrt{\tfrac{2}{\pi\tau_l^2}}\,\sqrt{2\pi}\sum_{v=0}^M\omega_v
= \frac{2}{\tau_l}\sum_{v=0}^M\omega_v .
\end{equation}
Under $I_j^{(l)} = 0$, the coefficient vanishes, so the likelihood is evaluated at $Z = 0$ and equals $\mathcal{L}(0) = e^{-c_{v_0}}$, where $v_0$ is the index with $0 \in \cB_{v_0}$. Thus, the log-ratio of \eqref{eqn: marginal likelihood} to $e^{-c_{v_0}}$, added to the log prior odds $\log\{\zeta_l/(N_l-\zeta_l)\}$, is exactly \eqref{eqn: fullcondind2} of the main paper.

Finally, we investigate the remaining parameters $\boldsymbol{\Theta}_\delta$ and $\sigma^2_\epsilon$. Note that $\bdelta \sim \mathcal{N}(\mathbf{0},\sigma^2_\delta\mathbf{K})$ and $\mathbf{r}\mid\bdelta \sim \mathcal{N}(\bdelta,\sigma^2_\epsilon\mathbf{I}_n)$, so Gaussian conditioning gives \eqref{eqn: fullconddelta} of the main paper, where the covariance:
\begin{equation}\label{eqn: delta covariance}
\sigma^2_\delta\mathbf{K} - \sigma^4_\delta\mathbf{K}\mathbf{M}^{-1}\mathbf{K}
= \sigma^2_\delta\mathbf{K}\mathbf{M}^{-1}(\mathbf{M}-\sigma^2_\delta\mathbf{K})
= \sigma^2_\epsilon\sigma^2_\delta\mathbf{K}\mathbf{M}^{-1}
\end{equation}
is symmetric since $\mathbf{K}$ and $\mathbf{M}$ commute. Equations \eqref{eqn: fullcondgpvar} and \eqref{eqn: fullcondnoisevar} of the main paper follow from Inverse-Gamma conjugacy.

\subsection{On-the-fly normalization}
\label{ssec: onthefly}

While our Gibbs sampler (Algorithm \ref{alg: gibbs} of the main paper) works without the following on-the-fly normalization adjustment, we find that such an adjustment can improve parameter identifiability and MCMC mixing. On parameter identifiability, note that a function $f$ in the transformed additive form with link $g$ and additive components $\{h_l\}_{l=1}^d$ (see Equation \eqref{eqn: generalized additive model} of the main paper) can equivalently be modeled by a different link $g'$ and additive components $\{h_l'\}_{l=1}^d$, where $h_l' = \gamma_0 + \gamma h_l$ with $\gamma > 0$ and $g'$ is adjusted accordingly. This non-identifiability is not a big issue if the goal is to infer the overall function $f$, but improving parameter identifiability can improve MCMC mixing performance \cite{wang2026mcmc}.

One way to improve identifiability is to add a constraint on basis weights $\{\beta_j^{(l)}\}$, e.g.:
\begin{equation}\label{eqn: beta star sum 1 condition}
    \sum_{l \in \cI} \sum_{j=1}^{N_l} \beta_j^{(l)} + \sum_{l \notin \cI} \underset{j = 1, \cdots, N_l}{\max} \beta_j^{(l)} = 1.
\end{equation}
Here, the use of summed basis weights in the first term vs. the use of maximum basis weights in the second term is due to the nature of basis functions for $l \in \mathcal{I}$ vs. $l \notin \mathcal{I}$. This constraint, which we adopt in our implementation, has the added benefit of ensuring $\sum_{l=1}^d h_l$ remains within $[0,1]$, which is desirable for modeling the monotone link $g$ via \eqref{eqn: monotone link function} of the main paper.

For posterior sampling, we can enforce the constraint \eqref{eqn: beta star sum 1 condition} via a simple ``on-the-fly'' adjustment in our Gibbs sampler. In particular, one first samples the basis indicators and magnitudes for the weights $\{\beta_{j}^{(l)}\}$ using their full conditional distributions, then normalizes these sampled weights by dividing them by:
\begin{equation}\label{eqn: S def} 
S = \sum_{l \in \cI} \sum_{j=1}^{N_l} \beta_j^{(l)} + \sum_{l \notin \cI} \underset{j = 1, \cdots, N_l}{\max} \beta_j^{(l)},
\end{equation}
which ensures the constraint \eqref{eqn: beta star sum 1 condition} holds. Similar on-the-fly adjustments have been used in spatial statistics \cite{rue2005gaussian,mak2016regional} to enforce spatial effect constraints for parameter identifiability. While the distribution sampled with such an on-the-fly adjustment can differ from the posterior distribution under the constraint, this difference is typically minimal in practice (see \cite{paciorek2009technical}).

With the sampled basis weights $\{\beta_j^{(l)}\}$ normalized, it is also desirable to ensure the sampled function $f$ remains unchanged within the current MCMC iteration. To achieve this, after normalizing $\{\beta_j^{(l)}\}$, we stretch the link function $g$ as $g(u) \leftarrow g(Su)$, where $S$ is the normalization factor in \eqref{eqn: S def}. Since $g$ is piecewise linear with $g(m/N_g) = \alpha_0 + \sum_{j \le m} \alpha_j$ satisfying $\alpha_m = g(m/N_g) - g((m-1)/N_g)$, we can adjust the link basis weights $\alpha_m$ as follows: 
\begin{equation}\label{eqn: alpha rescale} 
\alpha_m \leftarrow g\Big( \frac{S m}{N_g} \Big) - g\Big( \frac{S (m-1)}{N_g} \Big), \qquad m = 1, \cdots, N_g.
\end{equation}
This keeps the sampled function $f$ the same at the link function knots $\{S m /N_g\}_{m=1}^{N_g}$ after the on-the-fly normalization of basis weights $\{\beta_j^{(l)}\}$.

\subsection{Computational complexity}\label{ssec: complexity}

We now derive the computational complexity for each iteration of the Gibbs sampler outlined in Algorithm \ref{alg: gibbs} of the main paper.

At the start of each Gibbs iteration, we first compute the cumulative sums $\{G_m\}$ and $\{H^{(l)}_m\}$ in \eqref{eqn: psi step}, which incurs $\cO(N_g + \sum_lN_l)$ work. This enables a single evaluation of $g$ or $h_l$ with $\cO(1)$ work, and a change in one coefficient to propagate to $\mathbf{u}$ and the fitted values with $\cO(n)$ work. %

The Gibbs updates for parameters in the set $\bTheta_g$ require $\cO(nN_g)$ work. The intercept term requires the computation of $\sum_ir_{i0}$ and a residual update, and each $(I_j^{(g)},Z_j^{(g)})$ requires the computation of $\sum_i\psi_j^{(g)}(u_i)r_{ij}$ and $\sum_i\psi_j^{(g)}(u_i)^2$, all of which incur $\cO(n)$ work.

The Gibbs updates for parameters in the set $\bTheta_h$ require $\cO\{n\sum_lN_l + M_{\rm tot}\log n\}$ work, where $M_{lj}$ is the number of breakpoints \eqref{eqn: breakpoints} for given $l$ and $j$, and $M_{\rm tot} = \sum_l\sum_jM_{lj}$. The first term $\cO(n \sum_l N_l)$ is for computing $\{s_i\}$ and $\{w_i\}$, and for updating $\{u_i\}$ for each $1 \le i \le n$. The second term is for sorting the breakpoints $\{p_{i, m}\}$ over all $i$ and $m$, where we know $\{p_{i, m}\} $ over $m$ are sorted for each $i$: this involves merging $n$ sorted lists with length $N_g +1$, which costs $\cO(M_{lj}\log n)$. Summing this over all $l$ and $j$ gives $\cO\{M_{\rm tot}\log n\}$ work. Note that the quadratic coefficients \eqref{eqn: abc coefficients} are updated rather than recomputed, since consecutive intervals differ by a single term, which makes each step $\cO(1)$ instead of $\cO(n)$. 

Note that the total number of breakpoints $M_{\rm tot}$ is $ \cO\{nN_g\sum_lN_l\}$, as $M_{lj} \le n(N_g+1)$ since each $Z_j^{(l)}$ includes at most $n(N_g + 1)$ breakpoints from \eqref{eqn: fitted in Z}. The bound is tight for $l\in\cI$, where $\psi_j^{(l)}$ is supported on $[(j-1)/N_l,1]$ and $\sum_jM_{lj}\asymp nN_gN_l/2$. For $l\notin\cI$, the hat basis has support of width $2/(N_l-1)$, so $s_i\neq0$ for only $\cO(n/N_l)$ indices and $\sum_jM_{lj} = \cO(nN_g)$ irrespective of $N_l$. Therefore, the upper bound $\cO\{nN_g\sum_lN_l\}$ on $M_{\rm tot}$ becomes tight as more inputs are monotone. This, combined with the above analysis, yields a total work of $\cO\{ nN_g(\sum_lN_l)\log n\}$.

Finally, the Gibbs updates for parameters in the set $\bTheta_\delta$, together with $\sigma^2_\epsilon$, require $\cO(n^3)$ work. The noise variance requires only a residual sum of squares calculation at $\cO(n)$ work. Caching the eigendecomposition $\mathbf{K}=\mathbf{U}\Lambda\mathbf{U}^\top$ reduces the draws of $\bdelta$ and $\sigma^2_\delta$ to $\cO(n^2)$ work, since $\mathbf{M}^{-1} = \mathbf{U}(\sigma^2_\delta\Lambda+\sigma^2_\epsilon\mathbf{I})^{-1}\mathbf{U}^\top$ and $\bdelta^\top\mathbf{K}^{-1}\bdelta = \|\Lambda^{-1/2}\mathbf{U}^\top\bdelta\|^2$ reuse the same factors as $\sigma^2_\delta$ varies. The update for $\xi$ requires $\cO(n^3)$ work, since a proposal $\xi'$ changes $\mathbf{K}$ itself and requires a fresh factorization of the covariance matrix.

Combining the work for the three parts above, the computational complexity claimed in Section \ref{ssec: alg} of the main paper then follows.

\subsection{Specification of prior hyperparameters}\label{ssec: data adaptive hyperparameters}

We provide further details here on the specification of the prior hyperparameters $\mu_0$, $\tau^2_g$, $\tau^2_0$, $\tau^2_1, \cdots, \tau^2_d$, which control the location and scale of the link function $g$ and the additive components $h_1, \cdots, h_d$.

Consider first $\mu_0$, $\tau^2_g$ and $\tau^2_0$, which are hyperparameters for the link function $g$. In practice, given limited data for our surrogate modeling applications, we find that a data-dependent prior on such hyperparameters can be effective. Recall that $\mu_0$ and $\tau^2_0$ are the prior mean and variance for the link intercept $\alpha_0$, which models the minimum of the black-box function $f$ ignoring the GP discrepancy term $\delta$. In our implementation, we set $\mu_0= \min_i y_i$ and $\tau_0^2 = \mathrm{Var}(\mathbf{y})/n$, where $\mathbf{y} = (y_1, \cdots, y_n)$. This focuses the prior of $\alpha_0$ on the minimum of the data, with such a prior concentrating as $n$ grows. Recall that $\tau^2_g$ reflects the scale of the link function $g$. In our implementation, we set $\tau^2_g$ to satisfy $E \{ (g(1) - g(0))^2\} = R_y^2$ where $R_y = \max \{\text{range}(\mathbf{y}), \text{sd}(\mathbf{y}), 1\}$, to capture the scale of the observed data.

Consider next $\tau^2_1, \cdots, \tau^2_d$, which are hyperparameters that reflect the scale of the additive components $h_1, \cdots, h_d$. Recall from Section \ref{ssec: alg} of the main paper that these hyperparameters should be set such that $\sum_{l=1}^d h_l$ is concentrated on $[0,1]$ with high probability. Similar to \cite{chipman2010bart} (where a similar soft constraint is needed), we set $\tau_l^2$ to satisfy the second moment condition $\mathbb{E}[\max_x h_l^2(x)] = 1/d$. For $l \in \cI$, since:
\begin{equation}\label{eqn: second moment rule}
\bbE\Big[\Big(\sum_{j=1}^{N_l}\beta_j^{(l)}\Big)^2\Big]
= N_l\cdot\frac{\zeta_l}{N_l}\tau_l^2 + N_l(N_l-1)\frac{\zeta_l^2}{N_l^2}\cdot\frac{2\tau_l^2}{\pi}
= \tau_l^2\Big\{\zeta_l + \frac{2(N_l-1)\zeta_l^2}{\pi N_l}\Big\}, 
\end{equation}
this means $\tau_l^2$ can be set as:
\begin{equation}\label{eqn: tau additive monotone}
\tau_l^2 = \Big[ d \Big\{ \zeta_l + \tfrac{2(N_l-1)\zeta_l^2}{\pi N_l} \Big\} \Big]^{-1} \quad \text{for $l \in \mathcal{I}$}.
\end{equation}
For $l \notin \cI$, $h_l(x)$ is a convex combination of at most two adjacent coefficients. Thus, we use a simple coefficient-wise calibration $\bbE\{(\beta_j^{(l)})^2\}=1/d$, which gives $\tau_l^2=1/d$. With this specification, we find that the fitted $\sum_{l=1}^d h_l$ concentrates on $[0,1]$ in numerical experiments, as desired.

\end{document}